  \providecommand\BibTeX{{%
    \normalfont B\kern-0.5em{\scshape i\kern-0.25em b}\kern-0.8em\TeX}}}
\definecolor{mygreen}{rgb}{0,0.6,0}
\lstdefinelanguage{links}{
  language=C,
  basicstyle={\linespread{0.8}\ttfamily},
  comment=[l]{\#\ },%
  escapeinside={(*}{*)},%
  tabsize=3,
  commentstyle=\color{mygreen},
  morekeywords={lens,select,join,drop,get,put,table,var,fun,database,where,with,tablekeys,from,determined,by,on,insert,into,values,typename,check,
  sig, spawn, spawnWait, redirect, receive, delete_left},
}
\newtheorem{lemma}{Lemma}
\newtheorem{theorem}{Theorem}
\newtheorem{defin}{Definition}
\newcounter{ProofCounter}
\newcommand*{\proofContext}[1]{\setcounter{equation}{0}\def\currentprefix{proof:#1}}
\newcommand*{\newProofContext}{\stepcounter{ProofCounter}\proofContext{num:\theProofCounter}}
\newcommand*{\loclabel}[1]{\label{\currentprefix:#1}}
\newcommand*{\locref}[1]{\ref{\currentprefix:#1}}
\renewcommand{\operatorname}[1]{\mathsf{#1}}
\tikzstyle{na} = [shape=rectangle, inner sep=0pt]
\tikzstyle{every picture} += [remember picture]
\tikzstyle{strike} = [color=red,thick]
\newcommand{\strikeStart}[1]{%
  \tikz[baseline=(begin.base)]\node[na](begin){#1};
}
\newcommand{\strikeEnd}[1]{%
  \tikz[baseline=(end.base)]\node[na](end){#1};
  \begin{tikzpicture}[overlay]
    \draw[strike] (begin.185) -- (end.5);
  \end{tikzpicture}
}
\newcommand{\strikeWord}[1]{%
  \tikz[baseline=(begin.base)]\node[na](begin){#1};
  \begin{tikzpicture}[overlay]
    \draw[strike] (begin.south west) -- (begin.north east);
  \end{tikzpicture}
}
\newcommand{\pipe}{\ |\ }
\newcommand{\extdef}{::=\ }
\newcommand{\correct}[2]{\strikeWord{#1}~\textcolor{red}{#2}}
\newcommand{\code}[1]{\lstinline{#1}}
\newcommand{\mkwd}[1]{\ensuremath{\mathsf{#1}}}
\newcommand{\op}[1]{\mkwd{#1}}
\newcommand{\kw}[1]{\textbf{#1}}
\newcommand{\evalsto}{\Downarrow}
\newcommand{\setfdsa}{\mathscr{F}}
\newcommand{\setfdsb}{\mathscr{G}}
 \newcommand{\oftype}[2]{#1 : #2}
\newcommand{\tyto}[3]{#1 \vdash #2 {\, : \,} #3} 
  \newcommand{\Tint}{\kw{int}} \newcommand{\Tbool}{\kw{bool}}
\newcommand{\Tstring}{\kw{string}} \newcommand{\Trecord}[1]{(#1)}
\newcommand{\Tunit}{\Trecord{}}
\newcommand{\vtrue}{\kw{true}} \newcommand{\vfalse}{\kw{false}}
 \newcommand{\set}[1]{\{ #1 \}}
\newcommand{\outputs}[1]{\op{outputs}(#1)}
 \newcommand{\ignoresoutputs}[2]{#1 \text{
    ignores } \outputs #2}
\newcommand{\domain}[1]{\mkwd{dom}(#1)}
\newcommand{\aliases}[1]{\op{dom}({#1})}
\newcommand{\recordproj}[2]{#1{.}#2}
\newcommand{\tyctx}{\Gamma} %
\newcommand{\typpa}{A}  %
\newcommand{\consta}{c} %
\newcommand{\laba}{\ell} \newcommand{\labb}{\ell'} \newcommand{\labc}{\ell''} %
\newcommand{\slaba}{\seq{\laba}} \newcommand{\slabb}{\seq{\labb}} \newcommand{\slabc}{\seq{\labc}} %
\newcommand{\rowa}{r} \newcommand{\rowb}{s} 
\newcommand{\expra}{M} \newcommand{\exprb}{N}
\newcommand{\preda}{P} \newcommand{\predb}{Q} %
\newcommand{\fdsa}{F} \newcommand{\fdsb}{G} %
\newcommand{\trowa}{R} \newcommand{\trowb}{R'} %
\newcommand{\rela}{S} \newcommand{\relb}{T} %
\newcommand{\schemaa}{\Sigma} \newcommand{\schemab}{\Delta} %
\newcommand{\lensa}{L} \newcommand{\lensb}{L'} %
\newcommand{\vala}{V}  %
\newcommand{\genop}{\odot} %
\newcommand{\tablee}[2]{\kw{table } #1 \kw{ with } #2}
\newcommand{\lens}[2]{\kw{lens } #1 \kw{ with } #2}
\newcommand{\lensselect}[2]{\kw{select}_{\lambda x.~#2} \kw{ from } #1}
\newcommand{\lensdrop}[4]{\kw{drop } #1 \kw{ determined by } (#2, #3) \kw{ from } #4}
\newcommand{\lensjoindl}[2]{\kw{join } #1 \kw{ with } #2 \kw{ delete\_left}}
\newcommand{\lensget}[1]{\kw{get } #1}
\newcommand{\lensput}[2]{\kw{put } #1 \kw{ with } #2}
\newcommand{\nodes}[1]{\op{names}(#1)}
\newcommand{\ruleref}[1]{\textsc{#1}}
\newcommand{\ifelse}[3]{\kw{if } #1 \kw{ then } #2 \kw{ else } #3}
\newcommand{\linksproj}[2]{#1{.}#2}
\let\oldcite\cite
\def\nobreakbefore{%
  \relax\ifvmode\else
    \ifhmode
      \ifdim\lastskip > 0pt\relax
        \unskip\nobreakspace
      \fi
    \fi
  \fi
}
\renewcommand\cite{\nobreakbefore\oldcite}
\newcommand{\sjf}[1]{{\noindent\small\color{blue} \framebox{\parbox{\dimexpr\linewidth-2\fboxsep-2\fboxrule}{\textbf{SJF:} #1}}}}
\newcommand{\rh}[1]{{\noindent\small\color{orange} \framebox{\parbox{\dimexpr\linewidth-2\fboxsep-2\fboxrule}{\textbf{RH:} #1}}}}
\newcommand{\jrc}[1]{\begin{center}{\noindent\small\color{red} \framebox{\parbox{\dimexpr\linewidth-2\fboxsep-2\fboxrule}{\textbf{JRC:} #1}}}\end{center}}
\def\disablecomments{}
  \renewcommand{\sjf}[1]{}
  \renewcommand{\rh}[1]{}
  \renewcommand{\jrc}[1]{}
\newcommand{\totheleft}[1]{\begin{flushleft}#1\end{flushleft}}
\newcommand{\app}{\:}
\newcommand{\seq}[1]{\overrightarrow{#1}}
\newcommand{\midspace}{\: \pipe \:}
\newcommand{\secref}[1]{\S\ref{#1}}
\newcommand{\secrefp}[1]{(\secref{#1})}
\newcommand{\tablety}[2][\rela]{\kw{table of} \: (#1,#2)}
\newcommand{\lensty}[4][\schemaa]{\kw{lens of} \: (#1, #2, \lambda x.~#3, #4)}
\newcommand{\recordsetty}[1]{\kw{record set of} \: #1}
\newcommand{\bpvsort}[1]{\mkwd{sort}(#1)}
\newcommand{\lensid}{\kw{id}}%
\newcommand{\ltrans}[1]{\llparenthesis #1 \rrparenthesis}
\newcommand{\ltriple}[3][\schemaa]{#1 / #2 / #3}
\newcommand{\rlsort}[1]{\operatorname{sort} \left( #1 \right)}
\newcommand{\rldisjoint}[2]{\Sigma \uplus {#1} \Leftrightarrow \Sigma \uplus {#2}}
\newcommand{\Pset}[2]{{\operatorname{set}(\lambda x.~{#1}, #2)}}
\newcommand{\guid}[1]{#1 \text{ is globally unique}}
\newcommand{\subst}[3][x]{#2 [#3/#1]}
\newcommand{\ljd}[3]{\textbf{LJD}_{#2,#3}~ ({#1})}
\newcommand{\ljdi}[3]{\textbf{LJD}_{#2,#3}^\dagger~ ({#1})}
\newcommand{\defv}[3][\trowa,\trowb]{\textbf{DV}_{#1}~ ({#2})~{#3}}
\newcommand{\defvi}[2]{\textbf{DV}_{\trowa,\trowb}^\dagger~ ({#1})~{#2}}
\newcommand{\inh}[1]{\operatorname{inh}({#1})}
\newcommand{\rowapp}{\otimes}
\newcommand{\rowtyapp}{\oplus}
\newcommand{\sat}[2][\preda]{\operatorname{sat}(\lambda x.~{#1}, {#2})}
\newcommand{\norm}{\rightsquigarrow}
\newcommand{\opargs}[1]{\genop \{ #1 \}}
\newcommand{\opargshat}[1]{\hat{\genop} \{ #1 \}}
\newenvironment{fake}[1]{\par\vspace{3pt}\noindent\textbf{#1}\itshape}{\normalfont\ignorespacesafterend\vspace{3pt}\par}
\newenvironment{proofcase}[1]
  {\totheleft{\textbf{Case } #1}}
  {}
\newcommand{\bpvlensa}{I}
\newcommand{\bpvlensb}{J}
\newcommand{\setpreda}{\Pi}
\newcommand{\setpredb}{\Pi'}
\newcommand{\bpvselect}[3]{\texttt{select from } #1 \texttt{ where } #2 \texttt{ as } #3 }
\newcommand{\bpvdrop}[5]
  {\texttt{drop } #1 \texttt{ determined by } (#2, #3) \texttt{ from } #4
  \texttt{ as } #5}
\newcommand{\bpvjoin}[3]{\texttt{join\_dl } #1, #2 \texttt{ as } #3}
\newcommand{\bpvcompose}[2]{#1; #2}
\newcommand{\lamf}[2][x]{\lambda #1.~#2}
\newcommand{\lampreda}[1][x]{\lamf[#1]{\preda}}
\newcommand{\lampredb}[1][x]{\lamf[#1]{\predb}}
\newif\ifpublished
\renewcommand\footnotetextcopyrightpermission[1]{}
\begin{document}

\ifpublished
\title{Language-Integrated Updatable Views}
\else
\title{Language-Integrated Updatable Views}
\subtitle{Extended version}
\fi
\author{Rudi Horn}
\orcid{0000-0001-6164-2208}
\affiliation{
  \institution{University of Edinburgh}            %
  \country{United Kingdom}                    %
}
\email{r.horn@ed.ac.uk}

\author{Simon Fowler}
\orcid{0000-0001-5143-5475}
\affiliation{
  \institution{University of Edinburgh}
  \country{United Kingdom}
}
\email{simon.fowler@ed.ac.uk}

\author{James Cheney}
\orcid{0000-0002-1307-9286}
\affiliation{
  \institution{University of Edinburgh \\ The Alan Turing Institute} %
  \country{United Kingdom}                    %
}
\email{jcheney@inf.ed.ac.uk}

\begin{abstract}
  \emph{Relational lenses} are a modern approach to the \emph{view update}
  problem in relational databases. As introduced
  by~\citet{bohannon2006relational}, relational lenses allow the definition of
  updatable views by the composition of lenses performing individual
  transformations.
  \citet{horn2018incremental} provided the first implementation of
  \emph{incremental relational lenses}, which demonstrated that
  relational lenses can be implemented efficiently by propagating
  \emph{changes} to the database rather than replacing the entire
  database state.

  However, neither approach proposes a concrete language design; consequently,
  it is unclear how to integrate lenses into a general-purpose programming
  language, or how to check that lenses satisfy the well-formedness conditions
  needed for predictable behaviour.
  In this paper, we propose the first full account of relational lenses in a
  functional programming language, by extending the Links web programming
  language. We provide support for higher-order predicates, and provide the
  first account of typechecking relational lenses which is amenable to
  implementation. We prove the soundness of our typing rules, and illustrate our
  approach by implementing a curation interface for a scientific database
  application.
  \end{abstract}

\maketitle

\section{Introduction}
\label{sec:introduction}
Relational databases are considered the \emph{de facto} standard for storing
data persistently, offering a ready-to-use method
for storing and retrieving data efficiently in a broad range of contexts.
\begin{figure}
    \texttt{albums} \vspace{0.75em}\\
  \begin{tabular}{|c|c|}\hline
    \texttt{album} & \texttt{quantity} \\ \hline
    Disintegration & 6 \\ \hline
    Show & 3 \\ \hline
    Galore & 1 \\ \hline
    Paris & 4 \\ \hline
    Wish & 5 \\ \hline
  \end{tabular}
  \vspace{1em} \\
  \texttt{tracks}\vspace{0.75em}\\
\begin{tabular}{|c|c|c|c|}\hline
  \texttt{track} & \texttt{year} & \texttt{rating} & \texttt{album} \\ \hline
  Lullaby & 1989 & 3 & Galore \\ \hline
  Lullaby & 1989 & 3 & Show \\ \hline
  Lovesong & 1989 & 5 & Galore  \\ \hline
  Lovesong & 1989 & 5 & Paris \\ \hline
  Trust & 1992 & 4 & Wish \\ \hline
\end{tabular}
  \caption{Music Database}
  \label{fig:music-db-example}
\end{figure}

Programs interface with relational databases using the \emph{Structured Query
Language} (SQL).  To query the database, the host application needs to generate
an SQL query from user input, issue it to the database server, and then process
the result in a way that aligns with the result of the query.

As an example, we consider a music database, originally proposed by~\citet{bohannon2006relational}
and shown in Figure~\ref{fig:music-db-example}. There are two tables: the
\texttt{albums} table, which details the quantities of albums available, and the
\texttt{tracks} table, which details the track name, year of release, rating, and
the album on which the track is contained.
Our application could generate a query by using string concatenation and then
assume the result will be in a known format containing records of track names of
type $\Tstring$ and years of type $\Tint$.

However, such an approach leaves many possible sources of error,
most of which are related to a lack of cross-checking of the different stages of
execution. The application could have bugs in query generation, which might
result in incorrect queries or even security flaws.
Furthermore, a generated query may not produce a result of the type that the
application expects, resulting in a runtime error.
The user experience of the programmer is also poor, as tooling provides little
help and the programmer must write code in two different languages, while being
mindful not to introduce any bugs in the application. We refer to this as an
\emph{impedance mismatch} between the host programming language and SQL
\cite{copeland1984making}.

Existing work on \emph{language integrated query} (LINQ) allows queries to be
expressed in the host language \cite{wong2000kleisli, cooper2006links}. Rather
than generating an SQL query using string manipulation, the query is written in
the same syntax as the host programming language. The user need not worry about
how the query is generated, and the code that performs the database query is
automatically type-checked at compile time.

As an example of LINQ, consider the following function, written in the
Links~\cite{cooper2006links}
programming language, which queries the \texttt{albums}
table and returns all albums with a given album name:

\begin{lstlisting}
fun getAlbumsByName(albumName) {
  for (a <-- albums)
    where (a.album == albumName)
    [a]
}
\end{lstlisting}%
The corresponding SQL for \lstinline+getAlbumsByName("Galore")+ would be:
\begin{center}
  \begin{tabular}{c}
    \begin{lstlisting}[language=SQL]
SELECT * FROM albums AS a WHERE a.album == "Galore"
    \end{lstlisting}
  \end{tabular}
\end{center}

LINQ approaches are convenient for querying databases, but still take a
relatively fine-grained approach to data manipulation (updates).  The programmer is
required to explicitly determine which changes were made at the application
level.  All modifications made by the user must then be translated into
equivalent insertions, updates and deletions for each table. In contrast, a
typical user workflow consists of fetching a subset of the database, called
a \emph{view}, making changes to this view, and then propagating the changes to the
database.
Defining views that can be updated directly is known as the \emph{view-update
problem}, a long-standing area of study in the field of databases
\cite{bancilhon1981update}.

\paragraph{Relational Lenses.}
A recent approach to the view update problem is to define views using composable
\emph{relational lenses} \cite{bohannon2006relational}. Lenses are a form of
\emph{bidirectional transformation}~\cite{FosterGMPS07:lenses}. With relational
lenses, instead of defining the view using a general SQL query, the programmer
defines the view by combining individual lenses, which are known to behave in a
correct manner. ~\citet{bohannon2006relational} define lenses for relational
algebra operations, in particular, projections, selections and joins.
Figure~\ref{fig:relational-lenses} shows the composable nature of relational
lenses.

A relational lens can be considered a form of
\emph{asymmetric} lens, in which we have a forward (\emph{get}) direction to fetch
the data, and a reverse (\emph{put}) direction to make updates
~\cite{hofmann2011symmetric}. A bidirectional transformation is
\emph{well-behaved} if it satisfies round-tripping guarantees:
\[
  \begin{array}{ll@{\qquad\:}ll}
    \ruleref{GetPut} & \op{put}\ s\ (\op{get}\ s) = s &
  \ruleref{PutGet} & \op{get}\ (\op{put}\ s\ v) = v
\end{array}
\]%
Relational lenses are equipped with typing rules which ensure operations on
lenses are well-behaved.  The type system for relational lenses tracks the
attribute types of the defined view as well as constraints, including
\emph{predicates} and \emph{functional dependencies}, which are not easily
expressible in an ML-like type system.

\begin{figure}

\def\arrshift{0.2}
\def\arrshiftup{(0,\arrshift)}
\def\arrshiftdn{(0,-\arrshift)}
\def\rowheight{0.18}
\def\rowwidth{1.2}
\def\rows{4}
\def\cols{3}

\begin{tikzpicture}
  [database/.style={cylinder, draw=yellow, cylinder uses custom fill,
    cylinder end fill=yellow!80, cylinder body fill=yellow!50, shape aspect=.3,
    shape border rotate=90, minimum height=2cm},
  lens/.style={ellipse, draw=blue!50, fill=blue!15, thick, minimum height=2cm},
  lArr/.style={-{Triangle[length=8pt,width=12pt]}, line width=5pt,shorten <=2, draw=black!60!green!60},
  thd/.style={fill=blue!20,width=4pt,height=1pt},
  font=\sffamily]

  \node[database] (db) {Database};
  \begin{scope}
    \node[lens, right=of db] (l1) {$l_1$};
    \node[lens, right=of l1] (l2) {$l_2$};
    \node[lens, right=of l2] (l3) {$l_3$};
  \end{scope}
  \draw[lArr] ($(db.0)+(0,\arrshift)$) -- ($(l1.180)+(0,\arrshift)$) node[midway, above] {get};
  \draw[lArr] ($(l1.180)-(0,\arrshift)$) -- ($(db.0)-(0,\arrshift)$) node[midway, below] {put};

  \draw[lArr] ($(l1.0)+(0,\arrshift)$) -- ($(l2.180)+(0,\arrshift)$) node[midway, above] {get};
  \draw[lArr] ($(l2.180)-(0,\arrshift)$) -- ($(l1.0)-(0,\arrshift)$) node[midway, below] {put};

  \draw[lArr] ($(l2.0)+(0,\arrshift)$) -- ($(l3.180)+(0,\arrshift)$) node[midway, above] {get};
  \draw[lArr] ($(l3.180)-(0,\arrshift)$) -- ($(l2.0)-(0,\arrshift)$) node[midway, below] {put};

  \coordinate[right=of l3] (tbl) {};

  \begin{scope}[shift={($(tbl)-(0,{(\rows+1)*\rowheight/2})$)}, line/.style={color=white, thick}, local bounding box=tbl]
    \foreach \x in {0,...,\rows} {
      \ifodd \x
        \def\fopac{0.2}
      \else 
        \def\fopac{0.4}
      \fi
      \ifnum \x=\rows
        \def\fopac{0.8}
      \fi
      \fill[color=blue,opacity=\fopac] (0,\x*\rowheight) rectangle (\rowwidth,{(\x+1)*\rowheight});
    }
    \draw[line] (0,\rows*\rowheight) -- (\rowwidth,\rows*\rowheight);
    \foreach \x in {0,...,\cols} {
      \draw[line] ({\x*(\rowwidth/\cols)},0) -- ({\x*(\rowwidth/\cols)},{(\rows+1)*\rowheight});
    }
  \end{scope}

  \draw[lArr] ($(l3.0)+(0,\arrshift)$) -- ($(tbl.180)+(0,\arrshift)$) node[midway, above] {get};
  \draw[lArr] ($(tbl.180)-(0,\arrshift)$) -- ($(l3.0)-(0,\arrshift)$) node[midway, below] {put};
\end{tikzpicture}

   \caption{Relational Lenses}
  \label{fig:relational-lenses}
\end{figure}

\paragraph{From theory to practice.}
The theory of relational lenses was developed over a decade ago
by~\citet{bohannon2006relational}, but until recently there has been little work
on practical implementations. \citet{horn2018incremental} recently presented the first
implementation using an incremental semantics.
However~\citet{horn2018incremental} focus on performance rather than language
integration, leaving two issues unresolved:
\begin{itemize}
\item How to integrate relational lenses, which are defined as a sequential
  composition of primitives, into a functional language, where lenses are
  composed using lens subexpressions.
\item How to define and verify the correctness of a concrete \emph{selection
    predicate} syntax for relational lenses.
\end{itemize}

\paragraph{Predicates.}
Some of the relational lens constructors, such as the \emph{select} lens, require user
supplied functions for filtering rows. Such functions, called \emph{predicates},
determine whether or not an individual record should be included. Predicates are
a function of type $\trowa \to \Tbool$ where $\trowa$ is the type of the
input record and a return value of $\vtrue$ indicates that the predicate holds.

\citet{bohannon2006relational} treat predicates as abstract (finite or infinite)
sets, without giving a computational syntax. Sets allow predicates to be defined
in an abstract form while still being amenable to mathematical reasoning, but
such an approach does not scale to a practical implementation. In practice the
user should define a predicate as a function from a record (in this case
containing \lstinline+album+ and \lstinline+year+ fields) to a Boolean value:
\begin{lstlisting}
fun(x) { x.album = "Galore" && x.year == 1989 }
\end{lstlisting}
Some of the lens typing rules require static checks on predicates. The above
predicate contains only \emph{static} information, and is thus
a \emph{closed function} which can be checked at compile-time. We call such
predicates \emph{static} predicates.
Alas, such checks become problematic when the programmer would like to define a
function which depends on information only available at runtime, such as a
parameter in a web request. For example, consider the following function which
adapts the \lstinline+getAlbumsByName+ function to use relational lenses.
\begin{lstlisting}
fun getAlbumsByNameL(albumName) {
  var albumLens = lens albums where album -> quantity;
  var selectLens = select from albumsLens where
    fun(a) { a.album == albumName };
  get selectLens;
}
\end{lstlisting}

The \lstinline+getAlbumsByNameL+ function begins by defining
\lstinline+albumLens+ as a lens over the \lstinline+albums+ table. A
\emph{functional dependency} ${\scriptstyle \slaba \to \slabb}$ states that the
columns in ${\scriptstyle \slabb}$ are \emph{uniquely determined} by
${\scriptstyle \slaba}$; here, the \lstinline+album -> quantity+ clause states
that the \lstinline+quantity+ attribute is uniquely determined by the
\lstinline+album+ attribute.

As \lstinline+albumsName+ is supplied as a parameter to the
\lstinline+getAlbumsByNameL+ function, the anonymous predicate supplied to
\lstinline+select+ can only be completely known at runtime. We call such
predicates \emph{dynamic} predicates. Dynamic predicates are not closed, which
means that variables in the closure of a dynamic predicate may not be available
until runtime, and may themselves refer to functions.
While it is possible to statically know the \emph{type} of the function, and
thus rule out a class of errors, relational lenses require finer-grained checks
which require a more in-depth analysis of the predicate. As an example, a select
lens is only well-formed if the predicate does not rely on the output of a
functional dependency.

If we required the function to be fully known at compile time, a
programmer could not define predicates that depend on user input.
Thus, there is a tradeoff between static correctness and programming flexibility.
In our design, we can perform checks on lenses using static predicates at
compile-time, and we can also support dynamic predicates by performing the
same checks at runtime.

Another obstacle is the handling of functional dependencies, which are
an important part of the type system for relational lenses. Functional
dependencies are constraints that apply to the data, and specify which
fields in a table
uniquely determine other fields.

The typing rules given by~\citet{bohannon2006relational} are important for
showing soundness of relational lenses: without ensuring all the requirements
are met, it is not possible to ensure the lenses are well-behaved. We take the
existing work by~\citet{bohannon2006relational} and concretise and adapt
the design to allow the rules to be implemented in practice.

\subsection{Contributions}
The primary technical contribution of this paper is the first full
design and implementation of relational lenses in a typed functional programming
language, namely Links~\cite{cooper2006links}.
This paper makes three concrete contributions:

\begin{enumerate}
  \item A design and implementation of \emph{predicates} for relational lenses,
    based on previous approaches to language-integrated query.  We define a
    language of predicates, and show how terms can be normalised to a fragment
    both amenable to typechecking of relational lenses, and translation to SQL.
  \item An implementation of the typing rules for relational lenses, adapted to
    the setting of a functional programming language \secrefp{sec:impl:tc}.
    We prove~\secrefp{sec:impl:tc:correctness} that our compositional typing
    rules are sound with respect to the original rules proposed
    by~\citet{bohannon2006relational}.
    Static predicates can be fully checked at compile time, whereas the same
    checks can be performed on dynamic predicates at
    runtime.%
  \item A curation interface for a real-world scientific database implemented as
    a cross-tier web application, tying together relational lenses with the
    Model-View-Update architecture for frontend web
    development~\secrefp{sec:case-study}.
\end{enumerate}

We have packaged our implementation and example application as an
artifact~\cite{HornFC20:artifact}.
\begin{published}
Proofs of the technical results can be found in the extended version of the
paper~\cite{HornFC20:extended}.
\end{published}

The remainder of the paper proceeds as follows:~\secref{sec:impl:predicates}
describes the design and implementation of predicates;~\secref{sec:impl:tc}
describes the implementation of static typechecking for relational
lenses;~\secref{sec:case-study} describes the case study;~\secref{sec:related}
describes related work; and~\secref{sec:conclusion} concludes.

\section{Predicates}\label{sec:impl:predicates}
In their original proposal for relational lenses,~\citet{bohannon2006relational}
define predicates using abstract sets. Although theoretically convenient, such a
representation is not suited to implementation in a programming language.
Our first task in implementing relational lenses, therefore, is to define a
concrete syntax for predicates.

As we are working in the setting of a functional programming language, it is
natural to treat predicates as  functions from  records to  Boolean values.
As an example, recall our earlier example of the \lstinline+select+ lens,
which selects albums with a given name:
\begin{lstlisting}
select from albumsLens where fun(a) { a.album == albumName }
\end{lstlisting}
Here, the predicate function is
\lstinline+fun(a) {a.album == albumName}+.
Intuitively, this predicate includes a record \lstinline+a+ in the set of results if its
\lstinline+album+ field matches \lstinline+albumName+.

In our approach, predicates are
a well-behaved subset of Links functions which take a parameter of the type of
row on which the lens operates. We define a simply-typed $\lambda$-calculus for
predicates, and apply the normalisation approach advocated
by~\citet{Cooper09:linq-norm} to derive a form which is both amenable to SQL
translation, and can be used when typechecking lens construction.

\subsection{Static and Dynamic Predicates}

Ensuring relational lenses are well-typed requires some conditions that require
static knowledge of predicates. As an example, we require that
the predicate of a \lstinline+select+ lens does not refer to the outputs of the
functional dependencies of a table; we describe the conditions more in detail in
Section~\ref{sec:impl:tc}.

\sloppypar
Our approach distinguishes two types of predicates: \emph{static} predicates,
which rely on only static information; and \emph{dynamic} predicates, which
can refer to arbitrary free variables.
Referring to our previous example,
\lstinline+fun(a) { a.album == albumName }+ is a dynamic predicate, as
\lstinline+albumName+ is a free variable, whereas \lstinline+fun(a) { a.album == "Paris" }+
is a static predicate.

We can check the construction of lenses with static predicates entirely
statically, whereas lenses with dynamic predicates require the same checks to be
performed dynamically. Our formal results are based on static predicates,
however the same results apply for dynamic predicates (which can be treated as
closed at runtime).

\subsection{Predicate Language}

\begin{figure}
  ~\totheleft{Syntax}
\[
  \begin{array}{lrcl}
    \text{Types} & A, B, C & \extdef & A \to B \midspace \Trecord{\seq{\laba :
    A}} \midspace D \\
    \text{Base types} & D & \extdef &  \Tbool \midspace \Tint \midspace \Tstring \\
    \text{Base record types} & \trowa & \extdef & \Trecord{\seq{\laba : D}} \\ \\
    \text{Labels} & \laba \\
    \text{Terms}  & L, M, N & \extdef & x \midspace \consta \midspace \lambda
    x.~M \midspace M \app N \\
     & & \midspace & \Trecord{\seq{\ell = M}} \midspace \linksproj{M}{\laba} \\
     & & \midspace & \ifelse{L}{M}{N} \\
     & & \midspace & \opargs{\seq{M}} \\
  \end{array}
\]

~\totheleft{Typing rules}

\begin{mathpar}
  \inferrule
  [T-Var]
  { x : A \in \Gamma }
  { \Gamma \vdash x : A }

  \inferrule
  [T-Const]
  { c \text{ of type } A }
  { \Gamma \vdash c : A }

  \inferrule
  [T-Abs]
  { \Gamma, x : A \vdash M : B }
  { \Gamma \vdash \lambda x . M : A \to B}

  \inferrule
  [T-App]
  { \Gamma \vdash M : A \to B \\ \Gamma \vdash N : A }
  { \Gamma \vdash M \app N : B }

  \inferrule
  [T-Record]
  { (\Gamma \vdash M_i : A_i )_i \\\\
  \text{ for each } M_i : A_i \in \seq{M : A} }
  { \Gamma \vdash \Trecord{\seq{\ell = M}} : \Trecord{\seq{\ell : A}} }

 \inferrule
  [T-Project]
  { \Gamma \vdash M : \Trecord{\laba_i : A_i}_{i \in I} \\ j \in I  }
  { \Gamma \vdash \recordproj{M}{\laba_j} : A_j }

  \inferrule
  [T-If]
  { \Gamma \vdash L : \Tbool \\\\ \Gamma \vdash M : A \\ \Gamma \vdash N : A }
  { \Gamma \vdash \ifelse{L}{M}{N} : A }

  \inferrule
  [T-Op]
  { \genop : D_1 \times \ldots \times D_n \to D \\
    (\Gamma \vdash M_i : D_i)_{i \in 1..n}
  }
  { \Gamma \vdash \opargs{\seq{M}} : D}
\end{mathpar}

\caption{Syntax and typing rules for predicate language}
\label{fig:syntax}
\end{figure}

\paragraph{Syntax.}
Figure~\ref{fig:syntax} shows the syntax of the predicate language.
Types, ranged over by $A, B, C$, include function types $A \to B$; record types
${\scriptstyle \Trecord{\seq{\laba ~ : ~ A}}}$ mapping labels $\ell$ to values of type $A$; and
base types $D$, ranging over the types of Boolean values, strings, and
integers.
It is convenient to let $\trowa$ range over records whose fields are of base
type. The unit type $\Tunit$ is definable as a record with no fields.

Terms, ranged over by $L, M, N$, are those of the simply-typed
$\lambda$-calculus extended with base types, records, conditional statements,
and $n$-ary operators on base types ${\scriptstyle \opargs{\seq{M}}}$. We assume that the set
of available operators all have an SQL equivalent and assume the existence of at
least the comparison operators $<, >, ==$ and Boolean negation, conjunction, and
disjunction. We sometimes find it convenient to use infix notation for
binary operators.

\paragraph{Typing.}
Most typing rules are standard for the simply-typed $\lambda$-calculus extended
with records.  The only non-standard rule is \textsc{T-Op}, which states that
the arguments to an operator must be of base type and match the type of the
operator.

\paragraph{Normalisation.}

\begin{figure}
\totheleft{Normal forms}
\[
  \begin{array}{rcl}
    O & ::= & x \midspace c \midspace \lambda x . O \midspace (\seq{\ell = O}) \midspace x.\ell \\
      & \midspace & \ifelse{O_1}{O_2}{O_3} \midspace \opargs{\seq{O}} \\
    P, Q & ::= & \ifelse{P_1}{P_2}{P_3}
      \midspace \opargs{\seq{P}} \midspace x.\ell \midspace c
  \end{array}
\]

~Normalisation \hfill \framebox{$M : A \norm N$}
\[
  \begin{array}{rcl}
    (\lambda x. N) \app M : A & \norm & N [ M / x ]  \\
    \Trecord{\seq{\ell = M}}.\ell : A & \norm & M_\ell  \\
    \ifelse{\vtrue}{L}{M} : A & \norm & L \\
    \ifelse{\vfalse}{L}{M} : A & \norm & M \\
    (\ifelse{L}{M}{M'}) \app N : A & \norm & \ifelse{L}{M \app N}{M' \app N} \\
      \ifelse{L}{M}{M'} : (\seq{\ell : A}) & \norm & (\seq{\ell = N}) \\
                                         & & \text{with } N_\ell = \\
                                         & & \quad \ifelse{L}{M.\ell}{M'.\ell} \\
                                         & & \text{ for each } \ell \in \seq{\ell}
  \end{array}
\]

~Evaluation \hfill \framebox{$M \evalsto V$}
\[
  \begin{array}{lrcl}
  \text{Values} & V & ::= & c \midspace \lambda x . M \midspace \Trecord{\seq{\ell = V}} \\
  \end{array}
\]
\begin{mathpar}
  \inferrule
  { }
  { V \evalsto V }

  \inferrule
  { L \evalsto \lambda x . N \\\\  M \evalsto V \\ N [ V / x ] \evalsto W }
  { L \app M \evalsto W }

  \inferrule
  { (M_i \evalsto V_i)_i }
  { \Trecord{\seq{\ell = M}} \evalsto \Trecord{\seq{\ell = V}} }

  \inferrule
  { M \evalsto \Trecord{(\ell_i = V_i)_{i \in I}} \\ j \in I }
  { M.\ell_j \evalsto V_j }

  \inferrule
  { L \evalsto \vtrue \\ M \evalsto V }
  { \ifelse{L}{M}{N} \evalsto V }

  \inferrule
  { L \evalsto \vfalse \\ N \evalsto V }
  { \ifelse{L}{M}{N} \evalsto V }

  \inferrule
  { (M_i \evalsto V_i)_i }
  { \opargs{\seq{M}} \evalsto \opargshat{\seq{V}} }
\end{mathpar}

\caption{Normalisation and Evaluation}
\label{fig:normal-forms}
\end{figure}

Given a functional language for predicates, we wish to show that predicates can
be normalised to a fragment easily translatable to SQL and usable when
typechecking lenses.
Figure~\ref{fig:normal-forms} introduces normal forms $O$ which include
variables, constants, $\lambda$-abstractions, records whose fields are all
values, record projection from a variable, conditional expressions whose
subterms are all in normal form, and operations whose arguments are all in
normal form.
Terms in \emph{predicate normal form}, ranged over by $P$, are a restriction of
terms in normal forms. Terms in predicate normal form have a
straightforward SQL equivalent, and can be used when typechecking lenses.

Normalisation rules $M \norm N$ are a subset of the rules proposed
by~\citet{Cooper09:linq-norm}: the first four rules are standard
$\beta$-reduction rules; the fifth pushes function application inside branches
of a conditional; and the sixth pushes conditional expressions inside each
component of a record. Normalisation rules can be applied anywhere in a
term, so we do not require congruence rules.

The rewrite system is strongly normalising.

\begin{proposition}[Strong normalisation]
If $\Gamma \vdash M : A$, then there are no infinite $\norm$ sequences from $M$.
\end{proposition}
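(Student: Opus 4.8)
The plan is to prove strong normalisation by Tait's reducibility method, with the work concentrated in the two commuting-conversion rules for conditionals (the fifth and sixth rules), which are the only ones that do not decrease an obvious structural measure. Since the predicate language is simply typed, I would define by induction on the type $A$ a set $\mathsf{Red}_A$ of well-typed, type-$A$ terms: at a base type $D$, $\mathsf{Red}_D = \{\text{strongly normalising terms of type } D\}$; at a function type, $\mathsf{Red}_{A \to B} = \{ M : A \to B \mid M \app N \in \mathsf{Red}_B \text{ for all } N \in \mathsf{Red}_A\}$; and at a record type, $\mathsf{Red}_{\Trecord{\seq{\ell : A}}} = \{ M \mid M.\ell_j \in \mathsf{Red}_{A_j} \text{ for each field } \ell_j\}$. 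Calling a term \emph{neutral} when it is a variable, an application, a projection, or an operation --- i.e.\ anything that is not a $\lambda$-abstraction, a record literal, or a conditional --- I would then establish, by induction on $A$, the usual reducibility-candidate properties: (CR1) every element of $\mathsf{Red}_A$ is strongly normalising; (CR2) $\mathsf{Red}_A$ is closed under $\norm$; (CR3) if $M$ is neutral and every one-step reduct of $M$ lies in $\mathsf{Red}_A$, then $M \in \mathsf{Red}_A$. Subject reduction for $\norm$, which is immediate from the form of the rules, is used silently throughout.

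The crux is a lemma stating that conditionals preserve reducibility: if $L \in \mathsf{Red}_{\Tbool}$ and $M, N \in \mathsf{Red}_A$, then $\ifelse{L}{M}{N} \in \mathsf{Red}_A$. I would prove it by induction on $A$, with a subsidiary induction on the sum of the lengths of the longest $\norm$-sequences from $L$, $M$, and $N$ (each finite by CR1), plus, where a universally quantified eliminator appears, an innermost induction on that eliminator's reduction length. Since a conditional is not neutral, CR3 does not apply to it directly; instead one inspects its eliminations. When $A = A' \to B$, reducibility requires $(\ifelse{L}{M}{N})\app K \in \mathsf{Red}_B$ for every $K \in \mathsf{Red}_{A'}$; this application \emph{is} neutral, so by CR3 it suffices to check its reducts, the only non-trivial one being the fifth rule's output $\ifelse{L}{M \app K}{N \app K}$, which lies in $\mathsf{Red}_B$ by the outer (type) induction hypothesis at the strictly smaller type $B$, using $M \app K, N \app K \in \mathsf{Red}_B$. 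When $A$ is a record type, reducibility requires $(\ifelse{L}{M}{N}).\ell \in \mathsf{Red}_{A_\ell}$; again CR3 reduces this to its reducts, and the sixth rule followed by a projection step yields $\ifelse{L}{M.\ell}{N.\ell}$, reducible by the outer induction hypothesis at the strictly smaller type $A_\ell$. When $A$ is a base type, $\mathsf{Red}_A$ is just strong normalisation, and every one-step reduct is either $M$ or $N$ (third and fourth rules) or a rewrite inside $L$, $M$, or $N$, hence reducible by the subsidiary induction hypothesis and CR2. Rewrites inside subterms are handled uniformly by the subsidiary induction in every case.

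With this in hand, the main lemma --- if $\Gamma \vdash M : A$ and $\gamma$ maps every $x : B$ in $\Gamma$ to some member of $\mathsf{Red}_B$, then $M\gamma \in \mathsf{Red}_A$ --- follows by induction on the typing derivation. The variable, record, projection, and operation cases combine the induction hypothesis with CR3 (in the record case, applied to the projections of the literal) and the routine subsidiary inductions on subterm reduction lengths that pervade such proofs; the application case is immediate from the definition of reducibility at a function type; the abstraction case is the standard Girard-style argument, deriving $(\lambda x.\, M\gamma)\app N \in \mathsf{Red}_B$ from CR3 and the $\beta$-reduct $M\gamma[N/x] = M(\gamma, x \mapsto N) \in \mathsf{Red}_B$ (the induction hypothesis); and the conditional case is the lemma above. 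A variable is neutral with no reducts, so CR3 puts it in $\mathsf{Red}$; thus the identity substitution is reducible, $M \in \mathsf{Red}_A$, and CR1 gives the claim. I expect the main obstacle to be organising the nested induction in the conditional lemma so that each commuting conversion is discharged at a strictly smaller type while each rewrite under a conditional is discharged by the reduction-length induction, with no step appealing to an equal-or-larger instance; the rest is routine. As an alternative that avoids redeveloping reducibility, one could observe that this rewrite system is a sub-system of the normalisation calculus of~\citet{Cooper09:linq-norm} on the corresponding types, so strong normalisation is inherited; but I would present the self-contained proof as the primary argument.
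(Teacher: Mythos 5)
Your proposal is correct, but it takes a genuinely different route from the paper: the paper's entire proof is the one-line observation that the predicate language and its rewrite rules form a sub-system of the normalisation calculus of~\citet{Cooper09:linq-norm}, so strong normalisation is inherited as a special case of that result --- precisely the alternative you relegate to your final sentence. Your primary argument instead redevelops a self-contained Girard-style reducibility proof, with record types interpreted via their projections and the two commuting conversions for conditionals discharged at strictly smaller types inside a dedicated lemma. The trade-off is clear: the citation is short and sound but pushes the real work (and the verification that the inclusion of rewrite systems actually holds) onto Cooper's considerably richer calculus, whereas your proof is longer but makes the paper self-contained and exposes exactly why the non-structural rules terminate. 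Your argument looks sound; the one place to be careful is the record case of your conditional lemma, where ``the sixth rule followed by a projection step'' is two reduction steps: the one-step reduct demanded by CR3 is the projection of a record literal whose fields are conditionals, so you need one further appeal to CR3 (the intermediate term is itself a neutral projection) before the type induction hypothesis applies. That is routine bookkeeping, not a gap in the idea, but it should be spelled out if you present the self-contained proof.
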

\begin{proof}
A special case of the result shown by~\citet{Cooper09:linq-norm}.
\end{proof}

\emph{Static} predicates refer only to constants and properties of a given
record. Let $\norm^*$ be the
transitive, reflexive closure of the normalisation relation.
Given a variable with base record type $\trowa$, we can show that
normalisation results in a term in predicate normal form.

\begin{proposition}[Normal forms]\label{prop:normal-forms}
If $x : \trowa \vdash M : A$ and $M \norm^* N \not\norm$, then $N$ is in
normal form.
\end{proposition}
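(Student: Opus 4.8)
The plan is to reduce the statement to a property of $\norm$-irreducible terms and then argue by structural induction. First I would check that $\norm$ preserves typing: each of the six rewrite rules sends a well-typed term to a well-typed term of the same type (the four $\beta$-style rules by the usual substitution lemma, the two commuting conversions directly). Since $M \norm^* N$, this gives $x : \trowa \vdash N : A$, and it remains to show that \emph{every $\norm$-irreducible $N$ with $x:\trowa \vdash N:A$ is in normal form} $O$. I would prove this by induction on the typing derivation of $N$, after strengthening the hypothesis to hold for an arbitrary context $\tyctx$ in which every variable has base type or base record type (so that the recursion survives going under a $\lambda$). Note that, because the rewrite rules may fire anywhere, any subterm of an irreducible term is itself irreducible.

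The cases where $N$ is a variable or a constant are immediate. If $N$ is a record $\Trecord{\seq{\ell = M}}$ or an operator application $\opargs{\seq{M}}$, no rule rewrites such a term at the root --- in particular, operator applications are never rewritten, only their arguments are --- so by the induction hypothesis (applied in the same context) the immediate subterms are in $O$, and hence so is $N$. If $N = \ifelse{L}{M_1}{M_2}$, then irreducibility rules out $L \in \{\vtrue,\vfalse\}$ (rules three and four) and rules out $A$ being a record type (rule six); with those cases excluded, $L, M_1, M_2 \in O$ by the induction hypothesis and $N$ has the shape $\ifelse{O_1}{O_2}{O_3}$.

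The three cases that do the work are application, projection, and $\lambda$-abstraction. For $N = M_1 \app M_2$: $M_1$ is irreducible and, by the induction hypothesis, in $O$; irreducibility of $N$ forbids $M_1$ from being a $\lambda$-abstraction or a conditional (rules one and five), and the remaining $O$-forms of function type are just variables, since constants, projections $x.\ell$ and operator applications all have base type. So $M_1$ would be a variable of function type --- impossible in a context all of whose variables have base or base record type --- and this case cannot arise. For $N = \recordproj{M'}{\laba}$: $M'$ is irreducible, of record type, and in $O$; it cannot be a record literal (rule two) or a conditional (rule six, since $M'$ has record type), and $x.\ell$ has base type, so $M'$ is a variable $y$, giving $N = \recordproj{y}{\laba} \in O$. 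For $N = \lambda y . M'$ of type $B \to C$: no rule rewrites $N$ at the root, so $M'$ is irreducible with $\tyctx, y : B \vdash M' : C$; applying the induction hypothesis in this extended context gives $M' \in O$, hence $N \in O$ --- provided $B$ is again a base type or base record type.

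The point requiring the most care is exactly that proviso in the $\lambda$-case: the induction needs every function type occurring in a domain position of $A$ to be ``first-order'', i.e.\ a base type or a base record type. This holds for the predicate types $\trowa \to \Tbool$ that actually arise (and, more generally, for any first-order $A$); it would genuinely fail for a higher-order predicate type such as $(\trowa \to \Tbool) \to \Tbool$, where a stuck application $f \app x$ is an irreducible, well-typed term lying outside the grammar $O$. Under this restriction the same case analysis moreover shows that an irreducible term of base type in such a context contains no $\lambda$-abstractions or records at all, so in fact $N = \lambda y . M'$ with $M'$ in \emph{predicate} normal form $P$.
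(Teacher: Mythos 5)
Your proof follows the same route as the paper's: note that subterms of $\norm$-irreducible terms are irreducible, then show by induction on the typing derivation that an irreducible well-typed term lies in $O$, with the real work in the application and projection cases, where one enumerates the possible $O$-shapes of the function (resp.\ record) position and eliminates each either on typing grounds (constants, projections $x.\ell$ and operator applications have base type; no variable in scope has function type) or because a redex would be present (rules one, two, five and six). That is exactly the paper's case analysis for \textsc{T-App} and \textsc{T-Project}, and your \textsc{T-If}, \textsc{T-Record} and \textsc{T-Op} cases match as well.

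Where you go beyond the paper is the \textsc{T-Abs} case, and your extra care is not cosmetic. The paper disposes of abstractions with ``terms typeable by \textsc{T-Abs} are already in normal form'', which is too quick: $\lambda y.M'$ is in $O$ only if $M'$ is, and $M'$ is typed in an \emph{extended} context, so the induction must be generalised --- as you do --- to contexts all of whose variables have base or base-record type, and this only closes when the bound variable's type is itself first-order. Your observation that the proposition fails as literally stated for higher-order $A$ is correct: $x:\trowa \vdash \lambda f.\,f\app x : (\trowa\to\Tbool)\to\Tbool$ is irreducible, yet $f\app x$ is not an $O$. This identifies an implicit side condition the paper never states; it is harmless for the intended use (the corollary instantiates $A=\Tbool$, and predicates have type $\trowa\to\Tbool$, both first-order), but your restricted version is the one that actually goes through. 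Making subject reduction explicit before analysing $N$ is likewise a step the paper elides. I see no gap in your argument.
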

\begin{proof}
  By induction on the derivation of $x : \trowa \vdash M : A$.
  \begin{techreport}
   The details can be found in Appendix~\ref{appendix:predicates}.
 \end{techreport}
\end{proof}

As a corollary, by considering only terms with type $\Tbool$, we can show that
static predicates are in predicate normal form.

\begin{corollary}[Predicate normal form]
If $x : \trowa \vdash M : \Tbool$ and $M \norm^* N \not\norm$, then $N$ is in
predicate normal form.
\end{corollary}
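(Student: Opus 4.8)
The plan is to derive the corollary from Proposition~\ref{prop:normal-forms} by a short additional argument. Proposition~\ref{prop:normal-forms} already tells us that the $\norm$-normal form $N$ of $M$ is a term in normal form, i.e.\ generated by the grammar for $O$. What remains is to show that, because $M$ (and hence $N$) has the base type $\Tbool$ and the only free variable available is $x$ of \emph{record} type $\trowa$, the term $N$ in fact lies in the more restrictive predicate-normal-form grammar for $P$.

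To this end I would prove the following strengthened statement by induction on the derivation of the typing judgement: if $O$ is in normal form and $x : \trowa \vdash O : D$ for some base type $D$, then $O$ is in predicate normal form. The cases for the $O$ grammar are handled as follows. The productions $\lambda x.~O$ and $\Trecord{\seq{\laba = O}}$ are impossible, since a $\lambda$-abstraction has a function type and a record has a record type, neither of which is a base type $D$. The bare-variable production $x$ is also impossible: the only variable in scope is $x$ itself, whose type is the record type $\trowa$, not a base type. A constant $c$ and a projection $x.\laba$ are directly in predicate normal form (and $x.\laba$ indeed has base type, since every field of $\trowa$ is of base type). For a conditional $\ifelse{O_1}{O_2}{O_3}$, rule \ruleref{T-If} forces $O_1 : \Tbool$ and $O_2, O_3 : D$, all base types in the same context $x : \trowa$, so the induction hypothesis applies to each and the conditional is in predicate normal form. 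Likewise, for an operator application $\opargs{\seq{O}}$, rule \ruleref{T-Op} forces each argument to have a base type $D_i$ in the context $x : \trowa$, so the induction hypothesis applies componentwise. Instantiating the statement with $D = \Tbool$ and $O = N$ gives the corollary.

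The only point requiring a little care — and the closest thing to an obstacle — is keeping track of the typing context through the induction, in particular justifying that every recursive appeal is still made in the context $x : \trowa$. This holds because the subterms we recurse into (the guard and both branches of a conditional, and every argument of an operator) are all forced by \ruleref{T-If}/\ruleref{T-Op} to have base type, and none of the normal-form productions that can occur at base type introduces a binder; hence no bound variable is ever in scope and the only variable that can appear in a projection is $x$. This is also why no appeal to weakening or substitution is needed. Alternatively, one could simply instrument the induction in the proof of Proposition~\ref{prop:normal-forms} to additionally record, at base type, that the result is in predicate normal form; the argument above is the extracted form of that observation.
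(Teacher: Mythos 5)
Your proof is correct and takes essentially the same route as the paper, which states the result as an immediate corollary of Proposition~\ref{prop:normal-forms} obtained "by considering only terms with type $\Tbool$"; your induction showing that every normal form of base type in context $x : \trowa$ lies in the predicate-normal-form grammar is precisely the detail the paper leaves implicit. Your case analysis (ruling out variables, abstractions, and records by typing, and recursing through conditionals and operator arguments, which remain at base type in the same context) is exactly right.
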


Consequently, any static predicate written in our predicate language can be
normalised to predicate normal form, allowing it to be used in typechecking of
lenses and for translation into SQL. Furthermore, the normalisation procedure can
be applied to any \emph{dynamic} predicate at runtime in order to allow the same
checks to be performed dynamically.

\paragraph{Evaluation.}
Figure~\ref{fig:normal-forms} also introduces a standard big-step evaluation
relation $M \evalsto V$, which states that term $M$ evaluates to a value $V$. We
use the notation ${\scriptstyle \opargshat{\seq{V}}}$ to describe the denotation of
operation $\genop$ applied to arguments ${\scriptstyle \seq{V}}$: for example, $\hat{+}\{5, 10\} = 15$.
The semantics enjoys a standard type soundness property.

\begin{proposition}[Type Soundness]
  If $\cdot \vdash M : A$, then there exists some $V$ such that $M \evalsto V$
  and $\cdot \vdash V : A$.
\end{proposition}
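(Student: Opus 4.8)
The statement packages two standard facts: \emph{type preservation} (if $\cdot \vdash M : A$ and $M \evalsto V$ then $\cdot \vdash V : A$) and \emph{totality} (every closed well-typed term evaluates to some value). The plan is to establish these separately and then combine them.

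Preservation is a routine induction on the derivation of $M \evalsto V$, appealing to inversion on the typing judgement at each step: the application case uses that the evaluated operator is a $\lambda$-abstraction typed by \textsc{T-Abs} together with a substitution lemma for typing; the record and projection cases are immediate; the conditional cases follow since both branches share the type $A$; and the operator case \textsc{T-Op} relies on the standing assumption that each operator $\genop : D_1 \times \cdots \times D_n \to D$ has a denotation $\hat\genop$ sending values of the argument base types to a value of type $D$. A canonical-forms observation---a closed value of a base type $D$ is a constant of that type, and a closed value of type $\Tbool$ is $\vtrue$ or $\vfalse$---is used here and below.

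Totality is the part that needs care: a direct structural induction on the typing derivation breaks at \textsc{T-App}, because from $L \evalsto \lambda x. N$ and $M \evalsto V$ one must reason about $\subst{N}{V}$, which is not a subterm of $L \app M$. The plan is the usual reducibility / logical-relations argument. Define, by induction on the type, a family of sets $\mathcal{R}_A$ of closed terms: $M \in \mathcal{R}_D$ iff $M \evalsto c$ for some constant $c$ of type $D$; $M \in \mathcal{R}_{A \to B}$ iff $M \evalsto \lambda x. N$ and $\subst{N}{V} \in \mathcal{R}_B$ for every $V \in \mathcal{R}_A$; and $M \in \mathcal{R}_{\Trecord{\seq{\laba : A}}}$ iff $M \evalsto \Trecord{\seq{\laba = V}}$ with $V_{\laba} \in \mathcal{R}_{A_{\laba}}$ for each $\laba$. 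By construction $M \in \mathcal{R}_A$ implies $M \evalsto V$ for some $V$. One then proves the fundamental lemma: if $x_1 : A_1, \ldots, x_n : A_n \vdash M : A$ and $V_i \in \mathcal{R}_{A_i}$ for each $i$, then the simultaneous substitution $M[\seq{V}/\seq{x}] \in \mathcal{R}_A$, by induction on the typing derivation. \textsc{T-Var} and \textsc{T-Const} are immediate; \textsc{T-Abs} unfolds the definition at an arrow type; \textsc{T-App} combines the two induction hypotheses with the definition of $\mathcal{R}_{A \to B}$; \textsc{T-Record} and \textsc{T-Project} use the record clause and its inversion; \textsc{T-If} splits on whether the guard---which has type $\Tbool$, hence reduces to $\vtrue$ or $\vfalse$---selects the first or second branch, each handled by the corresponding induction hypothesis; and \textsc{T-Op} uses the induction hypotheses to evaluate each argument to a base-type constant and then the assumed totality of $\hat\genop$. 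Instantiating the fundamental lemma with the empty context gives $M \in \mathcal{R}_A$, hence $M \evalsto V$ for some $V$; preservation then yields $\cdot \vdash V : A$.

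The main obstacle is setting up the reducibility predicate and pushing the fundamental lemma through the \textsc{T-App} and \textsc{T-If} cases cleanly; the remaining bookkeeping---the substitution lemma for typing, canonical forms, and the semantic assumptions on operators---is routine. Alternatively, because the paper already establishes strong normalisation of $\norm$, one might hope to factor $\evalsto$ through $\norm^*$, but $\norm$ does not reduce operator applications $\opargs{\seq{M}}$ whose arguments are values, so a separate argument for $\evalsto$ along the lines above is still required.
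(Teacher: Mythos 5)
The paper states this proposition but gives no proof at all, in the main text or in the appendix (the appendix for \secref{sec:impl:predicates} proves only the normal-forms result), so there is nothing to compare against line by line; your proposal supplies the missing argument, and it is correct. Your decomposition is also essentially forced: because the statement asserts the \emph{existence} of a $V$ with $M \evalsto V$, it is a termination claim and not just a ``well-typed terms don't go wrong'' claim, so a plain progress-and-preservation argument is insufficient and a Tait-style reducibility predicate is the right tool; you correctly identify that the naive structural induction breaks at \textsc{T-App} and that the paper's strong-normalisation result for $\norm$ cannot be reused, since $\norm$ never reduces $\opargs{\seq{M}}$ (nor conditionals with non-constant guards, and normal forms need not be values). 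Two small points worth tightening if you write this out in full: (i) for the \textsc{T-App} case of the fundamental lemma you need that $\mathcal{R}_A$ is closed under evaluation to values, i.e.\ if $N \in \mathcal{R}_A$ and $N \evalsto V$ then $V \in \mathcal{R}_A$; this holds because $V \evalsto V$ and the big-step relation is deterministic on the fragment in question, but it should be stated as a lemma rather than left implicit; and (ii) the \textsc{T-Op} cases of both preservation and totality genuinely depend on an assumption the paper never makes explicit, namely that each $\hat\genop$ is a total function from constants of the argument base types to a constant of the result base type --- you are right to flag it as a standing assumption, since without it the proposition is false as stated (e.g.\ for a partial division operator).
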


\section{Typechecking Relational Lenses}\label{sec:impl:tc}

In this section, we show how na\"ive composition of lens combinators can give
rise to ill-formed lenses, and show how such ill-formed lenses can be ruled out
using static and dynamic checks. We adapt the rules proposed
by~\citet{bohannon2006relational} to the
setting of a functional programming language. We begin by discussing functional
dependencies, and then look at each lens combinator in turn.

\subsection{Functional Dependencies}

Functional dependencies are constraints restricting combinations of records. A
functional dependency ${\scriptstyle \slaba \to \slabb}$ requires that two
records with the same values for ${\scriptstyle \slaba}$ should have the same
values for ${\scriptstyle \slabb}$. We use $\setfdsa$ and $\setfdsb$ to denote
sets of functional dependencies. It is possible to derive functional
dependencies from other functional dependencies. The judgement ${\scriptstyle
  \setfdsa ~ \vDash ~
  \slaba \to \slabb}$ specifies that the functional dependency ${\scriptstyle \slaba \to \slabb}$
can be derived from the set of functional dependencies $\setfdsa$ following
\emph{Armstrong's axioms}~\cite{abiteboul1995foundations}; these (standard) derivation
rules can be found in
\begin{published}
the extended version of the paper.
\end{published}
\begin{techreport}
Appendix~\ref{appendix:supplementary}.
\end{techreport}
 The \emph{output fields} of the
functional dependencies $\setfdsa$, written $\outputs \setfdsa$, is the set of
fields constrained by $\setfdsa$ and is defined as:

\begin{defin}[Output fields] \hfill \\
  $\outputs \setfdsa = \set{\laba \in \slaba \mid \exists \slabb \in \slaba.~ \laba \notin \slabb \text{ and } \setfdsa \models \slabb \to \laba}$.
\end{defin}

\citet{bohannon2006relational} impose a special restriction on functional
dependencies called \emph{tree form}. Tree form requires that functional
dependencies form a forest, meaning that column names can be partitioned into
pairwise-disjoint sets forming a directed acyclic graph with at most one
incoming edge per node. As an example,  $\set{A \to B, A \to C, C \to D}$ is in
tree form.
It is straightforward to check whether a set of functional dependencies is in
tree form using a standard graph reachability algorithm.

Sets of functional dependencies which are semantically equivalent to a set of
functional dependencies in tree form are also considered to be in tree form. As
an example, $\set{A \to B C, C \to D}$ is not literally in tree form but is
semantically equivalent to the previous example, so can thus considered to be in
tree form.

\subsection{Lens Types}
\label{section:lens_types}

\begin{figure}
  \begin{flushleft}
  $
  \begin{array}{ll}
    \text{Table names} & \rela, \relb
  \end{array}
  $
\end{flushleft}
\[
  \begin{array}{lrcl}
    \text{Types} & A, B \! & ::= & \cdots \midspace \tablety{\trowa} \midspace \recordsetty{\trowa} \\
                 & & \midspace & \lensty{\trowa}{\preda}{\setfdsa} \\
    \text{Terms} \!\! & L, M, N \!\! & ::= & \cdots \midspace \tablee \rela
                                             \trowa \midspace
    \lens{M}{\setfdsa} \\
                 & & \mid & \lensselect{M}{P}  \\
                 & & \mid & \lensjoindl{M}{N} \\
                 & & \mid & \lensdrop{\laba'}{\slaba}{V}{M} \\
                 & & \mid & \lensget{M} \midspace \lensput{M}{N}
\end{array}
\]
\caption{Syntax of types and terms for tables and lenses}
\label{fig:lens-types-terms}
\end{figure}

Figure~\ref{fig:lens-types-terms} shows the additional types and terms for
tables and lens constructs. We let $\rela, \relb$ range over table names.
Type $\tablety[\rela]{\trowa}$ is the type of a table with table name $\rela$
containing records of type $\trowa$.
The \emph{record set type} $\recordsetty{\trowa}$ describes a set of records of
type $\trowa$.
The type of lenses, $\lensty{\trowa}{\preda}{\setfdsa}$, consists of four
components: the set of
underlying tables $\Sigma$; the base record type $\trowa$; a \emph{restriction
predicate} $\lampreda$; and a set of functional dependencies $\setfdsa$. The base
record type describes the type of rows which can be retrieved or committed to
the view, and the restriction predicate describes the subset of records on which the
lens operates.

In the remainder of the section, we describe each lens combinator and its typing
rule in turn.

\subsection{Rules}
\label{section:typing_rules}

We now introduce the rules we use to typecheck relational lenses,
adapted from the rules as defined by~\citet{bohannon2006relational} to support
nested composition and to make use of our concrete predicate syntax.
We show a formal correspondence between our typing rules and the typing rules
of~\citet{bohannon2006relational}
in~\secref{sec:impl:tc:correctness}. We first introduce some notation.

\begin{defin}[Record concatenation] \hfill
  \begin{itemize}
    \item Given records $\rowa$ = $\Trecord{\laba_1 = V_1, \ldots, \laba_m = V_m}$
      and $\rowb = \Trecord{\laba_{m + 1} = V_{m + 1}, \ldots, \ell_n = V_n}$ with disjoint field names,
      define the \emph{record concatenation} $\rowa \rowapp \rowb =
      \Trecord{\laba_1 = V_1, \ldots, \laba_n = V_n}$.
    \item Given record types $\trowa$ = $\Trecord{\laba_1 : A_1, \ldots, \laba_m : A_m}$
      and $\trowb = \Trecord{\laba_{m + 1} : A_{m + 1}, \ldots, \ell_n : A_n}$ with disjoint field names,
      define the \emph{record type concatenation} $\trowa \rowtyapp \trowb =
      \Trecord{\laba_1 : A_1, \ldots, \laba_n : A_n}$.
  \end{itemize}
\end{defin}

\paragraph{Tables.}

Links defines a primitive table expression $\tablee \rela \trowa$
which defines a handle to a table in the database. The table
expression assumes that the programmer has supplied a record type which
corresponds to the types in the underlying database schema.
\begin{mathpar}
  \inferrule[T-Table]
  { }
  {\tyto \tyctx {\tablee \rela \trowa} {\tablety \trowa}}
\end{mathpar}

\paragraph{Lens Primitives.}
The rule \ruleref{T-Lens} is used to create a relational lens from a Links
table.
A lens primitive is assigned the default predicate constraint
$\vtrue$. All columns referred to by a set of functional dependencies
$\setfdsa$, written $\nodes{\setfdsa}$, should be part of the table record type
$\trowa$.
\begin{mathpar}
  \inferrule[T-Lens]
    { \tyto{\tyctx}{\expra}{\tablety \trowa} \\
      \bigcup \nodes{\setfdsa} \subseteq \aliases{\trowa} }
    { \tyto{\tyctx}{\lens{\expra}{\setfdsa}}{\lensty[\set{\rela}]{\trowa}{\vtrue}{\setfdsa}} }
\end{mathpar}

\subsubsection{Select Lens}
The \emph{select} lens filters a view according to a given predicate.
Let us assume we have a lens $l_1$ which is the join of the two tables
\textsf{albums} and \textsf{tracks}. We might first define a lens $l_2$ to find
popular albums for which the stock is too low, by only returning the albums
where \code{quantity < rating}.

\[
  \begin{tabular}{|c|c|c|c|c|}\hline
    \texttt{track} & \texttt{year} & \texttt{rating} & \texttt{album} &
                                                                        \texttt{quantity} \\ \hline
    Lullaby & 1989 & 3 & Galore & 1 \\ \hline
    Lovesong & 1989 & 5 & Galore  & 1 \\ \hline
    Lovesong & 1989 & 5 & Paris & 4 \\ \hline
    Trust & 1992 & 4 & Wish & 4 \\ \hline
  \end{tabular}
\]
 
We might then decide to further limit this view by defining a lens $l_3$ which
only shows the tables with the album Galore.

\[
  \begin{tabular}{|c|c|c|c|c|}\hline
    \texttt{track} & \texttt{year} & \texttt{rating} & \texttt{album} &
                                                                        \texttt{quantity} \\ \hline
    Lullaby & 1989 & 3 & Galore & 1 \\ \hline
    Lovesong & 1989 & \correct{5}{4} & Galore  & 1 \\ \hline
  \end{tabular}
\]

The user then notices that the rating for \emph{Lovesong} is not correct, and
changes it from $5$ to $4$. Calling \kw{put} on $l_3$ would yield the updated
view for $l_2$:

\[
  \begin{tabular}{|c|c|c|c|c|}\hline
    \texttt{track} & \texttt{year} & \texttt{rating} & \texttt{album} &
                                                                        \texttt{quantity} \\ \hline
    Lullaby & 1989 & 3 & Galore & 1 \\ \hline
    Lovesong & 1989 & \correct{5}{4} & Galore  & 1 \\ \hline
    Lovesong & 1989 & \correct{5}{4} & Paris & 4 \\ \hline
    Trust & 1992 & 4 & Wish & 4 \\ \hline
  \end{tabular}
\]
 
Since the rating of the track Lovesong is $4$ and not lower than the quantity of
the album Paris, the updated view for $l_2$ violates the predicate requirement
\code{quantity < rating}.

To prevent such an invalid combination of lenses, the select lens needs to
ensure that the underlying lens has no predicate constraints on any fields which
may be changed by functional dependencies. The set of fields which can be
changed by functional dependencies $\setfdsa$ is $\outputs{\setfdsa}$. A
predicate $\preda$ ignores the set ${\scriptstyle \slaba}$ if the result of evaluating the
predicate $\preda$ with respect to a row in the database is not affected by
changing any fields in ${\scriptstyle \slaba}$.
\begin{defin}[Predicate Ignores]
  We say $\preda$ \emph{ignores} $\slaba$ if there exists an $\trowa$ such that
  $\slaba$ is disjoint from $\domain \trowa$ and $\tyto {x : \trowa} \preda
  \Tbool$.
\end{defin}

The \ruleref{T-Select} rule also needs to ensure that the resulting lens only
accepts records that satisfy the given predicate $\lampredb$ as well as any existing
constraints $\lampreda$ that already apply to the underlying lens. The resulting
lens's
constraint predicate can thus be defined as $\lambda x . \preda \wedge \predb$. The full select lens
typing rule can be defined as:
\begin{mathpar}
  \inferrule{\tyto \tyctx \expra {\lensty \trowa \preda \setfdsa} \\ \tyto
    {x : \trowa} \predb \Tbool \\\\
    \setfdsa \text{ is in tree form} \\
    \preda \text{ ignores }
    \outputs \setfdsa }{\tyto \tyctx {\lensselect \expra \predb}{\lensty \trowa
      {\preda \wedge
        \predb} \setfdsa}}
\end{mathpar}

\subsubsection{Join Lens}
The \emph{join} lens joins two underlying views.
A join lens has limitations on the functional dependencies of the underlying
tables. Let us assume that there is another table \texttt{reviews} which
contains album reviews by users. The table has the functional dependency
\code{user album -> review}\footnote{This example does not satisfy
  functional dependency tree form. If it instead only had the functional
  dependencies \lstinline+user -> review+, the same problem would occur.}.

\[
  \begin{tabular}{|c|c|c|}\hline
    \texttt{user} & \texttt{review} & \texttt{album} \\ \hline
    musicfan & 4 & Galore \\ \hline
    90sclassics & 5 & Galore \\ \hline
    thecure & 5 & Paris \\ \hline
  \end{tabular}
\]
 
The \texttt{reviews} table is joined with the \texttt{tracks} table to produce
the lens $l_1$. Suppose the user tries to delete the first
``90sclassics'' record:

\[
  \begin{tabular}{|c|c|c|c|c|c|}\hline
    \texttt{user} & \texttt{review} & \texttt{track} & \texttt{year} & \texttt{rating} & \texttt{album} \\ \hline
    musicfan & 4 & Lullaby & 1989 & 3 & Galore \\ \hline
    musicfan & 4 & Lovesong & 1989 & 5 & Galore \\ \hline
    \strikeStart{90sclassics} & 5 & Lullaby & 1989 & 3 & \strikeEnd{Galore} \\ \hline
    90sclassics & 5 & Lovesong & 1989 & 5 & Galore \\ \hline
    thecure & 5 & Lovesong & 1989 & 5 & Paris \\ \hline
  \end{tabular}
\]

In this case, there is no way to define a correct behaviour for \emph{put}. If the user's review is
deleted then the other entry by the same user would also be removed from the
joined table. If the track is deleted, then the entry from the other user for
the same track would also be removed.

The issue is resolved by requiring that one of the tables is completely
determined by the join key. The added functional dependency restriction ensures
that each entry in the resulting view is associated with exactly one entry in
the left table.
\jrc{I think this means: one table's fields are completely determined
  by the functional dependencies of the other table}
\rh{Is this more clear now?}
In this case, if the reviews table contained a single review per track, it would
be possible to delete any individual record by only deleting the entry in the
reviews table. In practice we need to show that we can derive the functional
dependency ${\scriptstyle \slaba \cap \slabb \to \slabb}$, where ${\scriptstyle
  \slaba \cap \slabb}$ are the join columns and ${\scriptstyle \slabb}$ is the
set of columns of the right table. We can check if this functional dependency
can be derived by calculating the transitive closure of ${\scriptstyle \slaba
  \cap \slabb}$ and then checking if ${\scriptstyle \slabb}$ is a subset.

Join lenses come in different variants with varying deletion behaviours: a
variant that always deletes the entry from the left table, a variant that tries
to delete from the right table and otherwise deletes from the left table, and a
variant that deletes the entries from both tables if possible.
The type checking for each variant is similar, so we only discuss the
delete left lens. The rule \ruleref{T-Join-Left} requires us to also show that
$\preda$ ignores $\outputs{\setfdsa}$ and $\ignoresoutputs{\predb}{\setfdsb}$.
The resulting lens should have the predicate $\preda \wedge \predb$ since the
record constraints of both input lenses apply to the output lens.

{\small
\begin{mathpar}
  \inferrule
    [T-Join-Left]
    { \tyto{\tyctx}{\expra}{\lensty \trowa \preda \setfdsa} \\
      \tyto{\tyctx}{\exprb}{\lensty[\schemab] \trowb \predb \setfdsb} \\\\
    \setfdsb \vDash \aliases \trowa \cap \aliases \trowb \to \aliases \trowb \\
    \setfdsa \text{ is in tree form} \\
    \setfdsb \text{ is in tree form} \\\\
    \ignoresoutputs{\preda}{\setfdsa} \\
    \ignoresoutputs{\predb}{\setfdsb} \\
    \schemaa \cap \schemab = \emptyset }
  { \tyto{\tyctx}{\lensjoindl{\expra}{\exprb}}
    {\lensty[\schemaa \cup \schemab] {\trowa \rowtyapp \trowb}
    {\!\preda \wedge \predb} {\setfdsa \cup \setfdsb}}}
\end{mathpar}
}

\subsubsection{Drop Lens}
The \emph{drop} lens allows a more fine-grained notion of relational projection,
allowing us to remove a column from a view. Note that this is not to be confused
with the SQL \lstinline+DROP+ statement, which deletes a table. Let us assume
we define the lens $l_1$ as a select lens with predicate \code{year > 1990} $\vee$ \code{rating > 4}.

\[
  \begin{tabular}{|c|c|c|c|}\hline
    \texttt{track} & \texttt{year} & \texttt{rating} & \texttt{album} \\ \hline
    Lovesong & 1989 & 5 & Galore  \\ \hline
    Lovesong & 1989 & 5 & Paris \\ \hline
    Trust & 1992 & 4 & Wish \\ \hline
  \end{tabular}
\]

We can then define the lens $l_2$ as $l_1$, but dropping column \code{year}
determined by \code{track} to yield the table:

\[
  \begin{tabular}{|c|c|c|}\hline
    \texttt{track} & \texttt{rating} & \texttt{album} \\ \hline
    Lovesong & \correct{5}{3} & Galore \\ \hline
    Lovesong & \correct{5}{3} & Paris \\ \hline
    Trust & 4 & Wish \\ \hline
  \end{tabular}
\]

What would the new predicate constraint be? It cannot reference the field
\code{year}, since it does not exist anymore. If it were \code{rating > 4} then
the last record would be a violation in the output view. If the predicate were
\code{true} it would violate \ruleref{PutGet}: Changing the \code{rating} from
$5$ to $3$ for the track \emph{Lovesong}, would cause it to no longer satisfy
the parent lens' predicate since it is from year $1989$ and the rating is only
$3$.

The underlying issue is the dependency between the dropped field \code{year} and
the field \code{rating}. It is not possible to define a predicate $\preda$ which
specifies if any \code{rating} value is valid independently of the drop column
\code{year}. Without being able to construct such a $\preda$, a lens cannot be well-typed.

\newcommand{\pdrop}{\preda_{\laba'}}
\newcommand{\prest}{\preda_{\slaba}}

\paragraph{Lossless Join Decomposition}
The typing rule for the drop lens requires some finer-grained checks on
predicates. We begin with some preliminary definitions.

\sjf{I would like to use standard values $V, W$ instead of lowercase
metavariables $\rowa$ for records. I'll come back to this.}

\begin{defin}[Predicate satisfaction]
We say that a record $\rowa$ \emph{satisfies} predicate $\lampreda$,
  written $\sat \rowa$, if $\subst \preda \rowa \Downarrow \vtrue$.
\end{defin}

\jrc{erdundant}

\begin{defin}[Record type inhabitants]
  We define the \emph{inhabitants} of a record type $\trowa$, written $\inh
  \trowa$, as:
  \[
    \set{ \rowa \mid \cdot \vdash \rowa : \trowa }
  \]
\end{defin}

We define $\Pset{\preda}{\trowa}$ as the equivalent set of all records of type $\trowa$ satisfying a
predicate $\preda$. The definition of $\Pset{\preda}{\trowa}$ is used to show that our
implementation is sound.

\begin{defin}[Predicate sets]
  We define the \emph{set representation of
  predicate} $\lampreda$ \emph{over} $\trowa$, written $\Pset{\preda}{\trowa}$, as:
  \[
    \set{\rowa \in \inh{\trowa} \midspace \sat{\rowa} }
  \]
\end{defin}

It is often helpful to consider only a subset of fields in a record.

\begin{defin}[Record restriction]
  Given a record $\rowa = \Trecord{\ell_1 = V_1, \ldots, \ell_m = V_m, \ldots,
  \ell_n = V_n}$, we define the \emph{record restriction of } $r$ \emph{to}
  $\ell_1, \ldots, \ell_m$, written $\rowa[\ell_1, \ldots, \ell_m]$, as
$    \Trecord{\ell_1 = V_1, \ldots, \ell_m = V_m}$.
 \end{defin}

Let $\setpreda, \setpredb$ range over homogeneous sets of records, such as the
set representation of predicates. It is also convenient to be able to consider a
set where each constituent record is restricted to a given set of fields.

\begin{defin}[Predicate set restriction]
  We define the \emph{restriction of set } $\setpreda$ \emph{to} $\seq{\ell}$,
  written $\setpreda[\seq{\ell}]$, as $\set{\rowa[\slaba] \mid \rowa \in \Pi}$.
\end{defin}

It is also useful to be able to consider the natural join of two sets of
records.

\begin{defin}[Set join]
  Suppose $\trowa = \trowa_1 \rowtyapp \trowa_2$, and suppose $\setpreda$
  contains records of type $\trowa_1$ and $\setpredb$ contains records of type
  $\trowa_2$.

  We define the \emph{set join} of $\setpreda$ and $\setpredb$, written
  $\setpreda \Join \setpredb$, as:
  \[
    \set{ \rowa \in \inh{\trowa} \midspace \rowa[\domain
      {\trowa_1}] \in \setpreda \wedge \rowa[\domain {\trowa_2}] \in \setpredb}\]
\end{defin}

To check the safety of a drop lens, we need to show that the predicate
does not impose any dependency between the value of the dropped field and any other
field.  We formalise this constraint by defining the notion of a \emph{lossless join decomposition} (LJD).

\begin{defin}[Lossless join decomposition]
  \label{def:ljd-sound}
  A \emph{lossless join decomposition} of two record types $\trowa_1$ and
  $\trowa_2$
  with respect to a predicate $P$ of type $\tyto {x : \trowa_1 \rowtyapp
  \trowa_2}
  \preda \Tbool$, written $\ljd{\lampreda}{\trowa_1}{\trowa_2}$, means that
  for all $\rowa_1, \rowa_2 \in \inh{\trowa_1}$ and $\rowb_1, \rowb_2 \in
  \inh{\trowa_2}$,
  it is the case that:
  \[
    \sat {\rowa_1 \rowapp \rowb_1} \wedge \sat {\rowa_2 \rowapp \rowb_2}
    \implies \sat {\rowa_1 \rowapp \rowb_2}
  \]
\end{defin}

Given $\trowa, \trowa_1, \trowa_2$ such that $\trowa = \trowa_1 \rowtyapp
\trowa_2$,
our definition of lossless join decomposition suffices to show that
$\Pset{\preda}{R}$ can be expressed as the natural join of $\Pset{\preda}{R}$
restricted to the fields of $\trowa_1$, with $\Pset{\preda}{R}$ restricted to the
fields of $\trowa_2$.

\begin{lemma}[Predicate set decomposition]
  \label{lem:ljd-sound}
  Suppose $\trowa = \trowa_1 \rowtyapp \trowa_2$ and $x : \trowa \vdash \preda :
  \Tbool$. If
  $\ljd{\lampreda}{\trowa_1}{\trowa_2}$, then $\Pset{\preda}{\trowa} =
  \Pset{\preda}{\trowa}[\domain
  {\trowa_1}] \Join \Pset{\preda}{\trowa}[\domain {\trowa_2}]$.
\end{lemma}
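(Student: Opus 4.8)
The plan is to prove the set equality by establishing the two inclusions separately, noting that only the $\supseteq$ direction uses the hypothesis $\ljd{\lampreda}{\trowa_1}{\trowa_2}$.

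For $\subseteq$, I would take an arbitrary $\rowa \in \Pset{\preda}{\trowa}$, so $\rowa \in \inh{\trowa}$ and $\sat{\rowa}$. Since $\trowa = \trowa_1 \rowtyapp \trowa_2$ is built with disjoint field names, $\rowa$ decomposes uniquely as $\rowa = \rowa[\domain{\trowa_1}] \rowapp \rowa[\domain{\trowa_2}]$ with $\rowa[\domain{\trowa_i}] \in \inh{\trowa_i}$. By the definition of predicate set restriction, $\rowa[\domain{\trowa_1}] \in \Pset{\preda}{\trowa}[\domain{\trowa_1}]$ and $\rowa[\domain{\trowa_2}] \in \Pset{\preda}{\trowa}[\domain{\trowa_2}]$ are immediate, since each is the restriction of the member $\rowa$. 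Hence $\rowa$ satisfies both membership conditions in the definition of set join and so lies in $\Pset{\preda}{\trowa}[\domain{\trowa_1}] \Join \Pset{\preda}{\trowa}[\domain{\trowa_2}]$; note that this direction needs neither LJD nor the typing of $\preda$.

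For $\supseteq$, I would take $\rowa \in \Pset{\preda}{\trowa}[\domain{\trowa_1}] \Join \Pset{\preda}{\trowa}[\domain{\trowa_2}]$, so that $\rowa \in \inh{\trowa}$, $\rowa[\domain{\trowa_1}] \in \Pset{\preda}{\trowa}[\domain{\trowa_1}]$ and $\rowa[\domain{\trowa_2}] \in \Pset{\preda}{\trowa}[\domain{\trowa_2}]$. Unfolding predicate set restriction, the first membership produces a record $\rowb \in \inh{\trowa_2}$ with $\rowa[\domain{\trowa_1}] \rowapp \rowb \in \Pset{\preda}{\trowa}$, i.e.\ $\sat{\rowa[\domain{\trowa_1}] \rowapp \rowb}$; the second produces $\rowc \in \inh{\trowa_1}$ with $\sat{\rowc \rowapp \rowa[\domain{\trowa_2}]}$. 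Applying $\ljd{\lampreda}{\trowa_1}{\trowa_2}$ with $\rowa_1 := \rowa[\domain{\trowa_1}]$, $\rowa_2 := \rowc$, $\rowb_1 := \rowb$ and $\rowb_2 := \rowa[\domain{\trowa_2}]$ then gives $\sat{\rowa[\domain{\trowa_1}] \rowapp \rowa[\domain{\trowa_2}]}$, which is $\sat{\rowa}$ by the unique decomposition above. Together with $\rowa \in \inh{\trowa}$ this yields $\rowa \in \Pset{\preda}{\trowa}$, completing the inclusion.

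The argument is essentially bookkeeping and I do not expect a genuine obstacle; the two points deserving care are (i) the unique splitting of a record of a concatenated type $\trowa_1 \rowtyapp \trowa_2$ into its $\trowa_1$- and $\trowa_2$-parts, together with the analogous fact for inhabitants, both resting on the disjointness of field names built into $\rowtyapp$; and (ii) instantiating LJD with the right witnesses, namely pairing the left component from the $\trowa_1$-side witness with the right component from the $\trowa_2$-side witness, which matches exactly the shape of the implication in Definition~\ref{def:ljd-sound}.
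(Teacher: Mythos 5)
Your proof is correct and matches the paper's argument in substance: both reduce the claim to unfolding the definitions of restriction and set join and invoking the LJD hypothesis exactly once, with the witnesses instantiated just as you describe. The only difference is presentational --- the paper writes the argument as a chain of set equalities (routed through an auxiliary lemma relating satisfaction and restriction), whereas your two-inclusion layout makes it more explicit that the $\subseteq$ direction is purely definitional and only $\supseteq$ uses $\ljd{\lampreda}{\trowa_1}{\trowa_2}$.
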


\begin{proof}
  Follows from the definitions of $\ljd \lampreda {\trowa_1} {\trowa_2}$, $\cdot[\cdot]$ and
  $\cdot \Join \cdot$.
  \begin{techreport}
  For further details see Appendix
  \ref{section:appendix:drop-lens}.
  \end{techreport}
\end{proof}

Showing $\ljd{\lampreda}{\trowa}{\trowb}$ is NP-hard and could be undecidable,
depending on the atomic formulae available in the predicates.
Since a predicate that satisfies $\ljd{\lampreda}{\trowa}{\trowb}$ can be rewritten as a
conjunction of predicates which depend only on either $\trowa$ or $\trowb$, we
can, however, define a sound but incomplete syntactic approximation
$\ljdi{\lampreda}{\trowa}{\trowb}$.

\begin{mathpar}
  \inferrule
    [LJD$^\dagger$-1]
    { \tyto {x : \trowa} \preda \Tbool }
    {\ljdi{\lampreda}{\trowa}{\trowb}}

  \inferrule
    [LJD$^\dagger$-2]
    { \tyto {x : \trowb} \preda \Tbool }
    {\ljdi{\lampreda}{\trowa}{\trowb}}

  \inferrule
    [LJD$^\dagger$-And]
    {\ljdi{\lampreda}{\trowa}{\trowb} \\\\ \ljdi{\lampredb}{\trowa}{\trowb} }
    {\ljdi {\lambda x .~ \preda \wedge \predb}{\trowa}{\trowb}}
\end{mathpar}

\begin{lemma}[Soundness of LJD$^\dagger$]
  Given a predicate $\lampreda$ and record types $\trowa, \trowb$, it follows that
  $\ljdi \lampreda \trowa \trowb$ implies $\ljd \lampreda \trowa \trowb$.
  \label{lem:ljdi-sound}
\end{lemma}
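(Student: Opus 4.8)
The plan is to prove the lemma by induction on the derivation of $\ljdi{\lampreda}{\trowa}{\trowb}$, of which there are exactly three forms (\textsc{LJD}$^\dagger$-1, \textsc{LJD}$^\dagger$-2, \textsc{LJD}$^\dagger$-And). Two ingredients are used throughout. First, the \emph{Type Soundness} proposition, which guarantees that whenever $\preda$ is well typed over $x : \trowa \rowtyapp \trowb$ and $\rho \in \inh{\trowa \rowtyapp \trowb}$, the closed term $\subst{\preda}{\rho}$ evaluates to some Boolean value; together with the evaluation rule for operators this lets us read off $\hat{\wedge}\{V_1, V_2\} = \vtrue$ iff $V_1 = V_2 = \vtrue$, so that $\operatorname{sat}(\lambda x.~\preda \wedge \predb, \rho)$ holds iff $\sat{\rho}$ and $\sat[\predb]{\rho}$ both hold. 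Second, a \emph{relevance} lemma: if $x : \trowc \vdash \preda : \Tbool$, then the value of $\subst{\preda}{\rho}$ depends only on $\rho[\domain{\trowc}]$; that is, for closed records $\rho, \rho'$ with $\rho[\domain{\trowc}] = \rho'[\domain{\trowc}]$ and both substitutions well typed, $\subst{\preda}{\rho} \evalsto V$ iff $\subst{\preda}{\rho'} \evalsto V$. Intuitively this holds because a well typed predicate can only inspect the fields of $x$ listed in its typing context --- immediate for predicates already in predicate normal form, where $x$ occurs only in projections $x.\ell$ with $\ell \in \domain{\trowc}$, and in general by a routine induction on the typing (or evaluation) derivation.

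For the base case \textsc{LJD}$^\dagger$-1 we have $x : \trowa \vdash \preda : \Tbool$. Fix $\rowa_1, \rowa_2 \in \inh{\trowa}$ and $\rowb_1, \rowb_2 \in \inh{\trowb}$ with $\sat{\rowa_1 \rowapp \rowb_1}$ and $\sat{\rowa_2 \rowapp \rowb_2}$; we must show $\sat{\rowa_1 \rowapp \rowb_2}$. Since $(\rowa_1 \rowapp \rowb_1)[\domain{\trowa}] = \rowa_1 = (\rowa_1 \rowapp \rowb_2)[\domain{\trowa}]$, applying the relevance lemma with $\trowc = \trowa$ transports $\subst{\preda}{\rowa_1 \rowapp \rowb_1} \evalsto \vtrue$ to $\subst{\preda}{\rowa_1 \rowapp \rowb_2} \evalsto \vtrue$, i.e.\ $\sat{\rowa_1 \rowapp \rowb_2}$. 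The case \textsc{LJD}$^\dagger$-2 is symmetric: from $x : \trowb \vdash \preda : \Tbool$ and $(\rowa_2 \rowapp \rowb_2)[\domain{\trowb}] = \rowb_2 = (\rowa_1 \rowapp \rowb_2)[\domain{\trowb}]$, the relevance lemma turns $\sat{\rowa_2 \rowapp \rowb_2}$ into $\sat{\rowa_1 \rowapp \rowb_2}$ (here the hypothesis on $\rowa_1 \rowapp \rowb_1$ is not needed).

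For the inductive case \textsc{LJD}$^\dagger$-And, the predicate is $\lambda x.~\preda \wedge \predb$, with subderivations of $\ljdi{\lampreda}{\trowa}{\trowb}$ and $\ljdi{\lampredb}{\trowa}{\trowb}$; by the induction hypothesis we get $\ljd{\lampreda}{\trowa}{\trowb}$ and $\ljd{\lampredb}{\trowa}{\trowb}$. Given $\rowa_1, \rowa_2, \rowb_1, \rowb_2$ with $\operatorname{sat}(\lambda x.~\preda \wedge \predb, \rowa_1 \rowapp \rowb_1)$ and $\operatorname{sat}(\lambda x.~\preda \wedge \predb, \rowa_2 \rowapp \rowb_2)$, the semantics of $\wedge$ noted above splits each of these into satisfaction of $\lampreda$ and of $\lampredb$ separately. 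Instantiating $\ljd{\lampreda}{\trowa}{\trowb}$ with $\rowa_1, \rowa_2, \rowb_1, \rowb_2$ gives $\sat{\rowa_1 \rowapp \rowb_2}$, and instantiating $\ljd{\lampredb}{\trowa}{\trowb}$ likewise gives $\sat[\predb]{\rowa_1 \rowapp \rowb_2}$; recombining via the same semantics of $\wedge$ yields $\operatorname{sat}(\lambda x.~\preda \wedge \predb, \rowa_1 \rowapp \rowb_2)$, as required.

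I expect the only real obstacle to be the relevance lemma together with the accompanying well-formedness bookkeeping: the meaning of $\ljd{\lampreda}{\trowa}{\trowb}$ presupposes $x : \trowa \rowtyapp \trowb \vdash \preda : \Tbool$, whereas \textsc{LJD}$^\dagger$-1 and \textsc{LJD}$^\dagger$-2 only supply a typing over one of the two sub-records, so one has to be careful that substituting the full records $\rowa_i \rowapp \rowb_j$ into $\preda$ remains type-correct and that the relevance lemma is stated at a context compatible with both typings. Everything else --- unfolding $\operatorname{sat}$, evaluating $\hat{\wedge}$, and the record-restriction identities such as $(\rowa \rowapp \rowb)[\domain{\trowa}] = \rowa$ --- is routine.
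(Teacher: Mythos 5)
Your proposal is correct and matches the paper's own proof essentially step for step: both proceed by induction on the derivation of $\ljdi{\lampreda}{\trowa}{\trowb}$, handle the two base cases by observing that a predicate typed over only $\trowa$ (resp.\ $\trowb$) cannot distinguish $\rowa_1 \rowapp \rowb_1$ from $\rowa_1 \rowapp \rowb_2$ (the paper calls your ``relevance lemma'' Extensionality), and handle the conjunction case by splitting satisfaction of the conjunction, applying the induction hypotheses, and recombining. If anything, your write-up is slightly more explicit than the paper's about the semantics of $\hat{\wedge}$ and the typing bookkeeping.
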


\begin{proof}
  By induction on the derivation of $\ljdi{\lampreda}{\trowa}{\trowb}$.

  \sjf{Is there a full proof of this?}
\end{proof}

Updates to the view will use the default value $\vala$ in place of the given
column. Therefore, in addition to showing that the predicate does not impose any
dependency between the value of the dropped field and the other fields, we must
show that the default value $\vala$ of the dropped column does not violate the
predicate.
Given the set representation of a predicate $\Pset{\preda}{\trowa}$, we must show
that $\set{\laba' = \vala} \in \Pset{\preda}{\trowa}[\laba']$.

We define a property $\defv{\lampreda}{\rowa}$ and show that it is sound with
respect to the set semantics.

\begin{defin}
  Given a predicate $\lampreda$ and record types $\trowa$ and $\trowb$
  such that $\ljd{\lampreda}{\trowa} {\trowb}$ and
  $\rowa \in \inh {\trowb}$, we write
  $\defv{\lampreda}{\rowa}$ when $\Pset{\preda}{\trowa \rowtyapp \trowb}$ is not empty and there
  exists an
  $\rowb \in \inh{\trowa}$ such that $\sat {\rowa \rowapp \rowb}$.
\end{defin}

\begin{lemma}
  \label{lem:defv-sound}
  Suppose $\trowa = \trowa_1 \rowtyapp \trowa_2$, $\rowa \in \inh{\trowa_2}$, and $\ljd
  {\lampreda}{\trowa_1} {\trowa_2}$. Then $\defv[\trowa_1, \trowa_2]{\lampreda}{\rowa}$ implies $\rowa \in
  \Pset{\preda}{\trowa}[\domain{\trowa_2}]$.
\end{lemma}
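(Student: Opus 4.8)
The plan is to prove the lemma purely by unfolding definitions: the hypothesis $\defv[\trowa_1,\trowa_2]{\lampreda}{\rowa}$ directly hands us a record that witnesses membership in $\Pset{\preda}{\trowa}[\domain{\trowa_2}]$. Concretely, instantiating the metavariables of the definition of $\textbf{DV}$ with $\trowa := \trowa_1$ and $\trowb := \trowa_2$ — which is legitimate because we are given $\trowa = \trowa_1 \rowtyapp \trowa_2$, $\rowa \in \inh{\trowa_2}$, and $\ljd{\lampreda}{\trowa_1}{\trowa_2}$ — the predicate $\defv[\trowa_1,\trowa_2]{\lampreda}{\rowa}$ unpacks to two conjuncts: that $\Pset{\preda}{\trowa}$ is non-empty, and that there is some $\rowb \in \inh{\trowa_1}$ with $\sat{\rowa \rowapp \rowb}$. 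Only the second conjunct is used below.

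First I would fix such an $\rowb$ and set $\rowc \defeq \rowa \rowapp \rowb$. This concatenation is well-defined because the fields of $\rowa$ are exactly $\domain{\trowa_2}$, the fields of $\rowb$ are exactly $\domain{\trowa_1}$, and these are disjoint since $\trowa_1 \rowtyapp \trowa_2$ is assumed to be defined. Identifying records (and record types) up to reordering of fields, $\rowc$ has type $\trowa_2 \rowtyapp \trowa_1 = \trowa_1 \rowtyapp \trowa_2 = \trowa$, so $\rowc \in \inh{\trowa}$; and since $\sat{\rowc}$ is just $\sat{\rowa \rowapp \rowb}$, the definition of predicate sets gives $\rowc \in \Pset{\preda}{\trowa}$. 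Next I would compute $\rowc[\domain{\trowa_2}]$: because the fields contributed by $\rowb$, namely $\domain{\trowa_1}$, are disjoint from $\domain{\trowa_2}$, restricting $\rowa \rowapp \rowb$ to $\domain{\trowa_2}$ discards exactly the $\rowb$-part and returns $\rowa$. By the definition of predicate set restriction, $\rowa = \rowc[\domain{\trowa_2}] \in \Pset{\preda}{\trowa}[\domain{\trowa_2}]$, which is the goal.

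I do not expect a genuine obstacle: the argument is a short chain of definitional steps. The only points that need a little care are (i) the commutativity of record concatenation, so that $\rowa \rowapp \rowb$ is seen to inhabit $\trowa = \trowa_1 \rowtyapp \trowa_2$ and not merely $\trowa_2 \rowtyapp \trowa_1$, and (ii) the elementary fact that restricting a concatenation to the domain of one of its components yields precisely that component. It is worth remarking that this direction uses neither the $\ljd{\lampreda}{\trowa_1}{\trowa_2}$ hypothesis nor the non-emptiness conjunct of $\textbf{DV}$; those are retained only because $\textbf{DV}$ is \emph{defined} relative to them, and are needed for the companion results (the drop-lens typing rule and the reasoning about default values).
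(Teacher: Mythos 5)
Your proposal is correct and follows essentially the same route as the paper's proof: both extract the witness $\rowb \in \inh{\trowa_1}$ with $\sat{\rowa \rowapp \rowb}$ from the definition of $\defv[\trowa_1,\trowa_2]{\lampreda}{\rowa}$, observe that $\rowa \rowapp \rowb$ is a satisfying inhabitant of $\trowa$ whose restriction to $\domain{\trowa_2}$ is $\rowa$, and conclude by the definitions of $\Pset{\preda}{\trowa}$ and set restriction. Your added remarks — that neither the LJD hypothesis nor the non-emptiness conjunct is actually used — are accurate and consistent with the paper's argument.
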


\begin{proof}
  By expansion of the definitions of $\defv[\trowa_1, \trowa_2] \lampreda \rowa$ and of $\cdot \rowapp \cdot$ and $\cdot \in \cdot$.
  \begin{techreport}
  For details see Appendix \ref{section:appendix:drop-lens}.
\end{techreport}
\end{proof}

As with the definition of $\ljd{\lampreda}{\trowa}{\trowb}$, determining if
\break $\defv{\lampreda}{\rowa}$ holds in the general case is a difficult
problem. To simplify this problem we introduce an incomplete set of inference
rules to determine $\defvi{\lampreda}{\rowa}$, which covers the same set of
predicates as the $\ljdi{\lampreda}{\trowa}{\trowb}$ rule.

\begin{mathpar}
  \inferrule [DV$^\dagger$-1] {\tyto {x : \trowa} {\preda} {D} }
  {\defvi{\lampreda}{\rowa}}

  \inferrule
    [DV$^\dagger$-2]
    { \tyto {x : \trowb} {\preda} {D} \\\\ \sat \rowa }
    {\defvi{\lampreda}{\rowa}}

  \inferrule
    [DV$^\dagger$-And]
    { \defvi{\lampreda}{\rowa} \\\\ \defvi{\lampredb}{\rowa}}
    {\defvi{\lambda x .~ \preda \wedge \predb}{\rowa}}
\end{mathpar}

\begin{lemma}\label{lem:defvi-sound}
  Given a predicate $\lampreda$ such that $\Pset \preda {\trowa \rowtyapp
  \trowb}$ is not empty and record $\rowa$ such that $\cdot \vdash \rowa :
  \trowa$, it follows that $\defvi{\lampreda}{\rowa}$ implies
  $\defv{\lampreda}{\rowa}$.
\end{lemma}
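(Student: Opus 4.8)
The plan is to prove the implication by induction on the derivation of $\defvi{\lampreda}{\rowa}$. In each case I must establish the two components of $\defv{\lampreda}{\rowa}$: non-emptiness of $\Pset{\preda}{\trowa \rowtyapp \trowb}$, which is a hypothesis of the lemma and so comes for free; and the existence of a record $\rowb \in \inh{\trowa}$ with $\sat{\rowa \rowapp \rowb}$, which is where the real work is. Two remarks will be used throughout. First, by routine inspection of substitution and big-step evaluation, whether a record satisfies a predicate $\preda$ depends only on its restriction to the fields that $\preda$ projects, so two records agreeing on every field $\preda$ reads either both satisfy $\preda$ or both fail it. Second, every derivation of $\defvi{\lampreda}{\rowa}$ is mirrored by a derivation of $\ljdi{\lampreda}{\trowa}{\trowb}$, the rules \ruleref{DV$^\dagger$-1}, \ruleref{DV$^\dagger$-2}, \ruleref{DV$^\dagger$-And} corresponding to \ruleref{LJD$^\dagger$-1}, \ruleref{LJD$^\dagger$-2}, \ruleref{LJD$^\dagger$-And} respectively; hence by Lemma~\ref{lem:ljdi-sound} we always have $\ljd{\lampreda}{\trowa}{\trowb}$, which both makes $\defv{\lampreda}{\rowa}$ well-formed and lets us apply Lemma~\ref{lem:ljd-sound}.

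For \ruleref{DV$^\dagger$-1} we have $\tyto{x : \trowa}{\preda}{\Tbool}$. Choosing any $\rowc \in \Pset{\preda}{\trowa \rowtyapp \trowb}$ and setting $\rowb = \rowc[\domain{\trowa}] \in \inh{\trowa}$, the records $\rowc$ and $\rowa \rowapp \rowb$ agree on $\domain{\trowa}$ — the only fields $\preda$ reads — so $\sat{\rowa \rowapp \rowb}$. For \ruleref{DV$^\dagger$-2} we have $\tyto{x : \trowb}{\preda}{\Tbool}$ and $\sat{\rowa}$; taking any $\rowb \in \inh{\trowa}$ (base record types are always inhabited), the records $\rowa$ and $\rowa \rowapp \rowb$ agree on $\domain{\trowb}$, so $\sat{\rowa}$ transfers to $\sat{\rowa \rowapp \rowb}$.

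The inductive case \ruleref{DV$^\dagger$-And} concerns $\lambda x.~\preda \wedge \predb$ with sub-derivations of $\defvi{\lampreda}{\rowa}$ and $\defvi{\lampredb}{\rowa}$. Since $\hat{\wedge}\{V_1, V_2\} = \vtrue$ forces $V_1 = V_2 = \vtrue$, non-emptiness of $\Pset{\preda \wedge \predb}{\trowa \rowtyapp \trowb}$ yields non-emptiness of both $\Pset{\preda}{\trowa \rowtyapp \trowb}$ and $\Pset{\predb}{\trowa \rowtyapp \trowb}$, so the induction hypothesis gives $\rowb_1, \rowb_2 \in \inh{\trowa}$ with $\sat[\preda]{\rowa \rowapp \rowb_1}$ and $\sat[\predb]{\rowa \rowapp \rowb_2}$, equivalently $\rowa \in \Pset{\preda}{\trowa \rowtyapp \trowb}[\domain{\trowb}]$ and $\rowa \in \Pset{\predb}{\trowa \rowtyapp \trowb}[\domain{\trowb}]$. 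Since $\rowb_1$ and $\rowb_2$ need not agree I cannot reuse either; instead, by Lemma~\ref{lem:ljd-sound} (with $\ljd{\lampreda}{\trowa}{\trowb}$ and $\ljd{\lampredb}{\trowa}{\trowb}$) each of $\Pset{\preda}{\trowa \rowtyapp \trowb}$, $\Pset{\predb}{\trowa \rowtyapp \trowb}$ equals the set join of its own $\domain{\trowa}$- and $\domain{\trowb}$-restrictions, hence contains every record obtained by pairing a valid $\domain{\trowa}$-part with a valid $\domain{\trowb}$-part. Taking any $\rowc_0 \in \Pset{\preda \wedge \predb}{\trowa \rowtyapp \trowb}$ and $\rowb = \rowc_0[\domain{\trowa}]$: as $\rowc_0$ satisfies $\preda$ and $\predb$, $\rowb$ is a valid $\domain{\trowa}$-part for each, and together with $\rowa$ being a valid $\domain{\trowb}$-part for each, Lemma~\ref{lem:ljd-sound} gives $\rowa \rowapp \rowb \in \Pset{\preda}{\trowa \rowtyapp \trowb}$ and $\rowa \rowapp \rowb \in \Pset{\predb}{\trowa \rowtyapp \trowb}$, so $\sat[\preda \wedge \predb]{\rowa \rowapp \rowb}$, the witness required for $\defv{\lambda x.~\preda \wedge \predb}{\rowa}$.

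I expect \ruleref{DV$^\dagger$-And} to be the main obstacle. The base cases amount to the observation that a predicate's truth depends only on the fields it reads, but in the inductive step the two inductive hypotheses supply independent completions $\rowb_1, \rowb_2$ of $\rowa$, and these must be fused into a single completion that discharges both conjuncts at once — something that is not possible by mere concatenation. What makes it work is the rectangular shape of a predicate-set under a lossless join decomposition (Lemma~\ref{lem:ljd-sound}): such a set admits any valid $\domain{\trowa}$-part in combination with any valid $\domain{\trowb}$-part, so it suffices to take the $\domain{\trowa}$-part of an arbitrary witness for $\preda \wedge \predb$ and combine it with the fixed part $\rowa$, which lies in the $\domain{\trowb}$-projection of both $\Pset{\preda}{\trowa \rowtyapp \trowb}$ and $\Pset{\predb}{\trowa \rowtyapp \trowb}$.
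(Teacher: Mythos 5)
Your proof is correct and follows the same overall strategy as the paper's: induction on the derivation of $\defvi{\lampreda}{\rowa}$, with the base cases discharged by the observation that satisfaction of a predicate depends only on the fields it actually reads. Where you genuinely diverge is the \ruleref{DV$^\dagger$-And} case, and there your argument is more careful than the one in the paper. The paper's proof obtains $\exists \rowb \in \inh{\trowa}.~\sat[\predb_1]{\rowa \rowapp \rowb}$ and $\exists \rowb \in \inh{\trowa}.~\sat[\predb_2]{\rowa \rowapp \rowb}$ from the two induction hypotheses and then concludes $\exists \rowb \in \inh{\trowa}.~\sat[\predb_1 \wedge \predb_2]{\rowa \rowapp \rowb}$ citing only the definition of conjunction; this silently fuses two existential witnesses that need not coincide, which is precisely the obstacle you flag. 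Your repair --- note that every derivation of $\defvi{\lampreda}{\rowa}$ is shadowed by a derivation of $\ljdi{\lampreda}{\trowa}{\trowb}$, apply Lemmas~\ref{lem:ljdi-sound} and~\ref{lem:ljd-sound} to obtain the rectangular shape of each conjunct's predicate set, and take as the single witness the $\domain{\trowa}$-restriction of an arbitrary element of the nonempty set $\Pset{\preda \wedge \predb}{\trowa \rowtyapp \trowb}$ --- is sound and supplies the justification the paper elides. The cost is an extra dependency on the LJD machinery, but this is harmless, since $\ljd{\lampreda}{\trowa}{\trowb}$ is already a precondition built into the definition of $\defv{\lampreda}{\rowa}$. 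A further small point you handle better: \ruleref{DV$^\dagger$-2} needs $\inh{\trowa}$ to be nonempty, which your remark that base record types are always inhabited makes explicit, while the paper leaves it buried under ``widening''.
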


\begin{proof}
  By induction on the derivation of $\defvi{\lampreda}{\rowa}$.
  \begin{techreport}
    For further
  details see Appendix \ref{lem:defvi-sound-proof}.
\end{techreport}
\end{proof}

Note that the soundness proof for $\defvi \lampreda \rowa$ requires that $\Pset
\preda {\trowa \rowtyapp \trowb}$ is not empty. This is problematic in theory,
because it requires us to show that the predicate is satisfiable. According to
\citet{bohannon2006relational}, a drop lens on a lens with predicate that is
$\vfalse$ does not typecheck. In practice however, this lens is well behaved as
it returns an empty view and only takes an empty view. The lens would therefore
be useless, but not incorrect.

With the preliminaries in place, we can present the typing rule for the drop lens.
The term $\lensdrop{\laba'}{\slaba}{\vala}{\expra}$ constructs a lens
which removes column $\laba'$ from view $\expra$, given that
the functional dependencies of the view ensure that $\laba'$ is determined by
the columns ${\scriptstyle \slaba}$.
The typing rule is as follows:

\begin{small}
  \begin{mathpar}
    \inferrule[T-Drop]
    { \setfdsa \equiv \setfdsb \cup \set{\slaba \to \laba'} \\
      \tyto{\tyctx}{\expra}{\lensty{\trowa \rowtyapp \Trecord{\oftype{\laba'}{\typpa}}}
        \preda \setfdsa} \\\\
      \slaba \subseteq \aliases \trowa \\ \tyto \tyctx \vala A\\
      \ljd{\lampreda}{\trowa}{\Trecord{\laba' : \typpa}} \\
      \defv[\trowa,\Trecord{\laba' : A}]{\lampreda}{\Trecord {\laba' = \vala}}
      \\
      \preda' = \subst[x.\laba']{\preda}{\vala}
    }
      { \tyctx \vdash \lensdrop{\laba'}{\slaba}{\vala}{\expra} {:}
        {\lensty{\trowa}{\lambda x . \preda'}{\setfdsb}} }
  \end{mathpar}
\end{small}

The clause ${\scriptstyle \setfdsa ~ \equiv ~ \setfdsb \cup \set{\slaba \to \laba'}}$ checks that
the functional dependencies of the underlying lens $M$ imply that ${\scriptstyle
\slaba}$ do
indeed determine $\laba'$; that ${\scriptstyle \slaba}$ are contained in the domain of the
record type $\trowa$ of underlying lens $M$; that $V$ has the same type as the
dropped field; that $R$ and $(\ell' : A)$ define a lossless join
decomposition with respect to the lens predicate; and finally that $V$ is a
suitable default value with respect to the predicate.

The resulting type
$\lensty{\trowa}{\preda[\vala / x.\laba']}{\setfdsb}$ contains the
updated record type without the dropped column, and the updated predicate with
the default variable in place of all references to the dropped column.

\paragraph{Lens Get}

Finally we define typing rules for making use of relational lenses. Since Links
is not dependently typed, we discard the constraints which apply to the view, and
specify that calling \kw{get} returns a set of records which all have the type
$\trowa$.
\begin{mathpar}
  \inferrule[T-Get] { \tyto{\tyctx}{\expra}{\lensty \trowa \preda \setfdsa}}{
  \tyto{\tyctx}{\lensget{\expra}}{\kw{record set of } \trowa}}
\end{mathpar}

\paragraph{Lens Put}

Just as with \ruleref{T-Get}, we have no way of statically ensuring that the
input satisfies $\preda$ and $\setfdsa$, so we only statically check that the
updated view is a set of records matching type $\trowa$, deferring the checks to
ensure that the set of records satisfies $\setfdsa$ and $\preda$ to runtime.

To ensure that the constraint $\preda$ applies to each record $\rowa$ in a view,
runtime checks ensure that $\sat \rowa$. Functional dependency constraints can
be checked by projecting the set of records down to each functional dependency
and determining if any two records violate a functional dependency.
\begin{mathpar}
  \inferrule[T-Put] { \tyto \trowa \expra {\lensty{\trowa}{\lambda x . \preda}{\setfdsa}} \\
    \tyto{\tyctx}{\exprb}{\kw{record set of } \trowa}}{
    \tyto{\tyctx}{\lensput{\expra}{\exprb}}{\Tunit} }
\end{mathpar}

\subsection{Correctness}\label{sec:impl:tc:correctness}
\citet{bohannon2006relational} prove that lenses satisfying
correctness conditions are well-behaved (i.e., satisfy \textsc{GetPut} and
\textsc{PutGet}, and therefore safely compose). Their typing rules are not in a
form amenable to implementation, since predicates are defined as abstract sets;
lenses are composed using a sequential composition operator rather than allowing
arbitrarily-nested lenses as one would in a functional language; and there is no
distinction between a relation and a lens on a relation.

Nevertheless, we must show that our typing rules also guarantee
well-behavedness. Our approach is to define a type-preserving translation from our functional-style
lenses into the sequential-style lenses defined
by~\citet{bohannon2006relational}.

\begin{figure}
  \totheleft{Syntax of sequential lenses}
    \[
      \begin{array}{ll@{\qquad}ll}
        \text{Set predicates} & \setpreda, \setpredb &
        \text{Schemas} & \Sigma, \Delta \\
        \text{Sequential lenses} & \bpvlensa, \bpvlensb ::=
      \end{array}
      \]
      \[
      \begin{array}{lrcl}
          &   &           & \lensid \midspace \bpvcompose{\bpvlensa}{\bpvlensb} \vphantom{\slabb} \\
          &   & \midspace & \bpvselect{\rela}{\setpreda}{\relb} \vphantom{\slabb} \\
          &   & \midspace & \bpvjoin{\rela_1}{\rela_2}{\relb} \vphantom{\slabb} \\
          &   & \midspace & \bpvdrop{\laba}{\slabb}{V}{\rela}{\relb}
      \end{array}
    \]
    \vspace{1em}

    ~Flattening translation \hfill \framebox{$\ltrans{M} = \Sigma / I / \rela$}
    \[
      \begin{array}{l}
    \ltrans{\lens{\rela}{\setfdsa}}  =  \set{\rela} / \lensid / \rela \\
    \ltrans {\lensselect \expra \preda}  = \\
    \qquad \schemaa / \bpvlensa; \bpvselect{\rela}{\Pset{\preda}{\domain{\rela}}}{\relb} / \relb \\
    \quad \text{ where } \ltrans{\expra} = \schemaa / \bpvlensa / \rela \text{ and } \guid \relb \\
    \ltrans {\lensjoindl \expra \exprb}  = \\
    \qquad \schemaa \uplus \schemab / \bpvlensa; \bpvlensb;
    \bpvjoin{\rela_1}{\rela_2}{\relb} / \relb \\
    \quad \text{ where } \ltrans{\expra} = \schemaa / \bpvlensa / \rela_1, \ltrans{\exprb} = \schemab / \bpvlensb / \rela_2 \text{ and } \guid \relb \\
    \ltrans {\lensdrop{\laba}{\slabb}{\vala}{\expra}}  = \\
    \qquad \schemaa / \bpvlensa; \bpvdrop{\laba}{\slabb}{\vala}{\rela}{\relb} /
    \relb \\
    \quad \text{ where } \ltrans{\expra} = \schemaa / \bpvlensa / \rela
  \end{array}
  \]%
    \caption{Sequential-style lenses~\cite{bohannon2006relational} and
    flattening}
    \label{fig:bpv-syntax}
\end{figure}
Figure~\ref{fig:bpv-syntax} shows the grammar of sequential-style lenses.
We let $\setpreda$ range over set-style
predicates; $\rela, \relb$ range over relation names; $\schemaa, \schemab$ range over
schemas (i.e., sets of relation names); and $\bpvlensa, \bpvlensb$ range over
sequential-style lenses.
The \emph{sort} of a relation $\rela$, written $\rlsort{\rela} = ({\scriptstyle \slaba},
\setpreda, \setfdsa)$, is a 3-tuple of
the set of fields ${\scriptstyle \slaba}$ in $\rela$; a set predicate $\setpreda$,
and the set of functional dependencies $\setfdsa$.
If $\rlsort{\rela} = ({\scriptstyle \slaba}, \setpreda, \setfdsa)$, then $\domain{\rela} =
{\scriptstyle \slaba}$.

Sequential-style lenses map source schemas to view schemas.  The $\lensid$ lens
defines the \emph{identity} lens, mapping a schema to itself, and
$\bpvcompose{\bpvlensa}{\bpvlensb}$ composes lenses $\bpvlensa$ and
$\bpvlensb$. The $\bpvselect{\rela}{\setpreda}{\relb}$ lens filters relation
$\rela$ using
predicate set $\setpreda$, naming the resulting relation $\relb$. The
$\bpvjoin{\rela_1}{\rela_2}{\relb}$ lens joins relations $\rela_1$ and $\rela_2$
using the delete-left strategy, naming the resulting relation $\relb$.

Finally, $\bpvdrop{\laba}{{\scriptstyle \slabb}}{V}{\rela}{\relb}$ drops attribute
$\laba$ determined by attributes ${\scriptstyle \slabb}$ with default value $V$ from relation
$\rela$, naming the resulting relation $\relb$.

Figure~\ref{fig:bpv-syntax} also shows the translation from functional lenses to
sequential-style lenses, which involves flattening functional lenses by introducing
intermediate relations with fresh table names.
The translation function $\ltrans{M} = \Sigma / I / \rela$ states that
functional lens $M$ depends on tables $\Sigma$, translates to sequential lens
$I$, and produces a view with name $\rela$.

As an example of a typing rule for sequential-style lenses, consider the typing
rule for the select lens:
\begin{mathpar}
  \inferrule[T-Select-RL]
    { \rlsort \rela = (\slaba, \setpredb, \setfdsa) \\
      \rlsort \relb = (\slaba, \setpreda \cap \setpredb, \setfdsa) \\\\
      \setfdsa \text{ is in tree form} \\ \setpredb \text{ ignores } \outputs{\setfdsa}
    }
    {\bpvselect{\rela}{\setpreda}{\relb} \in \rldisjoint{\set{\rela}}{\set{\relb}}}
\end{mathpar}

The sequential lens typing judgement has the shape $I \in \Sigma \Leftrightarrow
\Delta$, meaning that $I$ is a lens transforming the source schema $\Sigma$ into the
view schema $\Delta$.
In the case of the select lens, given a predicate set $\setpreda$, the typing
rule enforces the invariant that the source relation $\rela$ has sort
$({\scriptstyle \slaba}, \setpredb, \setfdsa)$; that the functional dependencies
$\setfdsa$ are in tree form; that $\setpredb$ ignores the outputs of $\setfdsa$;
and assigns the view $\relb$ the sort $({\scriptstyle \slaba}, \setpreda \cap
\setpredb, \setfdsa)$.

We can now state our soundness theorem, stating that once translated, lenses
typeable in our system are typeable using the original rules proposed
by~\citet{bohannon2006relational}, and can use the incremental semantics
described by~\citet{horn2018incremental}.

\begin{theorem}[Soundness of Translation]\hfill \\
  If $\tyto \tyctx \expra {\lensty \trowa \preda \setfdsa}$ and $\ltrans
  \expra = \ltriple \lensa \relb$, then $\lensa \in \schemaa \Leftrightarrow
  \set{\relb}$ and $\bpvsort{\relb} = (\domain{\trowa}, \Pset{\preda}{\trowa}, \setfdsa)$.
\end{theorem}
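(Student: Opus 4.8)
The plan is to induct on the derivation of $\tyto{\tyctx}{\expra}{\lensty{\trowa}{\preda}{\setfdsa}}$. Since $\ltrans{\cdot}$ is defined exactly on the lens-constructor terms, the last rule is one of \ruleref{T-Lens}, \ruleref{T-Select}, \ruleref{T-Join-Left}, and \ruleref{T-Drop} (other join variants, not formalised here, are handled identically), so I treat one case per rule. In each case the induction hypothesis gives, for every immediate sublens, that its flattening is a well-typed sequential lens with the appropriate source/view schemas and that the sort of the sublens's view relation records exactly the field set, predicate set, and functional dependencies dictated by the sublens's functional type. What remains is (i) to build a sequential-lens typing derivation for the composite $\lensa$, and (ii) to assign the freshly introduced view name its sort and check it equals $(\domain{\trowa}, \Pset{\preda}{\trowa}, \setfdsa)$; since each sequential rule pins down the sort of the view it introduces, (ii) reduces to a set-theoretic identity. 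Throughout I use: the standard framing and composition properties of the sequential-lens judgement of \citet{bohannon2006relational} (the freshness conditions $\guid{\relb}$ built into $\ltrans{\cdot}$, together with the premises $\schemaa \cap \schemab = \emptyset$, supply the needed disjointness); the identities $\Pset{\vtrue}{\trowa} = \inh{\trowa}$ and $\Pset{\preda \wedge \predb}{\trowa} = \Pset{\preda}{\trowa} \cap \Pset{\predb}{\trowa}$, immediate from predicate satisfaction and the big-step semantics of conjunction; a bridging lemma that a predicate which ignores a column set in the sense of our typing premises (typeable in a context whose record type omits those columns) also has its set representation ignore that column set in the set-theoretic sense used by \ruleref{T-Select-RL}; and a relevance lemma, a consequence of normalisation to predicate normal form (Proposition~\ref{prop:normal-forms}), that a predicate typed at $x : \trowa$ evaluates the same on a record and on any extension agreeing with it on $\domain{\trowa}$.

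For \ruleref{T-Lens}, $\ltrans{\lens{\rela}{\setfdsa}} = \set{\rela}/\lensid/\rela$, so $\lensa = \lensid$ with $\lensid \in \set{\rela} \Leftrightarrow \set{\rela}$, and the base relation $\rela$ carries sort $(\domain{\trowa}, \inh{\trowa}, \setfdsa) = (\domain{\trowa}, \Pset{\vtrue}{\trowa}, \setfdsa)$, as required; this case is also what fixes the sorts of base relation names compatibly with the underlying table handles, and no later case disturbs them. For \ruleref{T-Select} the term is $\lensselect{\exprb}{\predb}$ with $\tyto{\tyctx}{\exprb}{\lensty{\trowa}{\preda}{\setfdsa}}$ and $\tyto{x:\trowa}{\predb}{\Tbool}$, and $\lensa = \bpvcompose{\bpvlensa}{\bpvselect{\rela}{\Pset{\predb}{\trowa}}{\relb}}$ where $\ltrans{\exprb} = \schemaa/\bpvlensa/\rela$. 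The hypothesis gives $\bpvlensa \in \schemaa \Leftrightarrow \set{\rela}$ and $\bpvsort{\rela} = (\domain{\trowa}, \Pset{\preda}{\trowa}, \setfdsa)$, so \ruleref{T-Select-RL} applies to $\bpvselect{\rela}{\Pset{\predb}{\trowa}}{\relb}$: its premises that $\setfdsa$ is in tree form and that $\Pset{\preda}{\trowa}$ ignores $\outputs{\setfdsa}$ follow from the corresponding premises of \ruleref{T-Select} (the latter via the bridging lemma), and it forces $\bpvsort{\relb} = (\domain{\trowa}, \Pset{\predb}{\trowa} \cap \Pset{\preda}{\trowa}, \setfdsa)$. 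Since $\Pset{\predb}{\trowa} \cap \Pset{\preda}{\trowa} = \Pset{\preda \wedge \predb}{\trowa}$, composing with $\bpvlensa$ closes the case.

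The \ruleref{T-Join-Left} case runs the same way with two sublenses $\exprb_1 : \lensty[\schemaa]{\trowa}{\preda}{\setfdsa}$ and $\exprb_2 : \lensty[\schemab]{\trowb}{\predb}{\setfdsb}$: from $\bpvlensa \in \schemaa \Leftrightarrow \set{\rela_1}$, $\bpvlensb \in \schemab \Leftrightarrow \set{\rela_2}$ and the hypotheses on $\bpvsort{\rela_1}, \bpvsort{\rela_2}$, I frame $\bpvlensa$ by $\schemab$ and $\bpvlensb$ by $\set{\rela_1}$ (legal by $\schemaa \cap \schemab = \emptyset$ and the freshness of view names), compose, and apply the sequential join rule to $\bpvjoin{\rela_1}{\rela_2}{\relb}$, whose side conditions are exactly the premises $\setfdsb \vDash \aliases{\trowa} \cap \aliases{\trowb} \to \aliases{\trowb}$, the two tree-form conditions, and $\ignoresoutputs{\preda}{\setfdsa}$, $\ignoresoutputs{\predb}{\setfdsb}$ (via the bridging lemma). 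The forced view sort has field set $\domain{\trowa} \cup \domain{\trowb} = \domain{\trowa \rowtyapp \trowb}$, functional dependencies $\setfdsa \cup \setfdsb$, and predicate set $\Pset{\preda}{\trowa} \Join \Pset{\predb}{\trowb}$; it remains to see that this equals $\Pset{\preda \wedge \predb}{\trowa \rowtyapp \trowb}$, which follows from the conjunction identity and the relevance lemma — each conjunct, weakened to $x : \trowa \rowtyapp \trowb$, cuts out precisely the records whose restriction to one side lies in the corresponding $\Pset{\cdot}{\cdot}$, and the intersection of two such sets is their set join.

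The \ruleref{T-Drop} case is where the argument does real work and is the step I expect to be the main obstacle. Here $\expra = \lensdrop{\laba'}{\slaba}{\vala}{\exprb}$, $\tyto{\tyctx}{\exprb}{\lensty{\trowa \rowtyapp \Trecord{\laba' : \typpa}}{\preda}{\setfdsa}}$ with $\setfdsa \equiv \setfdsb \cup \set{\slaba \to \laba'}$, and $\lensa = \bpvcompose{\bpvlensa}{\bpvdrop{\laba'}{\slaba}{\vala}{\rela}{\relb}}$ where $\ltrans{\exprb} = \schemaa/\bpvlensa/\rela$ and, by the hypothesis, $\bpvsort{\rela} = (\domain{\trowa} \cup \set{\laba'}, \Pset{\preda}{\trowa \rowtyapp \Trecord{\laba' : \typpa}}, \setfdsa)$. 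To apply the sequential drop-lens rule of \citet{bohannon2006relational} I must discharge its premises: that $\slaba$ determine $\laba'$ (from $\setfdsa \equiv \setfdsb \cup \set{\slaba \to \laba'}$) and that $\vala$ has type $\typpa$ (immediate); that $\Pset{\preda}{\trowa \rowtyapp \Trecord{\laba' : \typpa}}$ is the natural join of its restrictions to $\domain{\trowa}$ and $\set{\laba'}$, which is Lemma~\ref{lem:ljd-sound} applied to the premise $\ljd{\lampreda}{\trowa}{\Trecord{\laba' : \typpa}}$; and that $\set{\laba' = \vala}$ lies in the $\set{\laba'}$-restriction of that set, which is Lemma~\ref{lem:defv-sound} applied to the premise $\defv[\trowa,\Trecord{\laba' : \typpa}]{\lampreda}{\Trecord{\laba' = \vala}}$. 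The delicate obligation is matching the output sort: the sequential rule makes the view's predicate set the restriction $\Pset{\preda}{\trowa \rowtyapp \Trecord{\laba' : \typpa}}[\domain{\trowa}]$, and I must show it equals $\Pset{\subst[x.\laba']{\preda}{\vala}}{\trowa}$. The inclusion $\supseteq$ is immediate; for $\subseteq$, a record in the restriction is one for which \emph{some} value of $\laba'$ satisfies $\preda$, and combining the lossless-join property (Definition~\ref{def:ljd-sound}) with the default witness furnished by $\defv[\trowa,\Trecord{\laba' : \typpa}]{\lampreda}{\Trecord{\laba' = \vala}}$ transports this to the value $\vala$, so (using that substituting $\vala$ for $x.\laba'$ and then the record for $x$ agrees with substituting the $\vala$-extended record) the record also satisfies $\subst[x.\laba']{\preda}{\vala}$. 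Composing $\bpvlensa$ with the drop lens then yields $\lensa \in \schemaa \Leftrightarrow \set{\relb}$ with the claimed sort, completing the induction.
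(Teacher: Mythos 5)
Your proposal is correct and follows essentially the same route as the paper: induction on the typing derivation with one case per lens constructor, discharging the sequential-rule premises via the set-theoretic identities $\Pset{\preda \wedge \predb}{\trowa} = \Pset{\preda}{\trowa} \cap \Pset{\predb}{\trowa}$ and $\Pset{\preda \wedge \predb}{\trowa \rowtyapp \trowb} = \Pset{\preda}{\trowa} \Join \Pset{\predb}{\trowb}$, a set-level ``ignores'' bridging lemma, and Lemmas~\ref{lem:ljd-sound} and~\ref{lem:defv-sound} together with the substitution/restriction identity for the drop case. The only difference is presentational: the paper factors each constructor case into a standalone lemma, whereas you inline them into the main induction.
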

\begin{proof}
  By induction on the derivation of \\
  \begin{published}
  $\Gamma \vdash \expra : {\lensty{\trowa}{\preda}{\setfdsa}}$.
  \end{published}
  \begin{techreport}
  $\Gamma \vdash \expra : {\lensty{\trowa}{\preda}{\setfdsa}}$;
  see Appendix~\ref{appendix:typechecking}.
  \end{techreport}
\end{proof}

\subsection{Typechecking Dynamic Predicates}%
If a dynamic predicate is used in any lens combinator, the same checks
are performed, but checking of predicates must be deferred to runtime.
In this case, we require the programmer to acknowledge that the lens
construction may fail at run-time. We introduce a special lens, the \kw{check}
lens, which the user must incorporate prior to using the lens in a \kw{get} or
\kw{put} operation.

\section{Case Study: Curated Scientific Databases}\label{sec:case-study}
In this section, we illustrate the use of relational lenses in the setting of a
larger Links application: part of the curation interface for a scientific
database. Scientific databases collect information about a particular
topic, and are \emph{curated} by subject matter experts who manually enter and
update entries.

The IUPHAR/BPS Guide to Pharmacology (GtoPdb)~\cite{gtopdb} is a
curated scientific database which collects information on pharmacological
targets, such as receptors and enzymes, and \emph{ligands} such as
pharmaceuticals which act upon targets. GtoPdb consists of a PostgreSQL
database, a Java/JSP web application frontend to the database, and a Java GUI
application used for data curation.

In parallel work~\cite{FowlerHSC20:links-gtopdb}, we have implemented a
workalike frontend application in Links, using the Links LINQ functionality. In
this section, we demonstrate how we are beginning to use relational lenses for
the curation interface, and show how relational lenses are useful in tandem with
the Model-View-Update (MVU) paradigm pioneered by the Elm programming
language~\cite{elm-lang}.

\subsection{Disease Curation Interface}

\begin{figure}
\begin{subfigure}{\columnwidth}
  \centering
  \includegraphics[width=0.8\textwidth]{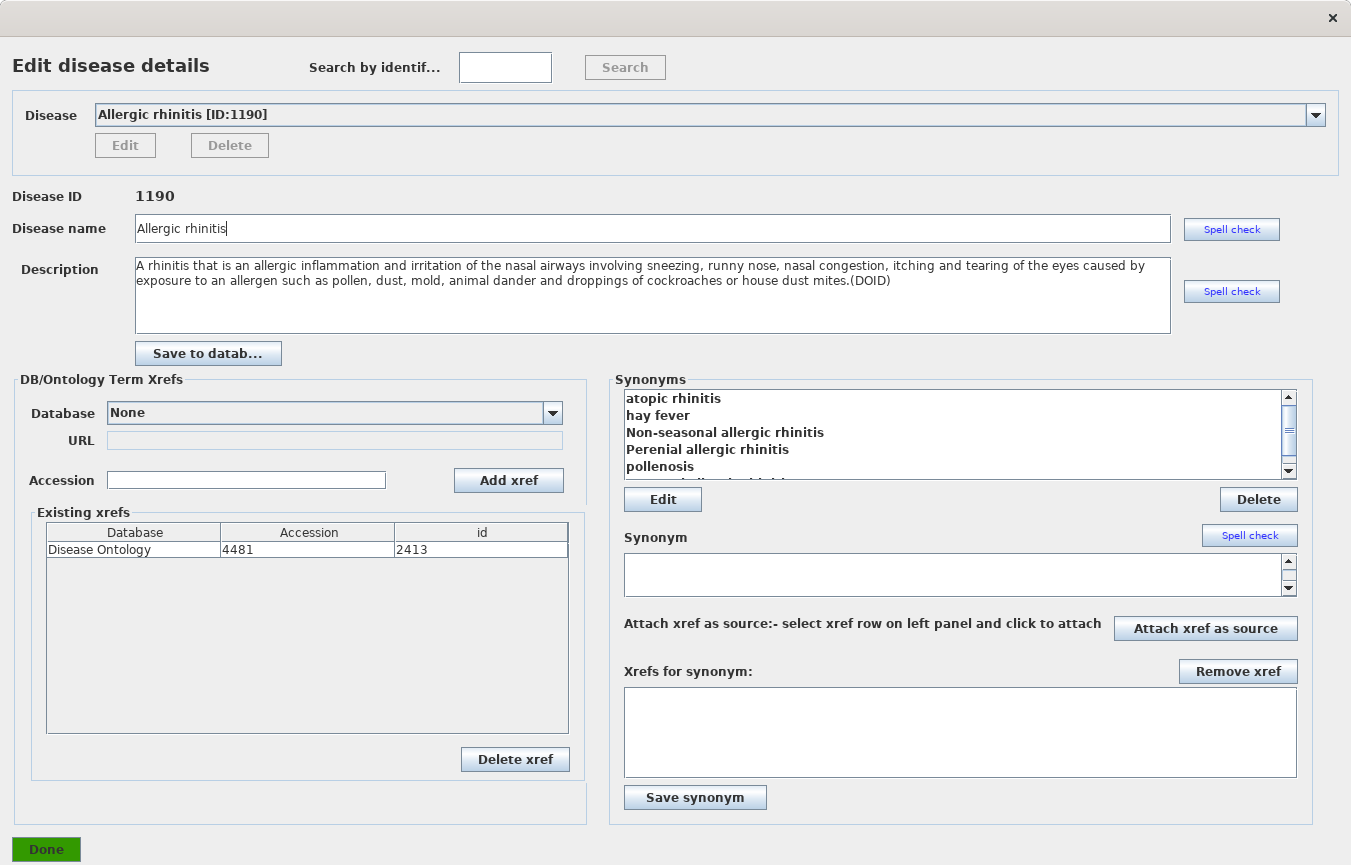}
\caption{Java curation interface}
\label{fig:java-curation}
\end{subfigure}

\begin{subfigure}{\columnwidth}
  \centering
  \includegraphics[width=0.8\textwidth]{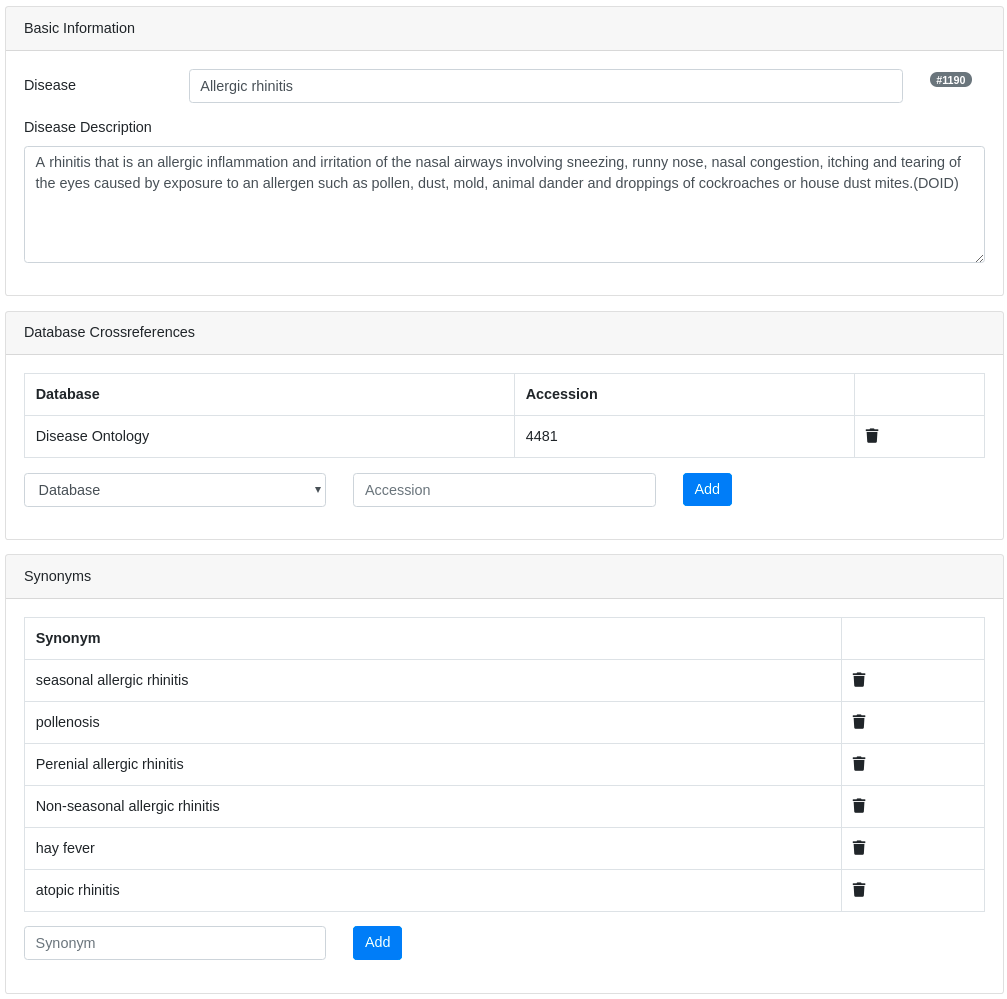}
\caption{Links reimplementation}
\label{fig:links-curation}
\end{subfigure}

\caption{Curation interfaces for Diseases}
\label{fig:disease-curation}
\end{figure}

One section of GtoPdb collects information on diseases, such as the disease name,
description, crossreferences to other databases, and relevant drugs and targets.
In this section, we describe a curation interface for diseases, where all
interaction with the database occurs using relational lenses.

Figure~\ref{fig:java-curation} shows the official Java curation interface. The
main data entries edited using the curation interface are the name and description of the
disease; the crossreferences for the disease which refer to external databases;
and the synonyms for a disease. As an example, a synonym for ``allergic
rhinitis'' is ``hayfever''.
Note that this curation interface does not edit ligand or target information;
curation of ligand-to-disease and target-to-disease links are handled by the
ligand and target curation interfaces respectively.

\subsection{Links Reimplementation}
Figure~\ref{fig:links-curation} shows the curation interface as a Links web
application.
In the original implementation of Links~\cite{cooper2006links}, requests invoked
Links as a CGI script. Modern Links web applications execute as follows:

\begin{enumerate}
  \item A Links application is executed, which registers URLs against page
    generation functions, and starts the webserver
  \item A request is made to a registered URL, and the server runs the
    corresponding page
    generation function
  \item The page generation function may spawn server processes, make
    database queries, and register processes to run on the client,
    before returning HTML to the client
  \item The client application spawns any client processes, and renders the HTML
  \item Client processes can communicate with server processes over a WebSocket
    connection.
\end{enumerate}

\subsubsection{Architecture}

The disease curation interface consists of a persistent server process, and a
client process which is spawned by the Links MVU library.

Upon page creation, the application creates lenses to the underlying tables: the
lenses retrieve data from, and propagate changes to, the database.
Since lenses only exist on the server and cannot be serialised to the client, we
spawn a process which awaits a message from the client with the updated data.

\subsubsection{Tables and Lenses.}
We begin by defining the records we need, and handles to the underlying
database and its tables.

First, we define a database handle, \lstinline+db+, to the \lstinline+gtopdb+
database.
\begin{lstlisting}
var db = database "gtopdb";
\end{lstlisting}

Next, we define type aliases for the types of records in each table. The disease
curation interface uses tables describing four entity types: disease data
(\lstinline+DiseaseData+), metadata about external databases
(\lstinline+ExternalDatabase+), links from diseases to external databases
(\lstinline+DatabaseLink+), and disease synonyms (\lstinline+Synonym+). (Note
that \lstinline+"prefix"+ appears in quotes as \lstinline+prefix+ is a Links
keyword).

\begin{lstlisting}
typename DiseaseData =
  (disease_id: Int, name: String,
   description: String, type: String);
typename ExternalDatabase =
  (database_id: Int, name: String, url: String,
   specialist: Bool, "prefix": String);
typename DatabaseLink =
  (disease_id: Int, database_id: Int, placeholder: String);
typename Synonym = (disease_id: Int, synonym: String);
\end{lstlisting}

We will need to join the \lstinline+ExternalDatabase+ and
\lstinline+DatabaseLink+ tables in order to render the database name of each
external database link. It is therefore useful to define a type synonym for the
record type resulting from the join:

\begin{lstlisting}
typename JoinedDatabaseLink =
  (disease_id: Int, database_id: Int, placeholder: String,
   name: String, url: String,
   specialist: Bool, "prefix": String);
\end{lstlisting}

Next, we can define handles to each database table. The \lstinline+with+ clause
specifies a record type denoting the column name and type of each attribute in
the table, and the \lstinline+tablekeys+
clause specifies the primary keys (i.e., sets of attributes which uniquely
identify a row in the database) for each table. We show only the definition of
\lstinline+diseaseTable+; the definitions for \lstinline+databaseTable+,
\lstinline+dbLinkTable+, and \lstinline+synonymTable+ are similar.

\begin{lstlisting}
var diseaseTable =
  table "disease" with DiseaseData
  tablekeys [["disease_id"]] from db;
\end{lstlisting}

The ID of the disease to edit (\lstinline+diseaseID+) is provided as a
\lstinline+GET+
parameter to the page, and thus we need a dynamic predicate as not all
information is known statically.
With the description of the entities and tables defined, we can describe the
relational lenses over the tables. We work in a function scope where
\lstinline+diseaseID+ has been extracted from the \lstinline+GET+ parameters.

\begin{lstlisting}
fun diseaseFilter(x) { x.disease_id == diseaseID }
# Disease lenses
var diseasesLens = lens diseaseTable default;
var diseasesLens =
  check (select from diseasesLens by diseaseFilter);(*\vspace{0.5em}*)
# Database link lenses
var dbLens = lens databaseTable
  with { database_id -> name url specialist "prefix" };
var dbLinksLens = lens dbLinkTable default;
var dbLinksLens =
  check (select from dbLinksLens by diseaseFilter);
var dbLinksJoinLens = check (
  join dbLinksLens with dbLens
    on database_id delete_left);(*\vspace{0.5em}*)
# Synonym lenses
var synonymsLens = lens synonymTable default;
var synonymsLens =
  check (select from synonymsLens by diseaseFilter);
\end{lstlisting}

We create a lens over a table using the \lstinline+lens+ keyword, writing
\lstinline+default+ when we do not need to specify functional dependencies. The
\lstinline+dbLens+ lens specifies a functional dependency from
\lstinline+database_id+ to each of the other columns, as knowledge of this
dependency is required when constructing a join lens.

We need not filter the \lstinline+databaseTable+ table since we wish to display
all external databases.
The \lstinline+diseaseLens+, \lstinline+dbLinksLens+, and
\lstinline+synonymsLens+ lenses make use of the
\lstinline+select+ lens combinator, allowing us to consider only the records
relevant to the given \lstinline+diseaseID+.  Note that each entity has a
\lstinline+disease_id+ field: as a result, we can make use of Links' row typing
system~\cite{LindleyC12:rows} to define a \emph{single}
predicate, \lstinline+diseaseFilter+, for each select lens using row polymorphism.

The \lstinline+dbLinksJoinLens+ lens joins the external database links with the
data about each external database by using the \lstinline+join+ lens combinator,
stating that if a record is deleted from the view, then it should be deleted from the
\lstinline+dbLinkTable+ rather than the \lstinline+dbLens+ table.
Joining these two tables is only possible because \lstinline+database_id+
uniquely determines each column of the \lstinline+databaseTable+ table; as the
lens uses a dynamic predicate, this property is checked at runtime.

\subsubsection{Model}

In implementing the case study, we make use of the \emph{Model-View-Update}
(MVU) paradigm, pioneered by the Elm programming language~\cite{elm-lang}. MVU is
similar to the Model-View-Controller design pattern in that it splits the state
of the system from the rendering logic. In contrast to MVC, MVU relies on explicit message
passing to update the model. The key interplay between MVU and relational lenses
is that MVU allows the model to be directly modified in memory, and relational
lenses allow the changes in the model to be directly propagated to the database
\emph{without} writing any marshalling or query construction code.

\begin{lstlisting}
typename DiseaseInfo =
  (diseaseData: DiseaseData, databases: [ExternalDatabase],
   dbLinks: [JoinedDatabaseLink], synonyms: [Synonym]);

typename Model =
  Maybe(
    (diseaseInfo: DiseaseInfo, selectedDatabaseID: Int,
     accessionID: String, newSynonym:String,
     submitDisease: (DiseaseInfo) {}~> ()));
\end{lstlisting}%
The model (\lstinline+Model+) contains all definitions retrieved from the database
(\lstinline+DiseaseInfo+), as well as the current value of the various form
components for adding database links (\lstinline+selectedDatabaseID+ and
\lstinline+accessionID+) and synonyms (\lstinline+newSynonym+). Finally, the
model contains a function \lstinline+submitDisease+ which commits the
information to the database. Note that the \lstinline+{}~>+ function arrow
denotes a function which cannot be run on the database, and does not perform any
effects.
The \lstinline+Model+ type is wrapped in a \lstinline+Maybe+ constructor to
handle the case where the application tries to curate a nonexistent disease.

\paragraph{Initial model.}
To construct the initial model, we fetch the data
from each lens using the \lstinline+get+ primitive. We include type annotations
for clarity, but they are not required.

\begin{lstlisting}
var (diseases: [DiseaseData])       = get diseasesLens;
var (dbs: [ExternalDatabase])       = get dbLens;
var (dbLinks: [JoinedDatabaseLink]) = get dbLinksJoinLens;
var (synonyms: [Synonym])           = get synonymsLens;
\end{lstlisting}

Next, we spawn a server process which awaits the submission of an updated
\lstinline+DiseaseInfo+ record. The \lstinline+Submit+ message contains the
updated record along with a client process ID \lstinline+notifyPid+ which is
notified when the query is complete.

The \lstinline+submitDisease+ function takes an updated \lstinline+DiseaseInfo+
process ID and sends a \lstinline+Submit+ message to the server. The
\lstinline+spawnWait+ keyword spawns a process, waits for it to complete, and
returns the retrieved value. In our case, we use \lstinline+spawnWait+ to only
navigate away from the page once the query has completed.

\begin{lstlisting}
var pid = spawn {
  receive {
    case Submit(diseaseInfo, notifyPid) ->
      put diseasesLens with [diseaseInfo.diseaseData];
      put dbLinksJoinLens with diseaseInfo.dbLinks;
      put synonymsLens with diseaseInfo.synonyms;
      notifyPid ! Done
  }
};

sig submitDisease : (DiseaseInfo) {}~> ()
fun submitDisease(diseaseInfo) {
  spawnWait {
    pid ! Submit(diseaseInfo, self());
    receive { case Done -> () }
  };
  redirect("/editDiseases")
}
\end{lstlisting}

Given the above, we can construct the initial model. Recall that the result of
\lstinline+get diseasesLens+ is a \emph{list} of \lstinline+DiseaseInfo+
records. As \lstinline+disease_id+ is the primary key for the
\lstinline+disease+ table, we know that the result set must be either empty or a
singleton list. Finally, we can initialise the model with the data retrieved
from the database along with the \lstinline+submitDisease+ function and default
values for the form elements.

\begin{lstlisting}
var (initialModel: Model) = {
  switch(diseases) {
    case [] -> Nothing
    case d :: _ ->
      var diseaseInfo =
        (diseaseData = d, databases = dbs,
         dbLinks = dbLinks, synonyms = synonyms);
      Just((diseaseInfo = diseaseInfo,
            accessionID = "", newSynonym = "",
            selectedDatabaseID = hd(dbs).database_id,
            submitDisease = submitDisease))
  }
};
\end{lstlisting}

The model is rendered to the page using a \lstinline+view+ function which takes
a model and produces some HTML to display. Interaction with the page produces
\emph{messages} which cause changes to the model. Finally, submission of the
form causes the \lstinline+submitDisease+ function to be executed, which in turn
sends a \lstinline+Submit+ message to the server to propagate the changes to the
database using the lenses.

\subsection{Discussion}
In this section, we have described part of the curation interface for a
scientific database. Our application is a tierless web application
with the client written using the Model-View-Update architecture.

Relational lenses allow seamless integration between all three layers of the
application. Lenses with dynamic predicates allow us to retrieve the
relevant data from the database; the data is used as part of a model which
is changed directly as a result of interaction with the web page; and the
updated data entries are committed directly to the database. At no
point does a user need to write a query: every interaction with the database
uses only lens primitives.

The primary limitation of the implementation at present is that it does not
currently support auto-incrementing primary keys, which are commonly used in
relational databases.
\section{Related Work}\label{sec:related}
\paragraph{Edit Lenses.}
Edit lenses are a form of bidirectional transformation where, rather than
translating directly between one data structure and another, the changes to a
data structure are tracked and then translated into changes to the other data
structure \cite{hofmann2012edit}. They can be particularly useful in the case of
\emph{symmetric lenses} in situations where neither of the data structures
contain all of the data, and thus none of the sources can be considered the
`source' \cite{hofmann2011symmetric}. Changes could be described by \emph{insert},
\emph{update} and \emph{delete} commands, and will usually result in similar
insert, update or deletion commands for the other data structure.

Relational lenses are generally not considered edit lenses, as they directly
translate the entire view to an updated source when performing get.
\emph{Incremental Relational Lenses} on the other hand take the updated view and
compute a delta which is then translated into a delta to the source tables
\cite{horn2018incremental}.

The language integration aspect of relational lenses is not dependent on the
semantics used to perform relational updates. Instead it only relies on all of
the relational lens typing rules in \secref{section:typing_rules} to be
satisfied; in this case, both the incremental and the
non-incremental relational put semantics are guaranteed to be well-behaved.

\paragraph{Put-based Lenses.}
Bidirectional lenses are often defined in a form that corresponds to
the forward (get) direction and the reverse direction. A common issue with this
approach is that a get function might correspond to several well-behaved put
functions, as illustrated by drop and join relational lenses.
As such, defining a bidirectional transformation by
only specifying the forward direction is generally not sufficient. An
alternative approach recently used is to rather have the programmer instead only
specify the \emph{put} semantics, which then uniquely define the \emph{get}
semantics \cite{FischerHP15:putback,hu2014validity}.

A \emph{putback} approach to bidirectional transformations has been recently
proposed by Asano et al. \cite{asano2018view} for relational data. Asano
et al.\ define a language which allows the specification of update queries, for
which the forward query can automatically be derived. They support splitting
views vertically for defining behaviour specific to columns and horizontally for
behaviour specific to rows. For each of the different sections of the view they
can then define the update behaviour, which can be simple checks or actual
update semantics.
\jrc{Can we give any details of differences with their approach?  Is
  there a reason to prefer theirs or ours?  Do they have an
  implementation that works on nontrivial amounts of data?  Can they
  express all of the relational lens operations?}
\paragraph{Cross-tier web programming.}

SMLServer~\cite{ElsmanH03:smlserver} was among the first functional frameworks
to allow interaction with a relational database.  Ur/Web~\cite{Chlipala15:urweb}
is a cross-tier web programming language which supports a statically-typed SQL
DSL, along with atomic transactions and functional combinators on results.
Neither framework supports language-integrated views.

Hop.js~\cite{SerranoP16:hopjs} builds on the Hop programming
language~\cite{SerranoGL06:hop} and allows
cross-tier web programming in JavaScript.
Eliom~\cite{RadannePVB16:eliom} is a cross-tier functional
programming framework building on top of the OCaml programming language. Eliom
programs can explicitly assign locations to functions and variables.
ScalaLoci~\cite{WeisenburgerKS18:scalaloci} is a Scala framework for cross-tier
application programming.  A key concept behind ScalaLoci is that data transfer
between tiers uses the \emph{reactive programming} paradigm.
Haste.App~\cite{Ekblad16:haste} is a Haskell EDSL allowing web applications to
be written directly in Haskell.
Since these are embedded DSLs or frameworks,
it becomes possible to use the database functionality provided by other
libraries, but are not aware of any work providing relational lenses as a library
in any programming language.

Task-oriented programming (TOP)~\cite{PlasmeijerLMAK12:top} is a high-level paradigm
centred around the idea of a \emph{task}, which can be thought of as a unit of
work with an observable value. TOP is implemented in the iTask
system~\cite{PlasmeijerAK07:itasks}.
An \emph{editor} is a task which interacts with a
user. \emph{Editlets}~\cite{DomoszlaiLP14:editlets} are editors with customisable
user interfaces, which can allow multiple users to interact with shared data
sources. Much like incremental relational lenses~\cite{horn2018incremental},
Editlets communicate \emph{changes} in the data as opposed to the
entire data source, however the user must specify this behaviour manually.
\section{Conclusion}\label{sec:conclusion}

Relational lenses allow updatable views of database tables.  Previous work has
concentrated on the semantics of relational lenses, but has not proposed a
concrete language design. As a result, previous implementations imposed severe
limitations on predicates, and provided limited checking of the correctness of
lens composition.

In this paper, we have presented the first full integration of relational lenses
in a functional programming language, by extending the Links
programming language. Building on the approach of~\citet{Cooper09:linq-norm}, we
use normalisation rules to rewrite functional expressions into a form amenable
to compilation to SQL and for typechecking lenses.
Furthermore, we have adapted the existing typing rules for relational lenses to
the setting of a functional programming language and proved that our adapted
rules are sound.

Previous implementations have provided only small example applications. To
demonstrate the use of relational lenses, we have implemented part of the
curation interface for a scientific database as a cross-tier web application,
and shown how relational lenses can be used in tandem with the Model-View-Update
architecture for frontend web development.

As future work, we plan to explore %
integrating relational lenses with auto-incrementing table fields.
\vspace{-0.5em}

\begin{acks}
  Thanks to the anonymous reviewers for their helpful comments and to
  Simon Harding and Jo Sharman for discussion of the GtoPdb curation
  interface. This work was supported by ERC Consolidator Grant Skye
  (grant number \grantnum{ERC}{682315}) and an ISCF Metrology
  Fellowship grant provided by the UK government's Department for
  Business, Energy and Industrial Strategy (BEIS)
\end{acks}

\bibliographystyle{ACM-Reference-Format}
\bibliography{references}

\begin{techreport}
\onecolumn
\allowdisplaybreaks
\appendix
\newpage
\section{Supplementary Material}\label{appendix:supplementary}
\subsection{Functional Dependencies}

Figure \ref{fig:armstrongs_axioms} defines the set of functional dependencies
which can be derived from a set of functional dependencies $\setfdsa$.

\begin{figure}[h!]
  ~\totheleft{Functional Dependencies}
  \begin{flalign*}
    \fdsa, \fdsb \extdef & \slaba \to \slabb
  \end{flalign*}

  ~Armstrong's Axioms \hfill \framebox{$\setfdsa \vDash \slaba \to \slabb$}

  \begin{mathpar}
    \inferrule[FD-ID] { \slaba \to \slabb \in \setfdsa } { \setfdsa \vDash \slaba \to \slabb}

    \inferrule[FD-Transitive] { \setfdsa \vDash \slaba \to \slabb \\ \setfdsa \vDash \slabb
      \to \slabc } { \setfdsa \vDash \slaba \to \slabc }

    \inferrule[FD-Refl.] { \slabb \subseteq \slaba } { \setfdsa \vDash \slaba \to \slabb }

    \inferrule[FD-Aug] { \setfdsa \vDash \slaba \to \slabb } { \setfdsa \vDash \slaba~\slabc \to \slabb~\slabc }

    \inferrule[FD-Composition] { \setfdsa \vDash \slaba \to \slabb \\ \setfdsa
      \vDash \slaba' \to \slabb' } { \setfdsa \vDash \slaba~\slaba' \to \slabb~\slabb' }

    \inferrule[FD-Decomposition] { \setfdsa \vDash \slaba \to \slabb~\slabc } { \setfdsa \vDash \slaba \to \slabb }
  \end{mathpar}
  \caption{Armstrong's functional dependency axioms.}
  \label{fig:armstrongs_axioms}
\end{figure}

\section{Proofs for Section~\ref{sec:impl:predicates}}\label{appendix:predicates}

\begin{fake}{Proposition~\ref{prop:normal-forms}}
If $x : \trowa \vdash M : A$ and $M \norm^* N \not\norm$, then $N$ is in
normal form.
\end{fake}
\begin{proof}

  As the rewrite rules can be applied anywhere in a term, it follows that if we
  cannot apply a normalisation rule to a term, then we cannot apply a
  normalisation rule to any of its subterms.

  Terms typeable by rules \textsc{T-Var}, \textsc{T-Const}, and \textsc{T-Abs}
  are already in normal form. Rules \textsc{T-Record} and \textsc{T-Op} follow
  directly from the induction hypothesis. The remainder of the cases follow.

  \begin{proofcase}{T-App}
    Assumption:
    \begin{mathpar}
      \inferrule*
      { x : \trowa \vdash M : A \to B \\ x : \trowa \vdash N : A }
      { x : \trowa \vdash M \app N : B }
    \end{mathpar}

    By the induction hypothesis, we have that $M$ and $N$ are in normal form.

    Given that $M$ is in normal form and has function type, there are the
    following possibilities:
    \begin{itemize}
      \item $M = x$, which is not possible since the only variable in the
        typing environment is $x$, and $\trowa$ is not a function type
      \item $M = c$, which is not possible since constants only have base types,
        not function types
      \item $M = \lambda x . M$. In this case, we could apply the first
        normalisation rule, which would be a contradiction.
      \item $M = x.\ell$, which is not possible since $\trowa$ only contains
        fields with base types, not function types
      \item $\ifelse{V_1}{V_2}{V_3}$. In this case, we could apply the fifth
          normalisation rule, which would be a contradiction.
      \item $\opargs{\seq{V}}$, which is not possible since the result of an
        operator must have base type.
    \end{itemize}

    Thus, a term $x : \trowa \vdash M \app N$ cannot be in normal form.
  \end{proofcase}

  \begin{proofcase}{T-Project}
    Assumption:
    \begin{mathpar}
      \inferrule*
      { \Gamma \vdash M : \Trecord{\ell_i : A_i}_{i \in I} \\ j \in I }
      { \Gamma \vdash M.\ell_j : A_j }
    \end{mathpar}

    By the IH, we have that $M$ is in normal form.  We now perform case analysis
    on $M$, giving us the following possibilities for terms in normal form which
    can have record type:

    \begin{itemize}
      \item $M = x$: We have that $x.\ell$ which is in normal form.
      \item $M = (\seq{\ell = V})$: Impossible, since it would be possible to
        reduce by the second normalisation rule
      \item $M = \ifelse{V_1}{V_2}{V_3}$: Impossible, since it would be possible
          to reduce by the sixth reduction rule
    \end{itemize}
  \end{proofcase}

  \begin{proofcase}{T-If}

    Assumption:
    \begin{mathpar}
      \inferrule*
      { \Gamma \vdash L : \Tbool \\ \Gamma \vdash M : A \\ \Gamma \vdash N : A }
      { \Gamma \vdash \ifelse{L}{M}{N} : A }
    \end{mathpar}

    Immediate by the induction hypothesis on all three subterms; normalisation
    rules 3 and 4 serve only as an optimisation.
  \end{proofcase}

\end{proof}

\section{Proofs for Section~\ref{sec:impl:predicates}}\label{appendix:typechecking}

\subsection{Predicate Lemmas}

The following definition for set ignores has been taken from \citet{bohannon2006relational}:
\begin{defin}
  $\Pset \preda {\trowa \rowtyapp \trowb}$ ignores $\trowb$ if for all $\rowa,
  \rowb$ if $\rowa[\domain \trowa] = \rowb[\domain \trowa]$ then $\rowa
  \in \Pset \preda {\trowa \rowtyapp \trowb} \iff \rowb \in \Pset \preda {\trowa
    \rowtyapp \trowb}$.
\end{defin}

\begin{lemma}[Set-Ignores]
  Suppose $\preda$ ignores $\domain \trowa$. Then $\Pset \preda \trowa$ ignores
  $\domain \trowb$.
\end{lemma}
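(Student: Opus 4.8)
The plan is to peel off both uses of ``ignores'' and reduce the claim to a single \emph{coincidence} property of evaluation. Write $\trowa = \trowa_1 \rowtyapp \trowb$. Unfolding the definition of \emph{set ignores} recalled above, the goal becomes: for all $\rowa, \rowb \in \inh{\trowa}$ with $\rowa[\domain{\trowa_1}] = \rowb[\domain{\trowa_1}]$, we have $\rowa \in \Pset{\preda}{\trowa} \iff \rowb \in \Pset{\preda}{\trowa}$; and, unfolding $\Pset{\preda}{\trowa}$ and predicate satisfaction, this is just $\subst{\preda}{\rowa} \evalsto \vtrue \iff \subst{\preda}{\rowb} \evalsto \vtrue$. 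On the hypothesis side, ``$\preda$ ignores $\domain{\trowb}$'' supplies a base record type disjoint from $\domain{\trowb}$ that types $\preda$; since $\Pset{\preda}{\trowa}$ is only well-defined when $\preda$ typechecks against $\trowa$, a routine strengthening property of the predicate type system lets us take this type to be $\trowa_1$, the restriction of $\trowa$ to $\domain{\trowa} \setminus \domain{\trowb}$, so that $\tyto{x : \trowa_1}{\preda}{\Tbool}$. It therefore suffices to show: if $\tyto{x : \trowa_1}{\preda}{\Tbool}$ and records $\rowa, \rowb$ agree on $\domain{\trowa_1}$, then $\subst{\preda}{\rowa}$ and $\subst{\preda}{\rowb}$ evaluate to the same value.

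The core of the argument is a coincidence lemma asserting that the value a well-typed predicate term evaluates to depends only on the sub-fields of its free variables named in the variables' types. I would state this via a type-indexed logical relation $\approx_A$ on closed values: equality at base types; component-wise relatedness at record types, quantified only over the fields named in the type (so that records carrying extra fields are still comparable); and, at $A \to B$, relatedness of the results on every argument of type $A$. Then I would prove the fundamental lemma by induction on the typing derivation: if $\Gamma \vdash M : A$ and $\gamma, \delta$ are closing substitutions with $\gamma(z) \approx_C \delta(z)$ for each $z : C \in \Gamma$, then $M\gamma \evalsto V$ and $M\delta \evalsto W$ (existence by the Type Soundness proposition) with $V \approx_A W$. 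Nearly every case is mechanical; the only ones with content are \ruleref{T-Var}, which appeals to relatedness of the substitutions directly, and \ruleref{T-Project}, which projects relatedness at a record type down to a single field. It is important to phrase the statement with arbitrary related substitutions rather than one fixed substitution, since in the \ruleref{T-Abs} case the induction hypothesis must be applied to the function body under extended --- but still related --- substitutions.

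To conclude, I would restrict $\rowa$ and $\rowb$ to $\domain{\trowa_1}$ --- harmless, because a predicate typed with $x : \trowa_1$ never projects any other field of $x$ --- and apply the fundamental lemma with $\Gamma = (x : \trowa_1)$ and substitutions $[x \mapsto \rowa[\domain{\trowa_1}]]$ and $[x \mapsto \rowb[\domain{\trowa_1}]]$. These are $\approx_{\trowa_1}$-related because $\trowa_1$ is a \emph{base} record type, so $\approx_{\trowa_1}$ is just equality of the $\domain{\trowa_1}$-fields, which holds by hypothesis. The lemma then gives $\subst{\preda}{\rowa[\domain{\trowa_1}]} \evalsto V$ and $\subst{\preda}{\rowb[\domain{\trowa_1}]} \evalsto W$ with $V \approx_{\Tbool} W$, i.e.\ $V = W$; by the same harmlessness remark $\subst{\preda}{\rowa} \evalsto V$ and $\subst{\preda}{\rowb} \evalsto W$ as well. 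Hence $\subst{\preda}{\rowa} \evalsto \vtrue \iff \subst{\preda}{\rowb} \evalsto \vtrue$, that is $\sat{\rowa} \iff \sat{\rowb}$, which is precisely the ``ignores'' condition for $\Pset{\preda}{\trowa}$.

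I expect the main obstacle to be the coincidence lemma at function type: a plain structural induction on $\preda$ breaks down, because an application $M \app N$ can feed a record argument into a function that inspects it, so the induction hypotheses on $M$ and $N$ alone do not suffice; the type-indexed logical relation together with the substitution-based fundamental-lemma formulation is the standard remedy and is the one genuinely non-routine ingredient. A shorter alternative, if available, is to first normalise $\preda$ to predicate normal form (legitimate since $\tyto{x : \trowa_1}{\preda}{\Tbool}$, by the Predicate normal form corollary), in which $x$ occurs only as projections $\recordproj{x}{\laba}$ with $\laba \in \domain{\trowa_1}$; the coincidence claim then collapses to a one-line induction on the grammar of predicate-normal-form terms. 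That route, however, requires separately knowing that $\norm$ preserves big-step evaluation --- true because the normalisation rules are pure conversions in a total calculus, but not established in the excerpt --- so I would take it only if that fact is on hand.
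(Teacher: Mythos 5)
Your proposal is correct and follows the same route as the paper's proof: unfold both notions of ``ignores'', use the typing judgement $\tyto{x:\trowa_1}{\preda}{\Tbool}$ supplied by the hypothesis, and reduce the claim to the fact that satisfaction of $\preda$ by a record depends only on its $\domain{\trowa_1}$-fields, so that the two records can be interchanged after restriction. The only difference is one of rigour: the paper compresses that last fact into two unproved steps labelled ``weakening'' ($\rowa \in \Pset{\preda}{\trowa_1 \rowtyapp \trowb} \implies \rowa[\domain{\trowa_1}] \in \Pset{\preda}{\trowa_1}$) and ``widening'' (the converse), whereas you correctly identify this coincidence property as the sole non-trivial content and supply an actual proof of it via a type-indexed logical relation and a fundamental lemma. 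Your diagnosis that a plain structural induction fails at application, and that the substitution-generalised logical-relations formulation (or, alternatively, prior normalisation to predicate normal form, modulo showing $\norm$ preserves evaluation) is needed, is accurate; this fills a genuine gap that the paper's one-line justifications leave open. Your handling of the existential witness in the definition of ``ignores'' (taking it to be $\trowa_1$ up to a strengthening argument) is also at least as careful as the paper's, which silently identifies the witness with $\trowa_1$.
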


\newProofContext
\begin{proof}
  \begin{flalign}
    & \preda \text{ ignores } \domain \trowa & \text{assumption} \notag \\
    & \rowa[\domain \trowa] = \rowb[\domain \trowa] & \text{assumption} \loclabel{same} \\
    & \notag \\
    & \tyto {x : \trowa} \preda \Tbool & \text{def. ignores} \loclabel{typ} \\
    & \notag \\
    & \text{Showing } \rowa
    \in \Pset \preda {\trowa \rowtyapp \trowb} \implies \rowb \in \Pset \preda {\trowa
      \rowtyapp \trowb} & \notag \\
    & \rowa \in \Pset \preda {\trowa \rowtyapp \trowb} & \notag \\
    & \qquad \implies \rowa[\domain \trowa] \in \Pset \preda \trowa &
    \text{weakening}~(\locref{typ}) \notag \\
    & \qquad \implies \rowb[\domain \trowa] \in \Pset \preda \trowa &
    (\locref{same}) \notag \\
    & \qquad \implies \rowb \in \Pset \preda {\trowa \rowtyapp
      \trowb} & \text{widening}~(\locref{typ}) \notag \\
    & \notag \\
    & \text{conversely applies for other direction.} \notag
  \end{flalign}
\end{proof}

\begin{lemma}
  Suppose two disjoint type contexts $\trowa$ and $\trowb$ as well as $\rowa \in
  \inh \trowa$ and $\rowb \in \inh \trowb$. Then $\sat {\rowa \rowapp \rowb}$
  implies $\rowa \in \set { z[\trowa] \mid z \in \inh {\trowa \rowtyapp \trowb}.~
    \sat {z} }$ and \\ $\rowb \in \set { z[\trowb] \mid z \in \inh {\trowa \rowtyapp
      \trowa}.~ \sat {z} }$.
  \label{lemma:sat-implies-in}
\end{lemma}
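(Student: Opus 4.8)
The plan is to exhibit the obvious witness, namely the concatenation $\rowa \rowapp \rowb$ itself, and check the three conjuncts of the membership claim.

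For the first part, set $z \defeq \rowa \rowapp \rowb$. First I would argue that $z \in \inh{\trowa \rowtyapp \trowb}$: from $\rowa \in \inh{\trowa}$ and $\rowb \in \inh{\trowb}$ we have $\cdot \vdash \rowa : \trowa$ and $\cdot \vdash \rowb : \trowb$, and since $\trowa$ and $\trowb$ have disjoint field names, the typing rule \textsc{T-Record} applied to the combined list of fields gives $\cdot \vdash \rowa \rowapp \rowb : \trowa \rowtyapp \trowb$, which is exactly $z \in \inh{\trowa \rowtyapp \trowb}$. Second, $\sat{z}$ is literally the hypothesis $\sat{\rowa \rowapp \rowb}$. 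Third, by the definitions of record concatenation and record restriction, $z[\domain{\trowa}] = (\rowa \rowapp \rowb)[\domain{\trowa}] = \rowa$, because the field names of $\rowa$ are precisely $\domain{\trowa}$ and restriction simply drops the fields contributed by $\rowb$. Together these give $\rowa \in \set{ z[\trowa] \mid z \in \inh{\trowa \rowtyapp \trowb}.~ \sat{z}}$.

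The second part is symmetric: the same witness $z = \rowa \rowapp \rowb$ works (here the statement's $\inh{\trowa \rowtyapp \trowa}$ should read $\inh{\trowa \rowtyapp \trowb}$), with $z[\domain{\trowb}] = \rowb$ by the same unfolding of $\cdot \rowapp \cdot$ and $\cdot[\cdot]$, while $z \in \inh{\trowa \rowtyapp \trowb}$ and $\sat{z}$ are as before.

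There is essentially no obstacle beyond unfolding the definitions of $\cdot \rowapp \cdot$, $\cdot[\cdot]$, $\inh{\cdot}$ and $\sat{\cdot}$; the only point that merits a sentence is the appeal to \textsc{T-Record} to justify that the concatenation of two well-typed records is well-typed at the concatenated type, which is where the disjointness hypothesis on $\trowa$ and $\trowb$ is used.
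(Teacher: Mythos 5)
Your proof is correct and takes essentially the same route as the paper's: both exhibit the concatenation $\rowa \rowapp \rowb$ itself as the witness inhabitant of the joined type and observe that restricting it to either component's domain recovers that component. You are slightly more explicit than the paper (spelling out the appeal to \textsc{T-Record} for well-typedness of the concatenation, and correctly flagging the typo $\trowa \rowtyapp \trowa$ in the statement), but the argument is the same.
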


\begin{proof}
  $\sat{\rowa_1 \rowapp \rowa_2}$ implies that there exists a $\rowb \in \inh{\trowa \rowtyapp
    \trowb}$ which is equal to $\rowa_1 \rowapp \rowa_2$ such that $\sat {\rowb}$. We
  know that for this $\rowb$, $\rowb[\domain {\trowa_1}]$ equals $\rowa_1$ by definition of $\cdot
  \rowapp \cdot$. Conversely the same can be shown for $\rowa_2$.
\end{proof}

\begin{lemma}
  \label{lem:pred-sat-in}
  $\rowa \in \Pset{\preda}{\trowa}$ if and only if $\sat \rowa$.
\end{lemma}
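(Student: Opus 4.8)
The plan is to prove this by directly unfolding the definition of the predicate set $\Pset{\preda}{\trowa}$. Recall that the Predicate sets definition gives $\Pset{\preda}{\trowa} = \set{\rowa \in \inh{\trowa} \midspace \sat{\rowa}}$, so membership in $\Pset{\preda}{\trowa}$ is, by construction, the conjunction of two conditions: that $\rowa$ inhabits the record type $\trowa$ (i.e.\ $\cdot \vdash \rowa : \trowa$, equivalently $\rowa \in \inh{\trowa}$), and that $\rowa$ satisfies the predicate in the sense of the Predicate satisfaction definition (i.e.\ $\subst{\preda}{\rowa} \Downarrow \vtrue$). So the lemma amounts to observing that the first conjunct is already guaranteed by the ambient hypothesis on $\rowa$, so that the second conjunct alone characterises membership.

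Concretely, for the forward direction I would assume $\rowa \in \Pset{\preda}{\trowa}$; by the definition above this immediately yields $\sat{\rowa}$ (and also $\rowa \in \inh{\trowa}$, which is not needed for the conclusion). For the reverse direction I would assume $\sat{\rowa}$ and invoke the standing assumption, implicit in this section, that $\rowa$ is a record of the type $\trowa$ under consideration, i.e.\ $\rowa \in \inh{\trowa}$; combining $\rowa \in \inh{\trowa}$ with $\sat{\rowa}$ gives $\rowa \in \Pset{\preda}{\trowa}$, once more directly by the definition of the predicate set. No induction, appeal to type soundness, or normalisation is required.

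The only point that needs care — the ``hard part,'' such as it is — is the reverse direction: $\sat{\rowa}$ on its own does not force $\rowa \in \inh{\trowa}$, since $\subst{\preda}{\rowa}$ can evaluate to $\vtrue$ even for an ill-typed $\rowa$ (for instance when $\preda$ is a constant predicate that ignores its argument). Hence the equivalence is to be read under the hypothesis $\cdot \vdash \rowa : \trowa$; with that hypothesis in place the statement is a one-line consequence of unfolding $\Pset{\preda}{\trowa}$. This lemma then serves as a bridge allowing later proofs to move freely between the set-membership notation $\rowa \in \Pset{\preda}{\trowa}$ and the satisfaction judgement $\sat{\rowa}$.
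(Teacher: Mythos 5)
Your proposal is correct and matches the paper's own proof, which is likewise a direct unfolding of the definitions of $\sat{\cdot}$, $\cdot \in \cdot$, and $\Pset{\cdot}{\cdot}$. Your additional observation that the reverse direction only holds under the ambient hypothesis $\rowa \in \inh{\trowa}$ is a fair point of care that the paper leaves implicit, but it does not change the argument.
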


\begin{proof}
  By definition of $\sat \cdot$, $\cdot \in \cdot$ and $\Pset{\cdot}{\cdot}$.
\end{proof}

\subsection{Predicate Equivalences}
\begin{lemma}
  \label{lem:set-inter}
  Suppose two predicates $\preda, \predb$ such that $x : \trowa \vdash \preda$
  and $x : \trowa \vdash \predb$. Then $\Pset{\preda \wedge \predb}{\trowa} =
  \Pset{\preda}{\trowa} \cap \Pset{\predb}{\trowa}$.
\end{lemma}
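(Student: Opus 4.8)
The plan is to reduce the claimed set equality to a pointwise statement about evaluation. Observe that both $\Pset{\preda \wedge \predb}{\trowa}$ and $\Pset{\preda}{\trowa} \cap \Pset{\predb}{\trowa}$ are, by definition, subsets of $\inh{\trowa}$; so it suffices to show, for every $\rowa \in \inh{\trowa}$, that $\subst{(\preda \wedge \predb)}{\rowa} \evalsto \vtrue$ holds if and only if both $\subst{\preda}{\rowa} \evalsto \vtrue$ and $\subst{\predb}{\rowa} \evalsto \vtrue$. Unfolding Lemma~\ref{lem:pred-sat-in} (which equates membership in $\Pset{\cdot}{\trowa}$ with predicate satisfaction) in all three directions then immediately yields the desired equality.

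For the pointwise statement, I would first note that substitution is a homomorphism on the term constructors, so $\subst{(\preda \wedge \predb)}{\rowa} = \subst{\preda}{\rowa} \wedge \subst{\predb}{\rowa}$, recalling that $\wedge$ is simply infix notation for the $n$-ary operator form $\opargs{\seq{M}}$ with $\genop$ the Boolean conjunction. Since $\cdot \vdash \rowa : \trowa$ and both predicates are Boolean-typed under $x : \trowa$, the standard substitution lemma gives $\cdot \vdash \subst{\preda}{\rowa} : \Tbool$ and $\cdot \vdash \subst{\predb}{\rowa} : \Tbool$. By type soundness there exist values $V_1, V_2$ with $\subst{\preda}{\rowa} \evalsto V_1$, $\subst{\predb}{\rowa} \evalsto V_2$, and $\cdot \vdash V_i : \Tbool$; hence each $V_i \in \set{\vtrue, \vfalse}$.

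Applying the evaluation rule for operators then gives $\subst{\preda}{\rowa} \wedge \subst{\predb}{\rowa} \evalsto \opargshat{V_1, V_2}$, i.e.\ $\hat{\wedge}\{V_1, V_2\}$, and by the definition of the conjunction denotation this value is $\vtrue$ exactly when $V_1 = V_2 = \vtrue$ and $\vfalse$ otherwise. Combining this with determinism of $\evalsto$ — evident from inspection of the big-step rules, since each term shape matches at most one rule modulo the value of a guard — yields the biconditional, and we are done.

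The argument is essentially bookkeeping; the only point requiring a little care is the appeal to determinism of $\evalsto$, which is what lets us conclude that "$\rowa$ satisfies $\preda$", defined via the existence of a derivation $\subst{\preda}{\rowa} \evalsto \vtrue$, is incompatible with $\subst{\preda}{\rowa} \evalsto \vfalse$. If one prefers not to invoke determinism in full, it is enough to prove the small auxiliary fact that $\evalsto$ restricted to closed base-typed terms is a partial function, which follows by a routine induction on evaluation derivations.
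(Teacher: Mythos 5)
Your proof is correct and follows essentially the same route as the paper's: both reduce the set equality to the pointwise claim that a record satisfies $\preda \wedge \predb$ iff it satisfies each conjunct, and then use Lemma~\ref{lem:pred-sat-in} to convert satisfaction into set membership. The only difference is one of detail: where the paper dismisses the pointwise step as ``def.\ $\cdot \wedge \cdot$'', you justify it properly via substitution, type soundness, the operator evaluation rule, and determinism of $\evalsto$.
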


\newProofContext
\begin{proof}
  \small
  \begin{flalign}
    & x : \trowa \vdash \preda & \text{assumption} \notag \\
    & x : \trowa \vdash \predb & \text{assumption} \notag \\
    & \notag \\
    & \Pset{\preda \wedge \predb}{\trowa} & \notag \\
    & \qquad = \set{\rowa \mid \forall \rowa \in \inh \trowa.~ \sat[\preda
  \wedge \predb] \rowa} & \text{def.}~ \Pset{\cdot}{\cdot} \notag \\
    & \qquad = \set{\rowa \mid \forall \rowa \in \inh \trowa.~ \sat \rowa
\text{ and } \sat [\predb] \rowa} & \text{def.}~ \cdot \wedge \cdot \notag
    \\
    & \qquad = \set{\rowa \mid \forall \rowa \in \inh \trowa.~ \rowa \in
\Pset{\preda}{\trowa} \text{ and } \rowb \in \Pset{\predb}{\trowa}} & \text{Lemma }
    \ref{lem:pred-sat-in} \notag \\
                                                                            &
\qquad = \Pset{\preda}{\trowa} \cap \Pset{\predb}{\trowa} & \text{def.}~\cdot \cap \cdot
    \notag
  \end{flalign}
\end{proof}

\begin{lemma}
  \label{lem:set-join}
  Suppose two predicates $\preda, \predb$ such that $x : \trowa \vdash \preda$
  and $x : \trowb \vdash \predb$. Then $\Pset{\preda \wedge \predb}{\trowa} = 
  \Pset{\preda}{\trowa} \Join \Pset{\predb}{\trowb}$.
\end{lemma}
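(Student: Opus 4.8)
The plan is to reduce the claim to the intersection lemma already in hand and then relocate the record restrictions. Reading the left-hand side, as in the analogous Lemma~\ref{lem:set-inter}, over the combined record type $\trowa \rowtyapp \trowb$ (which presupposes $\domain{\trowa}$ and $\domain{\trowb}$ disjoint, as the notation $\rowtyapp$ already does), the goal becomes $\Pset{\preda \wedge \predb}{\trowa \rowtyapp \trowb} = \Pset{\preda}{\trowa} \Join \Pset{\predb}{\trowb}$. First I would observe that since predicates are used in predicate normal form, $x$ occurs in $\preda$ (and in $\predb$) only inside projections $x.\laba$ with $\laba \in \domain{\trowa}$ (resp.\ $\laba \in \domain{\trowb}$); hence both $\preda$ and $\predb$, and so $\preda \wedge \predb$, are typeable under $x : \trowa \rowtyapp \trowb$, so $\Pset{\preda \wedge \predb}{\trowa \rowtyapp \trowb}$ is well-defined and Lemma~\ref{lem:set-inter} applies with ambient record type $\trowa \rowtyapp \trowb$.

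The calculation I have in mind then reads:
\begin{align*}
  \Pset{\preda \wedge \predb}{\trowa \rowtyapp \trowb}
    & = \Pset{\preda}{\trowa \rowtyapp \trowb} \cap \Pset{\predb}{\trowa \rowtyapp \trowb}
      && \text{(Lemma~\ref{lem:set-inter})} \\
    & = \set{ \rowa \in \inh{\trowa \rowtyapp \trowb} \midspace \rowa[\domain{\trowa}] \in \Pset{\preda}{\trowa} \wedge \rowa[\domain{\trowb}] \in \Pset{\predb}{\trowb} }
      && \text{(weakening / widening)} \\
    & = \Pset{\preda}{\trowa} \Join \Pset{\predb}{\trowb}
      && \text{(definition of set join).}
\end{align*}
The first equality is Lemma~\ref{lem:set-inter} verbatim. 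The last is just unfolding the definition of $\cdot \Join \cdot$, using that every $\rowa \in \inh{\trowa \rowtyapp \trowb}$ restricts on $\domain{\trowa}$ (resp.\ $\domain{\trowb}$) to an inhabitant of $\trowa$ (resp.\ $\trowb$), and that conversely every element of the set join lies in $\inh{\trowa \rowtyapp \trowb}$.

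The middle equality carries the only real content, so I expect it to be the main obstacle. I need that, for $\preda$ typeable under $x : \trowa$ and any $\rowa \in \inh{\trowa \rowtyapp \trowb}$, membership $\rowa \in \Pset{\preda}{\trowa \rowtyapp \trowb}$ is equivalent to $\rowa[\domain{\trowa}] \in \Pset{\preda}{\trowa}$ --- intuitively, that $\sat[\preda]{\rowa}$ depends only on the fields of $\rowa$ lying in $\domain{\trowa}$. This is precisely the \emph{weakening} / \emph{widening} reasoning already used in the proof of the Set-Ignores lemma: it follows by a routine induction on the typing derivation of $\preda$ (in predicate normal form), the only case inspecting $\rowa$ being the projection $x.\laba$ with $\laba \in \domain{\trowa}$, on which $\rowa$ and $\rowa[\domain{\trowa}]$ coincide. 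The symmetric statement for $\predb$ and $\trowb$ follows by commutativity of record concatenation, and Lemma~\ref{lem:pred-sat-in} supplies the passage between $\sat{\cdot}$ and $\cdot \in \Pset{\cdot}{\cdot}$. Composing the three equalities yields the lemma.
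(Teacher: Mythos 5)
Your proposal is correct and follows essentially the same route as the paper's proof: unfold the set over the combined record type, split the conjunction, observe that satisfaction of $\preda$ (resp.\ $\predb$) depends only on the restriction to $\domain{\trowa}$ (resp.\ $\domain{\trowb}$) by the typing assumptions, and reassemble via the definition of $\Join$ using Lemma~\ref{lem:pred-sat-in}. The only cosmetic difference is that you route the conjunction-splitting step through Lemma~\ref{lem:set-inter} applied at the weakened type $\trowa \rowtyapp \trowb$, whereas the paper unfolds $\wedge$ directly; the substantive content (the restriction/weakening step) is identical in both.
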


\newProofContext
\begin{proof}
  \footnotesize
  \begin{flalign}
    & x : \trowa \vdash \preda & \text{assumption} \loclabel{typ-preda} \\
    & x : \trowb \vdash \predb & \text{assumption} \loclabel{typ-predb} \\
    & \notag \\
    & \Pset{\preda \wedge \predb}{\trowa} & \notag \\
    & \qquad = \set{\rowa \mid \forall \rowa \in \inh {\trowa \cup \trowb}.~
      \sat[\preda
  \wedge \predb] \rowa} & \text{def.}~ \Pset{\cdot}{\cdot} \notag \\
    & \qquad = \set{\rowa \mid \forall \rowa \in \inh \trowa.~ \sat \rowa
\text{ and } \sat [\predb] \rowa} & \text{def.}~ \cdot \wedge \cdot \notag
    \\
    & \qquad = \set{\rowa \mid \forall \rowa \in \inh \trowa.~ \sat
      {\rowa[\domain \trowa]} \text{ and } \sat [\predb] {\rowa[\domain
      \trowb]}} &
    (\locref{typ-preda}, \locref{typ-predb}) \notag
    \\
    & \qquad = \set{\rowa \mid \forall \rowa \in \inh \trowa.~ \rowa[\domain \trowa]
      \in \Pset{\preda}{\trowa} \text{ and } \rowb[\domain \trowb] \in
    \Pset{\predb}{\trowb}} &
    \text{Lemma }
    \ref{lem:pred-sat-in} \notag \\
                    & \qquad = \Pset{\preda}{\trowa} \Join
    \Pset{\predb}{\trowb} & \text{def.}~\cdot \Join \cdot
    \notag
  \end{flalign}
\end{proof}

\begin{lemma}
  Suppose $\preda$ such that $\ljd \lampreda \trowa {\set \laba} $ and
  $\defv[\trowa, {\set \laba}]
  \lampreda {\set{\laba = \vala}}$ and $x : (R \cup
  \set{\ell}) \vdash \preda : \Tbool$. Then $\Pset{\subst[x.\laba] \preda
  \vala}{\trowa}
  = \Pset{P}{\trowa}[\domain \trowa]$.
  \label{lem:pred-subst-proj}
\end{lemma}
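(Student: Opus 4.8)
The plan is to reduce the statement to a set-theoretic identity about $\Pset{\preda}{\trowa \rowtyapp \Trecord{\laba : \typpa}}$ and then discharge that identity using lossless join decomposition (Lemma~\ref{lem:ljd-sound}) and the default-value property (Lemma~\ref{lem:defv-sound}). Here I read the right-hand side $\Pset{\preda}{\trowa}[\domain{\trowa}]$ as $\Pset{\preda}{\trowa \rowtyapp \Trecord{\laba : \typpa}}[\domain{\trowa}]$, i.e.\ $\preda$ is typed over the record type $\trowa \rowtyapp \Trecord{\laba : \typpa}$ (the $R \cup \set{\laba}$ of the statement), whereas $\subst[x.\laba]{\preda}{\vala}$ is typed over $\trowa$ alone. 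The first task is to establish the \emph{substitution correspondence}: for every $\rowa \in \inh{\trowa}$,
\[
  \rowa \in \Pset{\subst[x.\laba]{\preda}{\vala}}{\trowa}
  \iff
  \rowa \rowapp \Trecord{\laba = \vala} \in \Pset{\preda}{\trowa \rowtyapp \Trecord{\laba : \typpa}}.
\]

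Unfolding $\Pset{\cdot}{\cdot}$ and $\sat{\cdot}$ and appealing to Lemma~\ref{lem:pred-sat-in}, this correspondence reduces to the evaluation fact that $\subst{(\subst[x.\laba]{\preda}{\vala})}{\rowa} \evalsto \vtrue$ exactly when $\subst{\preda}{\rowa \rowapp \Trecord{\laba = \vala}} \evalsto \vtrue$. Intuitively both sides instantiate the only free variable $x$ by the same record $\rowa \rowapp \Trecord{\laba = \vala}$: substituting $\vala$ for $x.\laba$ and then $\rowa$ for the remaining occurrences of $x$ has the same effect as substituting $\rowa \rowapp \Trecord{\laba = \vala}$ for $x$, because $x$ has base record type and so every occurrence of $x$ is consumed by a projection. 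I expect this to be the main obstacle: the term $\subst[x.\laba]{\preda}{\vala}$ substitutes for a compound term (a projection) rather than a variable, so I would first observe that it is well defined when $\preda$ is in predicate normal form (by Proposition~\ref{prop:normal-forms} and the grammar of $P$ in Figure~\ref{fig:normal-forms}, every occurrence of $x$ is then of the shape $x.\laba''$), and then prove the commuting-substitutions lemma by induction on that normal form using the evaluation rules of Figure~\ref{fig:normal-forms}.

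With the correspondence in hand, the $\subseteq$ inclusion is immediate and uses neither hypothesis: if $\rowa \in \Pset{\subst[x.\laba]{\preda}{\vala}}{\trowa}$ then $\rowa \rowapp \Trecord{\laba = \vala} \in \Pset{\preda}{\trowa \rowtyapp \Trecord{\laba : \typpa}}$, and restricting this record to $\domain{\trowa}$ (which recovers $\rowa$) gives $\rowa \in \Pset{\preda}{\trowa \rowtyapp \Trecord{\laba : \typpa}}[\domain{\trowa}]$ by the definition of predicate set restriction. For $\supseteq$, suppose $\rowa \in \Pset{\preda}{\trowa \rowtyapp \Trecord{\laba : \typpa}}[\domain{\trowa}]$. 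By Lemma~\ref{lem:ljd-sound} with $\trowa_1 = \trowa$ and $\trowa_2 = \Trecord{\laba : \typpa}$ --- whose hypothesis is the given $\ljd{\lampreda}{\trowa}{\Trecord{\laba : \typpa}}$ --- we have $\Pset{\preda}{\trowa \rowtyapp \Trecord{\laba : \typpa}} = \Pset{\preda}{\trowa \rowtyapp \Trecord{\laba : \typpa}}[\domain{\trowa}] \Join \Pset{\preda}{\trowa \rowtyapp \Trecord{\laba : \typpa}}[\set{\laba}]$, so by the definition of $\cdot \Join \cdot$ it suffices to show $\rowa \in \Pset{\preda}{\trowa \rowtyapp \Trecord{\laba : \typpa}}[\domain{\trowa}]$ --- which is our assumption --- and $\Trecord{\laba = \vala} \in \Pset{\preda}{\trowa \rowtyapp \Trecord{\laba : \typpa}}[\set{\laba}]$. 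The latter is exactly the conclusion of Lemma~\ref{lem:defv-sound} applied to the given $\defv[\trowa, \set{\laba}]{\lampreda}{\set{\laba = \vala}}$. Thus $\rowa \rowapp \Trecord{\laba = \vala} \in \Pset{\preda}{\trowa \rowtyapp \Trecord{\laba : \typpa}}$, and the substitution correspondence yields $\rowa \in \Pset{\subst[x.\laba]{\preda}{\vala}}{\trowa}$, completing the inclusion.

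In summary, the two inclusions are routine set algebra once the cited lemmas are invoked; the only genuine difficulty is the commuting-substitutions argument underlying the substitution correspondence, which I would isolate as a standalone lemma about the projection-substitution $\subst[x.\laba]{\cdot}{\cdot}$ on predicate-normal forms.
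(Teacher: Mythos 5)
Your proof is correct and follows essentially the same route as the paper's: both hinge on the correspondence $\sat[\subst[x.\laba]{\preda}{\vala}]{\rowa} \iff \sat{\rowa \rowapp \Trecord{\laba = \vala}}$ and then use the lossless-join decomposition together with the default-value property to pass between the specific default $\vala$ and the restriction to $\domain{\trowa}$. The only differences are presentational: you argue by two inclusions and invoke Lemmas~\ref{lem:ljd-sound} and~\ref{lem:defv-sound} where the paper unfolds the raw definitions in a single equational chain, and you rightly isolate the substitution-commutation step (justified via predicate normal form) that the paper silently absorbs into the definition of $\Pset{\cdot}{\cdot}$.
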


\newProofContext
\begin{proof}
  \begin{flalign}
    & \Pset{\subst[x.\ell] \preda \vala}{\trowa} &  \notag \\
    & \qquad = \set{ \rowa \mid \forall \rowa \in \inh{\trowa}.~ \sat {\rowa
\uplus \set{\ell = \vala}}} & \text{def.}~\Pset{\cdot}{\cdot} \notag \\
    & \qquad = \set{ \rowa \mid \forall \rowa \in \inh{\trowa}.~\forall \rowb
      \in \inh {\set{\ell}}.~ \sat {\rowa
        \uplus \rowb}} & \defv[\trowa, {\set \laba}] \lampreda {\set{\laba = \vala}},  \ljd \lampreda \trowa {\set \ell} \notag \\
    & \qquad = \set{ \rowa[\domain \trowa] \mid \forall \rowa \in \inh{\trowa \uplus
        \set \ell}.~ \sat {\cup\rowa
      }} & \notag \\
         & \qquad = \Pset{\preda}{\trowa} [\domain \trowa] & \text{def.}~\cdot[\cdot]\notag
  \end{flalign}
\end{proof}

\subsection{Lens Translation}

\begin{figure}[t]
  \begin{mathpar}
    \inferrule[T-Select-RL]{\rlsort R = (U, \setpredb, \setfdsa) \\\\
      \rlsort S = (U, \setpreda \cap \setpredb, \setfdsa)
      \\\\ \setfdsa \text{ is in tree form} \\ \setpredb \text{ ignores } \outputs \setfdsa}
      {\bpvselect{R}{\setpreda}{S} \in \rldisjoint{\set{R}}{\set{S}}}

    \inferrule[T-Join-RL]
    {
      \rlsort{R} = (U, \setpreda, \setfdsa) \\
      \rlsort{S} = (V, \setpredb, \setfdsb) \\\\
      \rlsort{T} = (UV, \setpreda \Join \setpredb, \setfdsa \cup \setfdsb) \\\\
      \setfdsb \vDash U \cap V \to V \\\\
      \setfdsa \text{ is in tree form} \\
      \setfdsb \text{ is in tree form} \\\\
      \setpreda \text{ ignores } \outputs \setfdsa \\ \setpredb \text{ ignores } \outputs \setfdsb
    }
    {\bpvjoin{R}{S}{T} \in \rldisjoint {\set{R, S}} {\set{T}}}

    \inferrule
      [T-Drop-RL]
      {
        \rlsort{R} = (U, \setpreda, \setfdsa) \\\\
        A \in U \\
        \setfdsa \equiv \setfdsb \cup {X \to A} \\\\
        \rlsort{S} = (U-A, \setpreda[U-A], \setfdsb) \\\\
        \setpreda = \setpreda[U-A] \Join \setpreda[A] \\
        \set{A = a} \in \setpreda[A]
      }
      { \bpvdrop{A}{X}{a}{R}{S} \in \rldisjoint {\set{R}} {\set{S}}}

    \inferrule
      [T-Compose-RL]
      { \bpvlensa \in \Sigma \Leftrightarrow \Sigma' \\
       \bpvlensb \in \Sigma' \Leftrightarrow \Delta}
      {\bpvcompose{\bpvlensa}{\bpvlensb} \in \Sigma \Leftrightarrow \Delta}

    \inferrule
      [T-Id-RL]
      { }
      {\lensid \in \Sigma \to \Sigma}
  \end{mathpar}
  \caption{Lens typing rules as defined by \citet{bohannon2006relational}.}
  \label{fig:bpv-lenses}
\end{figure}

\subsubsection{Primitive Lens}

\begin{lemma}
  \label{lem:primitive-wb}
  Suppose $\tyto \tyctx {\lens \expra \setfdsa} {\lensty \trowa \vtrue
    \setfdsa}$. Then $\ltrans {\lens \expra \setfdsa} = \ltriple \lensid \relb$
    and $\rlsort \relb = (\domain {\trowa}, \Pset{\vtrue}{\trowa}, \setfdsa)$.
\end{lemma}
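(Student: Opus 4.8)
The plan is to treat this as the base case of the translation-soundness theorem: there is no induction to perform here, so it suffices to invert the typing derivation once and then unfold the relevant definitions.

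First I would observe that the only rule which can conclude $\tyto \tyctx {\lens \expra \setfdsa} {\lensty \trowa \vtrue \setfdsa}$ is \ruleref{T-Lens}; inverting it yields $\tyto \tyctx \expra {\tablety{\trowa}}$ together with the side condition $\bigcup \nodes{\setfdsa} \subseteq \aliases{\trowa}$, and forces the schema component of the lens type to be $\set{\rela}$ for the table name $\rela$ of $\expra$. Since the only way to assign a term the type $\tablety{\trowa}$ is via \ruleref{T-Table}, $\expra$ is a table term $\tablee \rela \trowa$. Reading off the flattening translation from Figure~\ref{fig:bpv-syntax}, $\ltrans{\lens{\rela}{\setfdsa}} = \set{\rela} / \lensid / \rela$, so in the notation of the statement $\lensa = \lensid$ and $\relb = \rela$.

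Next I would discharge the two conjuncts. For the typing conjunct $\lensa \in \schemaa \Leftrightarrow \set{\relb}$, i.e.\ $\lensid \in \set{\rela} \Leftrightarrow \set{\rela}$, I apply \ruleref{T-Id-RL} directly (modulo the purely notational identification of its $\to$ with the $\Leftrightarrow$ used in the sequential-lens judgement). For the sort conjunct, I appeal to how the translation installs the relation named $\rela$ from the table $\tablee \rela \trowa$: its attribute set is $\domain{\trowa}$, its functional dependencies are exactly the set $\setfdsa$ supplied to \ruleref{T-Lens} (well-formed as part of a sort because $\bigcup \nodes{\setfdsa} \subseteq \aliases{\trowa}$), and its predicate-set component is the set of all records of type $\trowa$. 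It then remains to check that this last set is precisely $\Pset{\vtrue}{\trowa}$: since $\vtrue$ is a closed constant, $\subst{\vtrue}{\rowa} \evalsto \vtrue$ for every $\rowa \in \inh{\trowa}$, so $\sat[\vtrue]{\rowa}$ holds for all such $\rowa$ and hence $\Pset{\vtrue}{\trowa} = \inh{\trowa}$ by the definition of predicate sets. Therefore $\rlsort{\rela} = (\domain{\trowa}, \Pset{\vtrue}{\trowa}, \setfdsa)$, as required.

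The argument is almost entirely bookkeeping; the only point that deserves attention is the alignment between the source-language declaration $\tablee \rela \trowa$ and the relation $\rela$ in the sequential-lens world — in particular that an \emph{unconstrained} table corresponds to the predicate set $\Pset{\vtrue}{\trowa}$ (equivalently $\inh{\trowa}$) and to the functional dependencies $\setfdsa$ carried by the lens primitive — together with the harmless notational mismatch between $\to$ and $\Leftrightarrow$ in \ruleref{T-Id-RL}, both resolved once the definitions are unfolded.
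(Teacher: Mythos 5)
Your proposal is correct and follows essentially the same route as the paper: invert \ruleref{T-Lens}, read off the flattening translation to obtain $\set{\rela}/\lensid/\rela$, apply \ruleref{T-Id-RL}, and take the sort of $\rela$ to be $(\domain{\trowa}, \Pset{\vtrue}{\trowa}, \setfdsa)$. The only difference is that the paper simply \emph{defines} the sort of $\rela$ to be this triple, whereas you additionally justify that the predicate component of an unconstrained table is $\inh{\trowa} = \Pset{\vtrue}{\trowa}$ — a harmless elaboration of the same argument.
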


\begin{proof}
  \begin{flalign}
    & \inferrule[T-Lens] { \tyto{\tyctx}{\expra}{\tablety \trowa} \\ \bigcup
      \nodes{\setfdsa} \subseteq \aliases{\trowa} } {
      \tyto{\tyctx}{\lens \expra \setfdsa}{\lensty[\set{\rela}]{\trowa}{\vtrue}{\setfdsa}} } &
    \text{assumption} \notag \\
    & \notag \\
    & \rlsort \rela = (\domain \trowa, \Pset{\vtrue}{\trowa}, \setfdsa) &
    \text{define} \notag \\
    & \lensa = \lensid \in \set{\rela} \Leftrightarrow \set{\rela} &
    \textsc{T-Id-RL} \notag \\
    & \ltrans {\lens \expra \setfdsa} = \ltriple[\set{\rela}] \lensa \rela \notag
  \end{flalign}
\end{proof}

\subsubsection{Select Lens}

\begin{lemma}
  \label{lem:select-wb}
  Suppose $\tyto \tyctx {\lensselect \expra \predb} {\lensty \trowa {\preda
      \wedge \predb} \setfdsa}$ and $\ltrans \expra = \ltriple {\lensa} \rela$
  such that $\lensa \in \schemaa \Leftrightarrow \set{\rela}$ and $\rlsort \rela
  = (\domain \trowa, \Pset{\preda}{\trowa}, \setfdsa)$. Then $\ltrans {\lensselect
    \expra \predb} = \ltriple {\lensb} {\relb}$ such that $\lensb \in \schemaa \Leftrightarrow
  \set{\relb}$ and $\rlsort \relb = (\domain \trowa, \Pset{\preda \wedge
  \predb}{\trowa}, \setfdsa)$.
\end{lemma}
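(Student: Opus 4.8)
The plan is to run the select case of the induction that proves the Soundness of Translation theorem: unfold the flattening translation, and then discharge the side conditions of \ruleref{T-Select-RL} and \ruleref{T-Compose-RL} using the predicate lemmas already established. First I would invert the derivation of $\tyto \tyctx {\lensselect \expra \predb} {\lensty \trowa {\preda \wedge \predb} \setfdsa}$; the only applicable rule is \ruleref{T-Select}, so inversion yields $\tyto \tyctx \expra {\lensty \trowa \preda \setfdsa}$, $\tyto {x : \trowa} \predb \Tbool$, that $\setfdsa$ is in tree form, and that $\preda$ ignores $\outputs \setfdsa$ (the well-formedness invariant on lens types additionally gives $\tyto {x : \trowa} \preda \Tbool$). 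The hypotheses of the lemma supply $\ltrans \expra = \ltriple \lensa \rela$ with $\lensa \in \schemaa \Leftrightarrow \set \rela$ and $\rlsort \rela = (\domain \trowa, \Pset \preda \trowa, \setfdsa)$.

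Next I would unfold the flattening translation of $\lensselect \expra \predb$, obtaining $\ltrans{\lensselect \expra \predb} = \ltriple {\lensb} \relb$ where $\lensb \defeq \bpvcompose \lensa {\bpvselect \rela {\Pset \predb \trowa} \relb}$ and $\relb$ is a globally fresh relation, and I would declare the sort of $\relb$ to be $(\domain \trowa, \Pset \predb \trowa \cap \Pset \preda \trowa, \setfdsa)$. The core step is then to derive $\bpvselect \rela {\Pset \predb \trowa} \relb \in \set \rela \Leftrightarrow \set \relb$ by \ruleref{T-Select-RL}: the sort of $\rela$ is exactly the one given by the inductive hypothesis; the sort of $\relb$ is the one just chosen; $\setfdsa$ is in tree form by inversion; and the premise that $\Pset \preda \trowa$ ignores $\outputs \setfdsa$ follows from "$\preda$ ignores $\outputs \setfdsa$" by the Set-Ignores lemma, using that $\outputs \setfdsa \subseteq \domain \trowa$ (inherited from \ruleref{T-Lens} together with tree form) so that the required decomposition of $\trowa$ along $\outputs \setfdsa$ is available. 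Global uniqueness of $\relb$ ensures it clashes with neither $\rela$ nor any relation of $\schemaa$, which is what the schema-disjointness side conditions of the sequential-lens judgements demand. Composing with $\lensa \in \schemaa \Leftrightarrow \set \rela$ via \ruleref{T-Compose-RL} then yields $\lensb \in \schemaa \Leftrightarrow \set \relb$.

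Finally I would reconcile the sort of $\relb$ with the statement: by commutativity of $\cap$ and Lemma~\ref{lem:set-inter}, $\Pset \predb \trowa \cap \Pset \preda \trowa = \Pset {\preda \wedge \predb} \trowa$, so $\rlsort \relb = (\domain \trowa, \Pset {\preda \wedge \predb} \trowa, \setfdsa)$, as required. The main obstacle will be the "ignores" premise of \ruleref{T-Select-RL}: it requires bridging the functional-calculus notion of a predicate ignoring a set of fields and the set-theoretic notion used by \citet{bohannon2006relational}, and checking that $\trowa$ decomposes appropriately along $\outputs \setfdsa$. Everything else is a direct unfolding of the translation combined with the predicate equivalences established earlier, and the resulting argument plugs straight into the inductive proof of the Soundness of Translation theorem.
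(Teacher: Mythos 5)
Your proposal matches the paper's proof essentially step for step: invert \ruleref{T-Select}, unfold the flattening translation to obtain $\bpvcompose{\lensa}{\bpvselect{\rela}{\Pset{\predb}{\trowa}}{\relb}}$ with a fresh $\relb$, define $\rlsort\relb$ via Lemma~\ref{lem:set-inter}, discharge the ignores premise of \ruleref{T-Select-RL} with the Set-Ignores lemma, and close with \ruleref{T-Compose-RL}. The only divergence is cosmetic: you correctly identify the ignores premise as applying to the source relation's predicate set $\Pset{\preda}{\trowa}$, which is arguably cleaner than the paper's own statement of that step.
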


\newProofContext
\begin{proof}
  \begin{flalign}
    & \inferrule{\tyto \tyctx \expra {\lensty \trowa \preda \setfdsa} \\ \tyto
      {x : \trowa} \predb \Tbool \\\\
      \setfdsa \text{ is in tree form} \\
      \preda \text{ ignores }
      \outputs \setfdsa }{\tyto \tyctx {\lensselect \expra \predb}{\lensty \trowa
        {\preda \wedge
          \predb} \setfdsa}} & \text{assumption} \loclabel{types} \\
    & \ltrans \expra = \ltriple \lensa \rela & \text{assumption} \loclabel{M-transl} \\
    & \lensa \in \schemaa \Leftrightarrow \set S & \text{assumption} \loclabel{M-lens} \\
    & \rlsort {\rela} = (\domain \trowa, \Pset{P}{\trowa}, \setfdsa) &
    \text{assumption} \loclabel{sort-R} \\
    \notag \\
    & \rlsort {\relb} = (\domain \trowa, \Pset{P \wedge Q}{\trowa}, \setfdsa) & \text{define}
    \loclabel{sort-S-van} \\
    & \phantom{\rlsort \relb} = (\domain \trowa, \Pset {\preda} \trowa \cap
    \Pset{\predb}{\trowa}, \setfdsa)
    & \text{Lemma \ref{lem:set-inter}} \loclabel{sort-S} \\
     & \Pset{Q}{\trowa} \text{ ignores } \outputs \setfdsa &
    \textsc{Set-Ignores}~(\locref{types})
    \loclabel{Qs-ignores} \\
    \notag \\
     & L_s = \texttt{select from } \rela \texttt{ where } \Pset{\predb}{\trowa} \texttt{
      as } \relb \in \set \rela \Leftrightarrow \set \relb &
    \textsc{T-Select-RL}~(\locref{sort-R}, \locref{sort-S},
    \locref{Qs-ignores}) \loclabel{expr-lens} \\
    & \ltrans {\lensselect \expra \predb} = \ltriple {L; L_s} {\relb} & \text{def.}~ \ltrans \cdot \notag \\
    & L; L_s \in \schemaa \Leftrightarrow \set \relb &
    \textsc{T-Compose-RL}~(\locref{M-lens}, \locref{expr-lens}) \notag
  \end{flalign}
\end{proof}

\subsubsection{Join Lens}

\begin{lemma}
  \label{lem:join-wb}
  Suppose $\tyto \tyctx {\lensjoindl \expra \exprb} {\lensty[\schemaa \uplus
    \schemab] {\trowa \cup \trowb} {\preda \wedge \predb} {\setfdsa \cup
      \setfdsb}}$ and $\ltrans M = \ltriple[\schemaa] {\lensa} {\rela_1}$ and
  $\ltrans N = \ltriple[\schemab] {\lensb} {\rela_2}$ such that $\lensa_1 \in
  \schemaa \Leftrightarrow \set {\rela_1}$, $\rlsort {\rela_1} = (\domain
  \trowa, \Pset{\preda}{\trowa}, \setfdsa)$ and $\lensa_2 \in \schemab \Leftrightarrow
  \set {\rela_2}$, $\rlsort {\rela_2} = (\domain \trowb, \Pset{\predb}{\trowb},
  \setfdsb)$. Then $\ltrans {\lensjoindl \expra \exprb} = \ltriple[\schemaa
  \uplus \schemab] \lensa \relb$ such that $\lensa \in {\schemaa \uplus
    \schemab} \Leftrightarrow \set \relb$ and $\rlsort \relb = (\domain {\trowa
  \rowtyapp \trowb}, \Pset{\preda \wedge \predb}{\trowa \rowtyapp \trowb}, \setfdsa \cup \setfdsb)$.
\end{lemma}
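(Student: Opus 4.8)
The plan is to treat this lemma exactly as the join case of the induction underlying the \textbf{Soundness of Translation} theorem, with the displayed hypotheses on $\ltrans{\expra}$ and $\ltrans{\exprb}$ playing the role of the two induction hypotheses; the argument runs parallel to that of Lemma~\ref{lem:select-wb}. First I would invert the given typing $\tyto{\tyctx}{\lensjoindl{\expra}{\exprb}}{\lensty[\schemaa \uplus \schemab]{\trowa \cup \trowb}{\preda \wedge \predb}{\setfdsa \cup \setfdsb}}$: it can only have been derived by \ruleref{T-Join-Left}, yielding $\tyto{\tyctx}{\expra}{\lensty{\trowa}{\preda}{\setfdsa}}$, $\tyto{\tyctx}{\exprb}{\lensty[\schemab]{\trowb}{\predb}{\setfdsb}}$, the dependency $\setfdsb \vDash \aliases{\trowa} \cap \aliases{\trowb} \to \aliases{\trowb}$, tree-formedness of $\setfdsa$ and of $\setfdsb$, the two conditions $\ignoresoutputs{\preda}{\setfdsa}$ and $\ignoresoutputs{\predb}{\setfdsb}$, and $\schemaa \cap \schemab = \emptyset$. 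Unfolding the flattening translation then gives $\ltrans{\lensjoindl{\expra}{\exprb}} = \ltriple[\schemaa \uplus \schemab]{\lensa; \lensb; \bpvjoin{\rela_1}{\rela_2}{\relb}}{\relb}$, with $\ltrans{\expra} = \ltriple{\lensa}{\rela_1}$, $\ltrans{\exprb} = \ltriple[\schemab]{\lensb}{\rela_2}$, and $\rela_1,\rela_2,\relb$ globally fresh (hence pairwise distinct and occurring in neither $\schemaa$ nor $\schemab$), so the witness sequential lens is $\lensa; \lensb; \bpvjoin{\rela_1}{\rela_2}{\relb}$.

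Next I would type the primitive $\bpvjoin{\rela_1}{\rela_2}{\relb}$ by \ruleref{T-Join-RL}. The hypotheses give $\rlsort{\rela_1} = (\domain{\trowa}, \Pset{\preda}{\trowa}, \setfdsa)$ and $\rlsort{\rela_2} = (\domain{\trowb}, \Pset{\predb}{\trowb}, \setfdsb)$, so I instantiate the rule with $U = \domain{\trowa}$, $V = \domain{\trowb}$, $\setpreda = \Pset{\preda}{\trowa}$, $\setpredb = \Pset{\predb}{\trowb}$, assigning $\relb$ the sort $(U \cup V, \setpreda \Join \setpredb, \setfdsa \cup \setfdsb)$. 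Its premises discharge directly: tree form of $\setfdsa,\setfdsb$ is inherited from \ruleref{T-Join-Left}; $\setfdsb \vDash U \cap V \to V$ is the inverted dependency premise, since $\aliases{\cdot}$ and $\domain{\cdot}$ denote the same field set; and $\setpreda$ ignores $\outputs{\setfdsa}$ (resp.\ $\setpredb$ ignores $\outputs{\setfdsb}$) follows from $\ignoresoutputs{\preda}{\setfdsa}$ (resp.\ $\ignoresoutputs{\predb}{\setfdsb}$) by the \ruleref{Set-Ignores} lemma. Hence $\bpvjoin{\rela_1}{\rela_2}{\relb} \in \set{\rela_1, \rela_2} \Leftrightarrow \set{\relb}$.

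It then remains to compose the three lenses and to read off the sort. I would first frame the sub-lens typings up to the combined schema: from $\lensa \in \schemaa \Leftrightarrow \set{\rela_1}$ obtain $\lensa \in \schemaa \uplus \schemab \Leftrightarrow \set{\rela_1} \uplus \schemab$ (framing by $\schemab$, disjoint from $\schemaa$ and, by freshness, from $\set{\rela_1}$), and from $\lensb \in \schemab \Leftrightarrow \set{\rela_2}$ obtain $\lensb \in \set{\rela_1} \uplus \schemab \Leftrightarrow \set{\rela_1, \rela_2}$ (framing by $\set{\rela_1}$); two applications of \ruleref{T-Compose-RL} then give $\lensa; \lensb; \bpvjoin{\rela_1}{\rela_2}{\relb} \in \schemaa \uplus \schemab \Leftrightarrow \set{\relb}$. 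For the sort, rewrite the one assigned above: $U \cup V = \domain{\trowa} \cup \domain{\trowb} = \domain{\trowa \rowtyapp \trowb}$ by the definition of record type concatenation, and $\setpreda \Join \setpredb = \Pset{\preda}{\trowa} \Join \Pset{\predb}{\trowb} = \Pset{\preda \wedge \predb}{\trowa \rowtyapp \trowb}$ by Lemma~\ref{lem:set-join}; hence $\rlsort{\relb} = (\domain{\trowa \rowtyapp \trowb}, \Pset{\preda \wedge \predb}{\trowa \rowtyapp \trowb}, \setfdsa \cup \setfdsb)$, as required.

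The step I expect to be the main obstacle is precisely this composition bookkeeping: \ruleref{T-Compose-RL} requires adjacent schemas to agree exactly, yet the hypotheses type $\lensa$ and $\lensb$ only at their own schemas $\schemaa$ and $\schemab$, not at $\schemaa \uplus \schemab$. Making the two framing steps rigorous needs an auxiliary weakening property of sequential-style lens typing — that $\bpvlensa \in \Sigma \Leftrightarrow \Delta$ implies $\bpvlensa \in \Sigma \uplus \Sigma'' \Leftrightarrow \Delta \uplus \Sigma''$ for any $\Sigma''$ disjoint from the relation names $\bpvlensa$ mentions — which is where the translation's invariant that all intermediate relation names are globally unique does the real work, guaranteeing the frames never clash. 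Once that is in hand, the remaining checks (tree form, the functional-dependency premise, the two ignores conditions, and the predicate-set identity) are direct appeals to the inverted \ruleref{T-Join-Left} premises, the \ruleref{Set-Ignores} lemma, and Lemma~\ref{lem:set-join}.
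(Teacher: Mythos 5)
Your proposal matches the paper's proof essentially step for step: inversion of \ruleref{T-Join-Left}, unfolding of the flattening translation, an application of \ruleref{T-Join-RL} discharged via the \ruleref{Set-Ignores} lemma and Lemma~\ref{lem:set-join}, and composition via \ruleref{T-Compose-RL} after weakening the sub-lens typings to the combined schema. The ``framing'' obstacle you flag is exactly the step the paper handles with its weakening footnote appealing to the global uniqueness of intermediate relation names, so your account is both correct and faithful to the intended argument.
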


\newProofContext
\begin{proof}
  \begin{flalign}
    & \inferrule[T-Join-Left]{ \tyto{\tyctx}{\expra}{\lensty \trowa \preda \setfdsa} \\
      \tyto{\tyctx}{\exprb}{\lensty[\schemab] \trowb \predb \setfdsb} \\\\
      \setfdsb \vDash \aliases \trowa \cap \aliases \trowb \to \aliases \trowb \\
      \setfdsa \text{ is in tree form} \\
      \setfdsb \text{ is in tree form} \\\\
      \ignoresoutputs{\preda}{\setfdsa} \\
      \ignoresoutputs{\predb}{\setfdsb} \\
      \schemaa \cap \schemab = \emptyset } {
      \tyto{\tyctx}{\lensjoindl{\expra}{\exprb}}{\lensty[\schemaa \cup \schemab]
        {\trowa \cup
          \trowb} {\preda \wedge \predb} {\setfdsa \cup \setfdsb}}} & \text{assumption} \loclabel{typ-out} \\
    & \ltrans \expra = \ltriple {\lensa_1} {\rela_1} & \text{assumption} \loclabel{lens-M} \\
    & \ltrans \exprb = \ltriple {\lensa_2} {\rela_2} & \text{assumption} \loclabel{lens-N} \\
    & \lensa_1 \in \schemaa \Leftrightarrow \set {\rela_1} & \text{assumption} \\
    & \lensa_2 \in \schemab \Leftrightarrow \set {\rela_2} & \text{assumption} \\
    & \rlsort {\rela_1} = (\domain \trowa, \Pset{\preda}{\trowa}, \setfdsa) &
    \text{assumption} \loclabel{sort-R1} \\
    & \rlsort {\rela_2} = (\domain \trowb, \Pset{\predb}{\trowb}, \setfdsb) &
    \text{assumption} \loclabel{sort-R2} \\
    & \notag \\
    & \lensa_1 \in \schemaa \uplus \schemab \Leftrightarrow \set {\rela_1}
    \uplus \schemab &
    \text{weakening}^* \\
    & \lensa_2 \in \set {\rela_1} \uplus \schemab \Leftrightarrow \set {\rela_1,
      \rela_2} &
    \text{weakening}^* \\
    & \lensa_1 ; \lensa_2 \in \schemaa \uplus \schemab \Leftrightarrow \set{\rela_1, \rela_2} & \textsc{T-Compose-RL}  \notag \\
    & \qquad {}^* \text{ all intermediate views are globally unique due to def.
      of $\ltrans \cdot$.}
    \notag \\
    & \notag \\
    & \rlsort {\relb} = (\domain {\trowa \rowtyapp \trowb}, \Pset{\preda \wedge
    \predb}{\trowa \rowtyapp \trowb}, \setfdsa \cup \setfdsb) & \text{define} \loclabel{sort-S-van} \\
                                                              & 
    \phantom{\rlsort {\relb}} = (\domain\trowa \rowtyapp \domain \trowb,
    \Pset{\preda}{\trowa}
    \Join \Pset \predb \trowb, \setfdsa \cup \setfdsb) & \text{Lemma \ref{lem:set-join}} \loclabel{sort-S} \\
     & \Pset{\preda}{\trowa} \text{ ignores } \outputs \setfdsa &
    \textsc{Set-Ignores}~(\locref{typ-out}) \loclabel{ignores-Pset} \\
     & \Pset{\predb}{\trowb} \text{ ignores } \outputs \setfdsb &
    \textsc{Set-Ignores}~(\locref{typ-out}) \loclabel{ignores-Qset} \\
    & \notag \\
    & & \textsc{T-Join-RL} \notag \\
    & L_j = \texttt{join\_dl } \rela_1 , \rela_2 \texttt{ as } \relb \in
    \set{\rela_1, \rela_2} \Leftrightarrow \set{\relb} & (\locref{sort-R1},
    \locref{sort-R2}, \locref{sort-S}, \locref{typ-out}
    \locref{ignores-Pset}, \locref{ignores-Qset}) \notag \\
    & \ltrans {\lensjoindl \expra \exprb} = \ltriple[\schemaa \uplus \schemab]
    {\lensa_1; \lensa_2;
      \lensa_j} \relb & \text{def.}~\ltrans \cdot \notag \\
    & \lensa_1 ; \lensa_2 ; \lensa_j \in \schemaa \uplus \schemab
    \Leftrightarrow \set{\relb} & \textsc{T-Compose-RL} \notag
  \end{flalign}
\end{proof}

\subsubsection{Drop Lens}
\label{section:appendix:drop-lens}

\begin{fake}{Lemma~\ref{lem:ljd-sound}}
  Suppose $\trowa = \trowa_1 \rowtyapp \trowa_2$ and $x : \trowa \vdash \preda :
  \Tbool$. If
  $\ljd{\lampreda}{\trowa_1}{\trowa_2}$, then $\Pset{\preda}{\trowa} =
  \Pset{\preda}{\trowa}[\domain
  {\trowa_1}] \Join \Pset{\preda}{\trowa}[\domain {\trowa_2}]$.
\end{fake}

\newProofContext
\begin{proof}
  \begin{flalign}
    & \trowa = \trowa_1 \uplus \trowa_2 & \text{assumption} \notag \\
    & \ljd \lampreda {\trowa_1} {\trowa_2} & \text{assumption} \notag \\
    \notag \\
    & \forall \rowa, \rowb \in \inh {\trowa}.~ \sat \rowa \text{ iff. } & \notag \\
    & \qquad \sat{\rowa[\domain {\trowa_1}] \rowapp \rowb[\domain {\trowa_2}]}
    \wedge \sat{\rowa[\domain {\trowa_2}] \rowapp \rowb[\domain {\trowa_1}]}
    & \text{def.}~\ljd \lampreda {\trowa_1} {\trowa_2} \loclabel{implies} \\
    \notag \\
    & \Pset{\preda}{\trowa} & \notag \\
    & \qquad = \set{ \rowa \mid \forall \rowa \in \inh \trowa.~ \sat \rowa} &
    \text{def.}~
    \Pset{\cdot}{\cdot} \notag \\
    & \qquad = \set{ \rowa \mid \forall \rowa,\rowb \in \inh
      \trowa.~\sat{\rowa[\domain {\trowa_1}] \rowapp \rowb[\domain {\trowa_2}]}
      \wedge \sat{\rowa[\domain {\trowa_2}]
        \rowapp \rowb[\domain {\trowa_1}]}} & (\locref{implies}) \notag \\
    & \qquad = \{ \rowa \mid \forall \rowa \in \inh \trowa.~\rowa[\domain
    {\trowa_1}] \in \set{\rowb[\domain {\trowa_1}] \mid \forall \rowb \in
      \inh{\trowa}.~ \sat \rowb} \text{
      and } & \notag \\
    & \qquad \qquad \qquad \rowa[\domain {\trowa_2}] \in \set{\rowb[\domain
      {\trowa_2}] \mid \forall \rowb \in \inh
      \trowa.~ \sat \rowb} \} & \text{Lemma } \ref{lemma:sat-implies-in} \notag \\
    & \qquad = \set{ \rowa \mid \forall \rowa \in \inh \trowa .~\rowa[\domain {\trowa_1}] \in \Pset
      {\preda}{\trowa}[\trowa_1] \text{ and } \rowa[\trowa_2] \in \Pset
    {\preda}{\trowa}[\trowa_2]} & \text{def. } \cdot[\cdot] \notag \\
                                & \qquad = \Pset{\preda}{\trowa}[\domain {\trowa_1}] \Join
                                \Pset{\preda}{\trowa}[\domain {\trowa_2}]
    & \text{def. } \cdot \Join \cdot \notag
  \end{flalign}
\end{proof}

\begin{fake}{Lemma~\ref{lem:ljdi-sound}}
  Given a predicate $\lampreda$ and record types $\trowa, \trowb$, it follows that
  $\ljdi \lampreda \trowa \trowb$ implies $\ljd \lampreda \trowa \trowb$.
\end{fake}

\newProofContext
\begin{proof}
  \begin{flalign}
    & \rowa_1, \rowa_2 \in \inh \trowa & \text{assumption} \notag \\
    & \rowb_1, \rowb_2 \in \inh \trowb & \text{assumption} \notag \\
    & \sat {\rowa_1 \rowapp \rowb_1} & \text{assumption} \loclabel{sat-1} \\
    & \sat {\rowa_2 \rowapp \rowb_2} & \text{assumption} \loclabel{sat-2} \\
    & \notag \\
    & \text{Perform induction on } \ljdi \lampreda \trowa \trowb & \notag \\
    & \notag \\
    & \inferrule[LJD$^\dagger$-1]
    {\tyto {x : \trowa} \preda \Tbool }{\ljdi{\lampreda}{\trowa}{\trowb}} & \text{assumption} \notag \\
    & \qquad \sat {\rowa_1} & \text{Extensionality}~(\locref{sat-1}) \notag \\
    & \qquad \sat {\rowa_1 \rowapp \rowb_2} & \text{Extensionality} \notag \\
    & \qquad \ljd \lampreda \trowa \trowb & \text{def. LJD} \notag \\
    & \notag \\
    & \inferrule [LJD$^\dagger$-2] { \tyto {x : \trowb} \preda \Tbool }
    {\ljdi{\lampreda}{\trowa}{\trowb}} & \text{assumption} \notag \\
    & \qquad \sat {\rowb_2} & \text{Extensionality}~(\locref{sat-2}) \notag \\
    & \qquad \sat {\rowa_1 \rowapp \rowb_2} & \text{Extensionality} \notag \\
    & \qquad \ljd \lampreda \trowa \trowb & \text{def. LJD} \notag \\
    & \notag \\
    & \inferrule
    [LJD$^\dagger$-And]
    {\ljdi{\lamf {\predb_1}}{\trowa}{\trowb} \\ \ljdi{\lamf {\predb_2}}{\trowa}{\trowb} }
    {\ljdi {\lamf {\predb_1 \wedge \predb_2}}{\trowa}{\trowb}}
    & \text{assumption} \\
    & \qquad \sat[\predb_1]{\rowa_1 \rowapp \rowb_2} & \text{induction} \notag \\
    & \qquad \sat[\predb_2]{\rowa_1 \rowapp \rowb_2} & \text{induction} \notag \\
    & \qquad \sat[\predb_1 \wedge \predb_2]{\rowa_1 \rowapp \rowb_2} &
    \text{def.}~\cdot \wedge \cdot \notag \\
    & \qquad \ljd {\lamf {\predb_1 \wedge \predb_2}} \trowa \trowb & \text{def. LJD} \notag \\
  \end{flalign}
\end{proof}

\begin{fake}{Lemma~\ref{lem:defv-sound}}
  Suppose $\trowa = \trowa_1 \uplus \trowa_2$, $\trowa_2 \vdash \rowa$ and $\ljd
  \lampreda {\trowa_1} {\trowa_2}$. Then $\defv[\trowa_1, \trowa_2] \lampreda \rowa$ implies $\rowa \in
  \Pset{\preda}{\trowa}[\domain {\trowa_2}]$.
\end{fake}

\newProofContext
\begin{proof}
  \begin{flalign*}
    & \trowa = \trowa_1 \uplus \trowa_2 & \text{assumption} \notag \\
    & \notag \\
    & \exists \rowb \in \inh{\trowa_1}.~ \sat {\rowa \rowapp \rowb} & \text{assumption} \notag \\
    & \implies \exists \rowb \in \inh{\trowa}.~ \rowb[\domain \trowa_2] = \rowa
    \text{
      and } \sat \rowb & \text{def. } \cdot \rowapp \cdot \notag \\
    & \implies \rowa \in \set{\rowb[\domain {\trowa_2}] \mid \forall \rowb \in
      \inh{\trowa}.~\sat \rowa} &
    \text{def. } \cdot \in \cdot \notag \\
                                & \implies \rowa \in \Pset{\preda}{\trowa}[\domain {\trowa_2}] & \text{def. } \cdot[\cdot] \notag
  \end{flalign*}
\end{proof}

\begin{fake}{Lemma~\ref{lem:defvi-sound}}
  Given a predicate $\lampreda$ such that $\Pset \preda {\trowa \rowtyapp
    \trowb}$ is not empty and record $\rowa$ of type $\trowa$, it follows that $\defvi{\lampreda}{\rowa}$
  implies $\defv{\lampreda}{\rowa}$.
  \label{lem:defvi-sound-proof}
\end{fake}

\newProofContext
\begin{proof}
  \begin{flalign}
    & \rowa \in \inh \trowb & \text{assumption} \notag \\
    & \Pset \preda {\trowa \rowtyapp \trowb} \text{ not empty} &
    \loclabel{not-empty} \\
    & \notag \\
    & \text{Perform induction on } \defvi \lampreda \rowa & \notag \\
    &  \inferrule[DV$^\dagger$-1]
    { \tyto {x : \trowa} {\preda} {D} }
    {\defvi{\lampreda}{\rowa}}
    & \text{assumption} \notag \\
    & \qquad \exists \rowb \in \inh \trowa.~ \sat{\rowb} & (\locref{not-empty})
    \notag \\
    & \qquad \exists \rowb \in \inh \trowa.~ \sat{\rowa \rowapp \rowb} &
    \text{widening} \notag \\
    & \qquad \defv \lampreda \rowa & \text{def. DV} \notag \\
    & \notag \\
    & \inferrule
    [DV$^\dagger$-2]
    { \tyto {x : \trowb} {\preda} {D} \\\\ \sat \rowa }
    {\defvi{\lampreda}{\rowa}} & \notag \\
    & \qquad \exists \rowb \in \inh \trowa.~ \sat{\rowa \rowapp \rowb} &
    \text{widening} \notag \\
    & \qquad \defv \lampreda \rowa & \text{def. DV} \notag \\
    & \notag \\
    & \inferrule
    [DV$^\dagger$-And]
    { \defvi{\lamf {\predb_1}}{\rowa} \\\\ \defvi{\lamf {\predb_2}}{\rowa}}
    {\defvi{\lamf {\predb_1 \wedge \predb_2}}{\rowa}} & \text{assumption} \notag \\
    & \qquad \exists \rowb \in \inh \trowa.~ \sat[\predb_1]{\rowa \rowapp \rowb}
    & \text{induction} \notag \\
    & \qquad \exists \rowb \in \inh \trowa.~ \sat[\predb_2]{\rowa \rowapp \rowb}
    & \text{induction} \notag \\
    & \qquad \exists \rowb \in \inh \trowa.~ \sat[\predb_1 \wedge \predb_2]{\rowa \rowapp \rowb}
    & \text{def.}~\cdot \wedge \cdot \notag \\
    & \qquad \defv \lampreda \rowa & \text{def. DV} \notag
  \end{flalign}
\end{proof}

\begin{lemma}
  \label{lem:drop-wb}
  Suppose $\tyto \tyctx {\lensdrop {\laba'} \slaba v \expra} {\lensty \trowa
    \predb \setfdsb}$ and $\ltrans \expra = \ltriple \lensa \rela$ such that
  $\lensa \in \schemaa \Leftrightarrow \set \rela$ and $\rlsort \rela = (\domain
  \trowa \cup \laba', \preda, \setfdsa)$. Then $\ltrans {\lensdrop {\laba'}
    \slaba v \expra} = \ltriple {\lensb} \relb$ such that $\lensb \in \schemaa
  \Leftrightarrow \set \relb$ and $\rlsort \relb = (\domain \trowa, 
  \Pset{\predb}{\trowa}, \setfdsb)$.
\end{lemma}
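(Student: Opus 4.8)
The plan is to follow exactly the pattern of Lemmas~\ref{lem:select-wb} and~\ref{lem:join-wb}: invert the \ruleref{T-Drop} derivation, push the inductive hypothesis of the soundness theorem through the underlying lens $\expra$, discharge the premises of the sequential-lens rule \ruleref{T-Drop-RL} (Figure~\ref{fig:bpv-lenses}), and conclude with \ruleref{T-Compose-RL} together with the definition of $\ltrans{\cdot}$ from Figure~\ref{fig:bpv-syntax}.

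First I would invert \ruleref{T-Drop}, obtaining $\setfdsa \equiv \setfdsb \cup \set{\slaba \to \laba'}$, a subderivation $\tyto{\tyctx}{\expra}{\lensty{\trowa \rowtyapp \Trecord{\laba':\typpa}}{\preda}{\setfdsa}}$, $\slaba \subseteq \aliases{\trowa}$, $\tyto{\tyctx}{\vala}{\typpa}$, $\ljd{\lampreda}{\trowa}{\Trecord{\laba':\typpa}}$, $\defv[\trowa,\Trecord{\laba':\typpa}]{\lampreda}{\Trecord{\laba' = \vala}}$, and $\predb = \subst[x.\laba']{\preda}{\vala}$. Applying the inductive hypothesis to the subderivation for $\expra$ gives $\ltrans{\expra} = \ltriple{\lensa}{\rela}$ with $\lensa \in \schemaa \Leftrightarrow \set{\rela}$ and $\rlsort{\rela} = (\domain{\trowa} \cup \set{\laba'},\, \Pset{\preda}{\trowa \rowtyapp \Trecord{\laba':\typpa}},\, \setfdsa)$, matching the lemma's hypothesis on $\rela$ (where the bare second component $\preda$ is understood as this set representation). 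Choosing a globally unique relation name $\relb$, declaring $\rlsort{\relb} = (\domain{\trowa},\, \Pset{\predb}{\trowa},\, \setfdsb)$, and setting $\lensb = \bpvcompose{\lensa}{\bpvdrop{\laba'}{\slaba}{\vala}{\rela}{\relb}}$, the translation rule of Figure~\ref{fig:bpv-syntax} gives $\ltrans{\lensdrop{\laba'}{\slaba}{\vala}{\expra}} = \ltriple{\lensb}{\relb}$.

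It then remains to verify the premises of \ruleref{T-Drop-RL} with $U = \domain{\trowa} \cup \set{\laba'}$, $\setpreda = \Pset{\preda}{\trowa \rowtyapp \Trecord{\laba':\typpa}}$, $A = \laba'$, $X = \slaba$, and $a = \vala$. The sort of $\rela$, the membership $\laba' \in U$, the equivalence $\setfdsa \equiv \setfdsb \cup \set{\slaba \to \laba'}$, and $U - \laba' = \domain{\trowa}$ are immediate from the inversion. The identity $\setpreda[U - \laba'] = \Pset{\predb}{\trowa}$ is Lemma~\ref{lem:pred-subst-proj}, instantiated with record types $\trowa$ and $\Trecord{\laba':\typpa}$, using the premises $\ljd{\lampreda}{\trowa}{\Trecord{\laba':\typpa}}$ and $\defv[\trowa,\Trecord{\laba':\typpa}]{\lampreda}{\Trecord{\laba' = \vala}}$. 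The decomposition $\setpreda = \setpreda[U - \laba'] \Join \setpreda[\set{\laba'}]$ is Lemma~\ref{lem:ljd-sound} (Predicate set decomposition) applied to $\trowa \rowtyapp \Trecord{\laba':\typpa}$ with components $\trowa$ and $\Trecord{\laba':\typpa}$, using the same $\ljd{\lampreda}{\trowa}{\Trecord{\laba':\typpa}}$ premise. Finally, $\set{\laba' = \vala} \in \setpreda[\set{\laba'}]$ is Lemma~\ref{lem:defv-sound} applied to $\defv[\trowa,\Trecord{\laba':\typpa}]{\lampreda}{\Trecord{\laba' = \vala}}$. By \ruleref{T-Drop-RL} we then get $\bpvdrop{\laba'}{\slaba}{\vala}{\rela}{\relb} \in \set{\rela} \Leftrightarrow \set{\relb}$, and composing with $\lensa \in \schemaa \Leftrightarrow \set{\rela}$ via \ruleref{T-Compose-RL} yields $\lensb \in \schemaa \Leftrightarrow \set{\relb}$ with $\rlsort{\relb} = (\domain{\trowa},\, \Pset{\predb}{\trowa},\, \setfdsb)$, as required.

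This skeleton offloads essentially all the real content onto Lemmas~\ref{lem:pred-subst-proj}, \ref{lem:ljd-sound}, and~\ref{lem:defv-sound}, so at this level the only care needed is bookkeeping: keeping straight the overloaded use of $\trowa$ as both the base record type of the resulting lens and the left component of the decomposition $\trowa \rowtyapp \Trecord{\laba':\typpa}$, and checking that the field sets line up. The genuinely delicate point lives inside Lemma~\ref{lem:pred-subst-proj}, namely that the set-level restriction $\Pset{\preda}{\trowa \rowtyapp \Trecord{\laba':\typpa}}[\domain{\trowa}]$ agrees with the set $\Pset{\subst[x.\laba']{\preda}{\vala}}{\trowa}$ of the \emph{syntactically} substituted predicate; there the lossless-join-decomposition property is what lets the $\trowa$-coordinate be dropped freely, while the default-value property is what lets the $\laba'$-coordinate be reinstated at value $\vala$. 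Note also that, unlike the select and join cases, there is no ``ignores outputs'' side condition to discharge in this rule, and that \ruleref{T-Drop} refers to the semantic property $\defv[\trowa,\Trecord{\laba':\typpa}]{\lampreda}{\Trecord{\laba' = \vala}}$ --- whose definition already requires $\Pset{\preda}{\trowa \rowtyapp \Trecord{\laba':\typpa}}$ to be non-empty --- rather than to its syntactic approximation, so the satisfiability caveat noted after Lemma~\ref{lem:defvi-sound} does not resurface here.
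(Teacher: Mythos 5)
Your proposal is correct and follows essentially the same route as the paper's proof: invert \ruleref{T-Drop}, use Lemma~\ref{lem:pred-subst-proj} to identify $\Pset{\subst[x.\laba']{\preda}{\vala}}{\trowa}$ with the restriction of the underlying predicate set, discharge the remaining premises of \ruleref{T-Drop-RL} via Lemmas~\ref{lem:ljd-sound} and~\ref{lem:defv-sound}, and conclude with \ruleref{T-Compose-RL} and the definition of $\ltrans{\cdot}$. Your closing remarks on the overloading of $\trowa$ and on why the satisfiability caveat for $\defvi{\lampreda}{\rowa}$ does not resurface are accurate and consistent with the paper's treatment.
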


\newProofContext
\begin{proof}
  \begin{flalign}
    & \inferrule[T-Drop]{ \setfdsa \equiv \setfdsb \cup \set{\slaba \to \laba'} \\
      \tyto{\tyctx}{\expra}{\lensty {\trowa \cup \set{\oftype{\laba'}{\tau}}}
        \preda
        \setfdsa} \\\\
      \slabb = \aliases \trowa \\ \tyto\tyctx \vala A\\\\
      \ljd \lampreda \trowa {\set{\laba' : \typpa}} \\
      \defv[\trowa,\set{\laba' : A}] \lampreda {\set {\laba' = \vala}} } {
      \tyto{\tyctx}{\lensdrop {\laba'} \slaba \vala \expra}{\lensty \trowa
        {\subst[x.\laba'] \preda \vala} \setfdsb} } & \text{assumption}
    \loclabel{typ-out} \\
    & \ltrans \expra = \ltriple \lensa \rela & \text{assumption} \loclabel{lens-M} \\
    & \lensa \in \schemaa \Leftrightarrow \set \rela & \text{assumption} \notag \\
    & \rlsort \rela = (\domain \trowa \cup \laba', \preda, \setfdsa) &
    \text{assumption} \loclabel{sort-R} \\
    & \notag \\
    & \Pset{\preda}{\trowa} = \Pset{\preda}{\trowa}[\domain{\trowa}] \Join \Pset{\preda}[\laba']
    &
    \text{Lemma \ref{lem:ljd-sound}} \loclabel{set-ljd} \\
    & \set{\laba' = \vala} \in \Pset{\preda[\laba']}{\trowa} & \text{Lemma \ref{lem:defv-sound}} \loclabel{defv} \\
    & \rlsort \relb = (\domain \trowa, \Pset {\subst[x.\laba'] \preda \vala}
    {\trowa \cup \set{\laba' : \typpa}}, \setfdsb) & \text{define} \loclabel{sort-S} \\
    & \phantom{\rlsort \relb} = (\domain {\trowa},
    \Pset{\preda}{\trowa \cup \set{\laba' : \typpa}}[\domain{\trowa}],
    \setfdsb) & \text{Lemma \ref{lem:pred-subst-proj}} \loclabel{sort-S-RL} \\
    & \set{\laba'} \in \domain {\trowa \cup \set{\laba'}} & \text{trivial} & \loclabel{A-in-U} \\
    & \notag \\
    & \lensa_d = \texttt{drop } \laba \texttt{ determined by } (\slaba,\vala)
    \texttt{ from } \rela \texttt{ as } \relb \in \set{\rela} \Leftrightarrow
    \set{\relb} & \textsc{T-Drop-RL} (\locref{sort-R}, \locref{A-in-U},
    \locref{typ-out},
    \locref{sort-S-RL}, \locref{set-ljd}, \locref{defv}) \loclabel{ld-type} \\
    & \ltrans {\lensdrop {\laba'} \slaba v \expra } = \ltriple {L; L_d} \relb &
    \text{def.}~\ltrans \cdot\notag \\
    & L; L_d \in \schemaa \Leftrightarrow \set{\relb} &
    \textsc{T-Compose-RL}~(\locref{ld-type}) \notag
  \end{flalign}
\end{proof}

\subsection{Translation Sound}
\begin{theorem}
  Suppose $\tyto \tyctx \expra {\lensty \trowa \preda \setfdsa}$ and $\ltrans
  \expra = \ltriple \lensa \relb$ then $\lensa \in \schemaa \Leftrightarrow
  \set{\relb}$ and $\rlsort \relb = (\domain \trowa, \Pset{\preda}{\trowa}, \setfdsa)$.
\end{theorem}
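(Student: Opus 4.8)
The plan is to proceed by induction on the derivation of $\tyto \tyctx \expra {\lensty \trowa \preda \setfdsa}$. A lens type can only be introduced by one of the four lens-forming rules \ruleref{T-Lens}, \ruleref{T-Select}, \ruleref{T-Join-Left}, and \ruleref{T-Drop} --- the rules \ruleref{T-Get} and \ruleref{T-Put} produce record-set and unit types, so they cannot occur at the root of the derivation --- so there is exactly one case per lens combinator. In each case I would unfold the definition of $\ltrans{\cdot}$ (Figure~\ref{fig:bpv-syntax}) on the relevant term former, observe that it appends exactly one sequential-lens primitive ($\lensid$, a select, a join, or a drop) to the translation(s) of the immediate subterm(s), apply the induction hypothesis to recover that the prefix lens is well-typed with a relation of the expected sort, and then discharge the goal with the matching single-step lemma: Lemma~\ref{lem:primitive-wb}, Lemma~\ref{lem:select-wb}, Lemma~\ref{lem:join-wb}, and Lemma~\ref{lem:drop-wb} respectively. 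Each of these lemmas shows that the appended primitive is typeable by the corresponding relational rule of \citet{bohannon2006relational} (Figure~\ref{fig:bpv-lenses}) and then uses \ruleref{T-Compose-RL} to glue it onto the (inductively well-typed) prefix, producing a relation of the sort claimed in the theorem.

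The base case \ruleref{T-Lens} is immediate from Lemma~\ref{lem:primitive-wb}: the translation is $\lensid$, typed by \ruleref{T-Id-RL}, and the freshly introduced relation receives sort $(\domain{\trowa}, \Pset{\vtrue}{\trowa}, \setfdsa)$ by construction. For \ruleref{T-Select}, the induction hypothesis gives $\lensa \in \schemaa \Leftrightarrow \set{\rela}$ with $\rela$ of sort $(\domain{\trowa}, \Pset{\preda}{\trowa}, \setfdsa)$; Lemma~\ref{lem:select-wb} then types the appended $\bpvselect{\rela}{\Pset{\predb}{\trowa}}{\relb}$ by \ruleref{T-Select-RL}. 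The two facts that make that rule applicable are that $\Pset{\preda \wedge \predb}{\trowa} = \Pset{\preda}{\trowa} \cap \Pset{\predb}{\trowa}$ (Lemma~\ref{lem:set-inter}), so the set intersection in the relational rule matches the conjunction in ours, and that ``$\preda$ ignores $\outputs{\setfdsa}$'' transfers to ``$\Pset{\preda}{\trowa}$ ignores $\outputs{\setfdsa}$'' (the \ruleref{Set-Ignores} lemma); tree-form is a side condition common to both rules. The \ruleref{T-Join-Left} case (Lemma~\ref{lem:join-wb}) is analogous: Lemma~\ref{lem:set-join} rewrites the conjunction of predicates into the set join $\Pset{\preda}{\trowa} \Join \Pset{\predb}{\trowb}$ required by \ruleref{T-Join-RL}, the derivability premise $\setfdsb \vDash \aliases{\trowa} \cap \aliases{\trowb} \to \aliases{\trowb}$ and the tree-form and ``ignores'' conditions carry over verbatim, and the schema-disjointness side conditions --- together with the fact that $\ltrans{\cdot}$ uses globally fresh names for intermediate relations --- justify the weakenings needed for the two \ruleref{T-Compose-RL} steps.

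I expect the \ruleref{T-Drop} case (Lemma~\ref{lem:drop-wb}) to be where most of the work goes, since it is where the lossless-join-decomposition apparatus is cashed in. Write $\trowa^{+} = \trowa \rowtyapp \Trecord{\laba' : \typpa}$ for the record type of the underlying lens. The goal is to type the appended drop primitive $\bpvdrop{\laba'}{\slaba}{\vala}{\rela}{\relb}$ by \ruleref{T-Drop-RL}, whose characteristic premises are the set decomposition $\setpreda = \setpreda[U-A] \Join \setpreda[A]$ and the default-value membership $\set{A = a} \in \setpreda[A]$, with the view's predicate set being $\setpreda$ projected onto $U - A$. Conveniently, the premises of \ruleref{T-Drop} are already stated semantically, as $\ljd{\lampreda}{\trowa}{\Trecord{\laba' : \typpa}}$ and $\defv[\trowa, \Trecord{\laba' : \typpa}]{\lampreda}{\Trecord{\laba' = \vala}}$ --- the syntactic $\dagger$-approximations of Section~\ref{section:appendix:drop-lens} are what one discharges at typechecking time, via Lemmas~\ref{lem:ljdi-sound} and~\ref{lem:defvi-sound}, but this theorem consumes the semantic forms directly. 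From them: Lemma~\ref{lem:ljd-sound} yields $\Pset{\preda}{\trowa^{+}} = \Pset{\preda}{\trowa^{+}}[\domain{\trowa}] \Join \Pset{\preda}{\trowa^{+}}[\laba']$; Lemma~\ref{lem:defv-sound} yields $\set{\laba' = \vala} \in \Pset{\preda}{\trowa^{+}}[\laba']$; and Lemma~\ref{lem:pred-subst-proj} identifies $\Pset{\subst[x.\laba']{\preda}{\vala}}{\trowa}$ with $\Pset{\preda}{\trowa^{+}}[\domain{\trowa}]$. Hence the translated view gets exactly the sort $(\domain{\trowa}, \Pset{\subst[x.\laba']{\preda}{\vala}}{\trowa}, \setfdsb)$ the theorem demands, and a final \ruleref{T-Compose-RL} closes the case, completing the induction. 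The one point where the correspondence is not perfectly tight --- though harmless here, since $\defv$ folds satisfiability of the predicate into its own definition and hence into T-Drop's premises --- is that a drop over a $\vfalse$-predicate lens is deliberately rejected by our rules even though it is vacuously well-behaved, matching \citet{bohannon2006relational}.
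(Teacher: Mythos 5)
Your proposal matches the paper's proof: it is an induction on the typing derivation with exactly the four lens-forming cases, each discharged by the corresponding single-step lemma (Lemma~\ref{lem:primitive-wb}, \ref{lem:select-wb}, \ref{lem:join-wb}, \ref{lem:drop-wb}) after applying the induction hypothesis to the subterm(s), and your accounts of the supporting lemmas (\ref{lem:set-inter}, \ref{lem:set-join}, \ruleref{Set-Ignores}, and the LJD/DV machinery for drop) are the same ones the paper invokes inside those lemmas. No gaps; this is essentially the paper's argument.
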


\newProofContext
\begin{proof}
  \begin{flalign}
    & \text{perform induction on } \tyto \tyctx \expra {\lensty \trowa \preda \setfdsa} & \notag \\
    & \notag \\
    & \tyto \tyctx {\lens \expra \setfdsa} {\lensty [\set{\relb}] \trowa \vtrue \setfdsa}
    & \text{assumption} \loclabel{lens-M-typ} \\
    & \qquad \ltrans {\lens \expra \setfdsa} = \ltriple[\set{\relb}] \lensa \relb &
    \text{Lemma \ref{lem:primitive-wb} (\locref{lens-M-typ})} \notag \\
    & \qquad \lensa \in \set{\relb} \Leftrightarrow \set{\relb} & 
    \text{Lemma \ref{lem:primitive-wb} (\locref{lens-M-typ})} \notag \\
    & \notag \\
    & \tyto \tyctx {\lensselect \expra \predb} {\lensty \trowa {\preda \wedge
        \predb} \setfdsa} & \text{assumption} \notag \\
    & \qquad \tyto \tyctx \expra {\lensty \trowa \preda \setfdsa} & \textsc{T-Select}
    \loclabel{sel-M-typ} \\
    & \qquad \ltrans \expra = \ltriple \lensa \rela & \text{def.}~\ltrans \cdot \notag \\
    & \qquad \lensa \in \schemaa \Leftrightarrow \set \rela & \text{by induction}
    \loclabel{sel-R-sort} \\
    & \notag \\
    & \qquad \ltrans {\lensselect \expra \predb} = \ltriple {\lensa'} \relb & \text{Lemma
      \ref{lem:select-wb} (\locref{sel-M-typ},
      \locref{sel-R-sort})} \notag \\
    & \qquad \lensa' \in \schemaa \Leftrightarrow \set \relb & \text{Lemma \ref{lem:select-wb}
      (\locref{sel-M-typ},
      \locref{sel-R-sort})} \notag \\
    & \notag \\
    & \tyto \tyctx {\lensjoindl \expra \exprb} {} {\lensty[\schemaa \uplus
      \schemab] {\trowa \cup \trowb}
      {\preda
        \wedge \predb} {\setfdsa \cup \setfdsb}} & \text{assumption} \notag \\
    & \qquad \tyto \tyctx \expra {\lensty[\schemaa] \trowa \preda \setfdsa} &
    \textsc{T-Join}
    \loclabel{join-typ} \\
    & \qquad \ltrans \expra = \ltriple[\schemaa] {\lensa_1} {\rela_1}
    & \text{def.}~\ltrans \cdot \notag \\
    & \qquad \lensa_1 \in \schemaa \Leftrightarrow \rela_1 & \text{by induction}
    \loclabel{join-R1-sort} \\
    & \qquad \ltrans \exprb = \ltriple[\schemab] {\lensa_2} \set{\rela_2} &
    \text{def.}~\ltrans \cdot \notag \\
    & \qquad \lensa_2 \in \schemab \Leftrightarrow \set{\rela_2} & \text{by
      induction} \loclabel{join-R2-sort} \\
    & \notag \\
    & \qquad \ltrans {\lensjoindl \expra \exprb} = \ltriple [\schemaa \uplus \schemab]
    \lensa \relb & \text{Lemma \ref{lem:join-wb} (\locref{join-typ},
      \locref{join-R1-sort}, \locref{join-R2-sort})} \notag \\
    & \qquad \lensa \in \schemaa \uplus \Delta \Leftrightarrow \set{\relb} & \text{Lemma \ref{lem:join-wb} (\locref{join-typ},
      \locref{join-R1-sort}, \locref{join-R2-sort})} \notag \\
    & \notag \\
    & \tyto \tyctx {\lensdrop {\laba'} \slaba v \expra} {\lensty \trowa \predb
      \setfdsa} & \text{assumption} \notag \\
    & \qquad \tyto \tyctx \expra {\lensty {\trowb} \predb \setfdsb} &
    \textsc{T-Drop} \loclabel{drop-typ} \\
    & \qquad \ltrans \expra = \ltriple {\lensa} {\relb} & \text{def.}~\ltrans
    \cdot \notag \\
    & \qquad \lensa \in \schemaa \Leftrightarrow \set{\relb} & \text{by induction} \loclabel{drop-R-sort} \\
    & \notag \\
    & \qquad \ltrans{\lensdrop {\laba'} \slaba v \expra} = \ltriple \lensa \relb
    & \text{Lemma
      \ref{lem:drop-wb} (\locref{drop-typ}, \locref{drop-R-sort})} \notag \\
    & \qquad L \in \schemaa \Leftrightarrow \set{\relb} 
    & \text{Lemma
      \ref{lem:drop-wb} (\locref{drop-typ}, \locref{drop-R-sort})} \notag
  \end{flalign}
\end{proof}

 \end{techreport}

\end{document}